\newtheorem{lemma}{\textbf{Lemma}}
\newtheorem{definition}{\textbf{Definition}}
\newtheorem{theorem}{\textbf{Theorem}}
\newtheorem{proposition}{\textbf{Proposition}}
\newtheorem{example}{\textbf{Example}}
\newtheorem{remark}{\textbf{Remark}}
\newtheorem{case}{\textbf{Case}}
\newcommand{\ie}{{i.e.}}
\newcommand{\eg}{{e.g.}}
\renewcommand{\S}{Section~}
\newcommand{\Apn}{Appendix~}
\newcommand{\mc}[1]{\mathcal{#1}}
\newcommand{\mf}[1]{\mathfrak{#1}}
\newcommand{\mbb}[1]{\mathbb{#1}}
\newcommand{\Bs}[1]{\boldsymbol{#1}}
\newcommand{\Prob}[1]{{\ensuremath{\mathbb{P}\left[ #1 \right]} }}
\newcommand{\Expc}[1]{{\ensuremath{\mathbb{E}\left[ #1 \right]} }}
\newcommand{\1}{\mathbbm{1}}
\newcommand{\Uni}[1]{\mathsf{Uni}\left( #1 \right)}
\newcommand{\set}[1]{\mathcal{#1}}
\newcommand{\ter}{\mathsf{T}}
\newcommand{\alice}{\mathsf{A}}
\newcommand{\bob}{\mathsf{B}}
\newcommand{\calvin}{\mathsf{C}}
\newcommand{\eve}{\mathsf{E}}
\newcommand{\Fbb}{\mathbb{F}}
\newcommand{\Rbb}{\mathbb{R}}
\newcommand{\DoF}{\mathrm{DoF}}
\newcommand{\pktlen}{L}
\DeclareMathOperator{\rank}{rank}
\DeclareMathOperator{\cov}{cov} 
\newcommand{\TableWidth}{0.78\textwidth} 
\title{Group secret key agreement over state-dependent wireless broadcast channels}
\author{Mahdi~Jafari~Siavoshani, Shaunak~Mishra, Christina~Fragouli, Suhas~N.~Diggavi\\ 
\vspace{0.2in} Sharif University of Technology, Tehran, Iran\\
University of California, Los Angeles (UCLA), USA  
\thanks{This work was funded in part by NSF grant 1321120.
    }
}
\begin{document}
\maketitle

\begin{abstract}
We consider a group of $m$ trusted and authenticated nodes that aim to create a shared secret key $K$ over a wireless channel in the presence of an eavesdropper Eve. We assume that there exists a state dependent wireless broadcast channel from one of the honest nodes to the rest of them including Eve. All of the trusted nodes can also discuss over a cost-free, noiseless and unlimited rate public channel which is also overheard by Eve. For this setup, we develop an information-theoretically secure secret key agreement protocol. We show the optimality of this protocol for 
``linear deterministic'' wireless broadcast channels. This model generalizes the packet erasure model studied in literature
for wireless broadcast channels. For ``state-dependent Gaussian'' wireless broadcast channels, we propose an achievability scheme based on a multi-layer wiretap code. Finding the best achievable secret key generation rate leads to solving a non-convex power allocation problem. We show that using a dynamic programming algorithm, one can obtain the best power allocation for this problem. Moreover, we prove the optimality of the proposed achievability scheme for the regime of high-SNR and large-dynamic range over the channel states in the (generalized) degrees of freedom sense.
\end{abstract}

\centerline{\textbf{Keywords}}
Secret key sharing, multi-terminal secrecy, information theoretical secrecy, wireless channel, public discussion.

\section{Introduction}
We consider the problem of generating a secret key $K$ among $m \ge 2$ honest (trusted and authenticated) nodes that communicate over a wireless channel in the presence of a passive eavesdropper Eve. We restrict our attention to the case where communication occurs either through a broadcast channel, where the received symbols are independent among all receivers of the broadcast transmissions including Eve (given that the transmitted symbols is known), or, through a no-cost noiseless public channel.

Here, we focus on the group secret key agreement over a \emph{state-dependent Gaussian broadcast channel}. This model can be motivated by fading wireless channels, where the channel states vary over time; \ie, the variation of SNR\footnote{Signal to noise ratio.} level is modeled by the state of the channel.
The use of state-dependent channels for secrecy has been of interest recently (see for example \cite{khisti_secret-key_2011} and references therein). To gain insight into our problem, we first investigate a \emph{deterministic approximation} of the wireless channel as introduced in \cite{avestimehr_wireless_2011}.
  
For the deterministic broadcast channel we will show that using a superposition 
based secrecy scheme \cite{liang_broadcast_2009}, we can develop a group key agreement protocol that can be shown to be information-theoretically optimal. This can be done by converting the deterministic channel to multiple independent erasure channels. In particular, we show that we 
can get the same key agreement rate for the entire group as we would get for a single pair of nodes. Therefore this result demonstrates that in the presence of an unlimited public channel, we get secret key-agreement rates for linear deterministic channels, that is invariant to network size. 
Similar to the case of erasure broadcast channel \cite{siavoshani_group_2010}, a key idea to get this is a connection to network coding (NC), which allows efficient (in the block length) reconciliation of the group secret (also refer to Appendix~\ref{sec:GrpSecKeyAgr_ErasureBrdcstChnl}
for a review of our previous results on the group secret key agreement over erasure broadcast channels).

We use the deterministic achievability scheme to get some insight about the Gaussian wireless broadcast channel with state. To this end, we use a multi-layer (nested message set, degraded channel) wiretap code based on the broadcast approach of \cite{liang_broadcast_2009,liang_broadcast_2014} to develop a key-agreement protocol for the noisy broadcast problem. This enables a scheme that converts the wireless channel with state to behave similar to the deterministic case. As a result, we show that the achievable secret key generation rate is given by a non-convex optimization problem that determines the power allocation over different layers of the wiretap code. 

Although the power allocation optimization problem is \emph{non-convex}, by investigating and exploiting its special structure, we provide a dynamic programming based algorithm that finds the optimal solution to this optimization problem. The final solution is hard to be written in a closed form expression for the general case. However, the output of our algorithm should not be considered as a numerical approximation but an exact solution. The devised algorithm enables us to evaluate the performance of the proposed group secret key-agreement protocol for various situations.

Finally, we derive an upper bound on the secrecy rate and compare it with the achievable rate by the proposed scheme. Furthermore, we show that although the proposed achievability scheme is not optimal, it can be proved that for the high-SNR regime when there is a large-dynamic range between the channel states, this scheme is optimal in the (generalized) degrees of freedom sense.

\subsection{Related Work}
Secret key generation over wireless channels is a problem that has attracted significant interest.
In a seminal paper on ``wiretap'' channels, Wyner \cite{wyner_wire-tap_1975} pioneered the notion that one can establish information-theoretic secrecy between Alice and Bob by utilizing the noisy broadcast nature of wireless transmissions. 
However, his scheme works only if we have perfect knowledge of Eve's channel 
and moreover, only if Eve has a worse channel than Bob. In a subsequent seminal 
work, Maurer~\cite{maurer_secret_1993} showed the value of feedback from Bob to Alice, even if Eve hears all the feedback transmissions (\ie, the feedback channel is public). He showed that even if the channel from Alice to Eve is better than that to Bob, feedback allows Alice and Bob to create a key which is  information-theoretically secure from Eve. The problem of key agreement between a set of terminals having access to a noisy broadcast channel and a public discussion channel (visible to the eavesdropper) was studied in \cite{csiszar_secrecy_2008}, where the secret key generation capacity is completely characterized, assuming Eve does not have access to the noisy broadcast transmissions. The case when the eavesdropper also had access to the broadcast channel was the main focus of recent work in \cite{gohari_information-theoretic_2010,gohari_information-theoretic_2010-1} which developed upper and lower bounds for secrecy rates.
If the trusted nodes have access to a multi-terminal channel instead of a broadcast channel, \cite{csiszar_secrecy_2013} and \cite{chan_multiterminal_2014}, independently, derived upper and lower bounds for secret key generation capacity under the assumption that Eve has only access to the public channel. 

The best achievable secrecy rate by our scheme for the Gaussian state-dependent channel is given by a non-convex optimization problem (see \eqref{eq:PowerAloocation-P2}) which can be reformulated as a \emph{generalize linear fractional program} \cite{GlobOpt03-GenLinFracProgramming}. In \cite{WirelessCom09-Angela-MAPEL}, the \emph{weighted throughput maximization} problem have been studied which involves a similar optimization problem to \eqref{eq:PowerAloocation-P2} and the authors employs numerical techniques introduced in \cite{GlobOpt03-GenLinFracProgramming} to solve this problem. In our case, however, the convergence time of such numerical method is not practical and we have to develop an approach in \S\ref{sec:GrpSecretKey-GaussBrdcstChnl-SolvePowerAllocProblem} to solve optimization problem \eqref{eq:PowerAloocation-P2} analytically.

To the best of our knowledge, ours is the first work to consider multi-terminal secret key agreement over erasure networks and wireless broadcast channels with state, when Eve also has access to the noisy broadcast transmissions. Moreover, unlike the information-theoretic works (\eg, \cite{wyner_wire-tap_1975,maurer_secret_1993, ahlswede_common_1993, csiszar_secrecy_2008, gohari_information-theoretic_2010-1}) that assume infinite complexity operations, our schemes for the deterministic broadcast channels (that includes the erasure channel case \cite{siavoshani_group_2010}) are computationally efficient. It is worth mentioning that following a conference version of this work  on the packet erasure channel \cite{siavoshani_group_2010}, there has been some attempts to bring those ideas to practical scenarios, \eg, \cite{safaka_exchanging_2011,safaka_exchanging_2013, atsan_low_2013,argyraki_creating_2013}.


The main contributions of the paper can be summarized as follows:
\begin{itemize}
\item For the secret key sharing problem among $m$ trusted nodes that have access to a deterministic broadcast channel and to a public discussion channel, we completely characterize the secret key generation capacity. This result can be considered an extension to the erasure channel case \cite{siavoshani_group_2010} and produces information theoretic secure key regardless of the computational power of the eavesdropper Eve.
\item By using ideas from the code design for deterministic broadcast channels, we devise a coding scheme based on a nested message set, degraded channel wiretap code (also see \cite{liang_broadcast_2009}). In general, we characterize the achievable secret key generation rate which is given by a non-convex power allocation optimization problem. Moreover, we derive an upper bound on the secrecy rate and show that for the high-SNR, high-dynamic range regime, our proposed scheme is optimal in a degree of freedom sense.
\item Although, in the proposed scheme, the best achievable secrecy rate is described by the solution of a non-convex optimization problem, we \emph{solve} the optimization problem by using dynamic programming technique. 
\end{itemize}

The rest of the paper is organized as follows. In \S\ref{sec:Notation_ProblemStatement}, we introduce our notation and the problem formulation. \S\ref{sec:MainResults} summarizes the main results of the paper.
Our general upper bound on the secret key generation capacity for an independent broadcast channel is presented in \S\ref{sec:IndpBrdCstChnl-Secrecy-UpperBound}.
Each of the ``deterministic,'' and ``state-dependent Gaussian'' models will be discussed in  \S\ref{sec:GrpSecKeyAgr_DetBrdcstChnl} and \S\ref{sec:GrpSecKeyAgr_GaussBrdcstChnl}, respectively.
The solution of the non-convex optimization problem is derived in \S\ref{sec:GrpSecretKey-GaussBrdcstChnl-SolvePowerAllocProblem}.
Finally, open questions and future directions will be discussed in \S\ref{sec:OpenQuestion_FutureDirections}.

\section{Notation and Problem Statement}\label{sec:Notation_ProblemStatement}
For convenience, during the paper, we use $[i:j]$ to denote the set of integers $\{i,i+1,\ldots,j\}$. 
Given random variables $X_1,\ldots,X_m$, we write $X_{1:m}$ to denote $\left(X_1,\ldots,X_m \right)$. We use also $X^{t}$ to denote $\left(X[1],\ldots,X[t]\right)$ where $t$ is the discreet time index. 
%
%
%
With an abuse of the notation, we use $H(\cdot)$ to denote both entropy and differential entropy depending on the context.
All the logarithms are in base two unless otherwise stated.
We write $f(x)\stackrel{\cdot}{=}g(x)$ to denote that $\log{f(x)} = \log{g(x)} + o(\log{x})$.
The notation ``$\stackrel{\cdot}{\le}$'' and ``$\stackrel{\cdot}{\ge}$'' are defined similarly.

\subsection{Problem Statement}
We consider a set of $m \ge 2$ honest nodes $\{0, \ldots, m-1\}$ 
that
aim to share a secret key $K$ among themselves while keeping it concealed from a passive adversary Eve, denoted by ``$\eve$''. Eve does not perform any transmissions, but is trying to eavesdrop on (overhear) the communications between the honest nodes\footnote{For convenience, sometimes we will refer to legitimate terminals $0,1,2,\ldots,$ as ``Alice,'' ``Bob,'' ``Calvin,'' and so on. So for example, we use $X_0,X_1,X_2$, etc. interchangeably with $X_\alice,X_\bob,X_\calvin$, etc.}.

We assume that Alice (terminal $0$
) has access to a broadcast channel such that the rest of the terminals (including Eve) receive \textit{independent} noisy version of what she broadcasts (see Figure~\ref{fig:SetupBroadcastChannel}), 
where the input and output symbols of the channel are from some arbitrary sets.
We also assume that all of the honest terminals can discuss over a cost-free noiseless public channel where everybody (including Eve) can hear the discussion (see Figure~\ref{fig:SetupPublicChannel}).

\begin{figure}
\centering
\begin{subfigure}[b]{\columnwidth}
\centering
\includegraphics[scale=0.85]{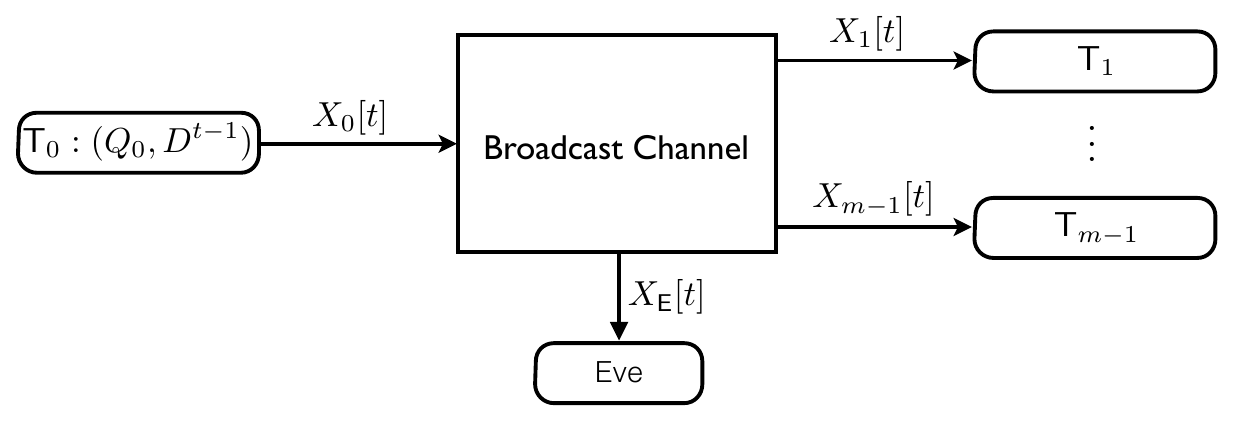}
\caption{}
\label{fig:SetupBroadcastChannel}
\end{subfigure}
\\
\begin{subfigure}[b]{\columnwidth}
\centering
\includegraphics[scale=0.85]{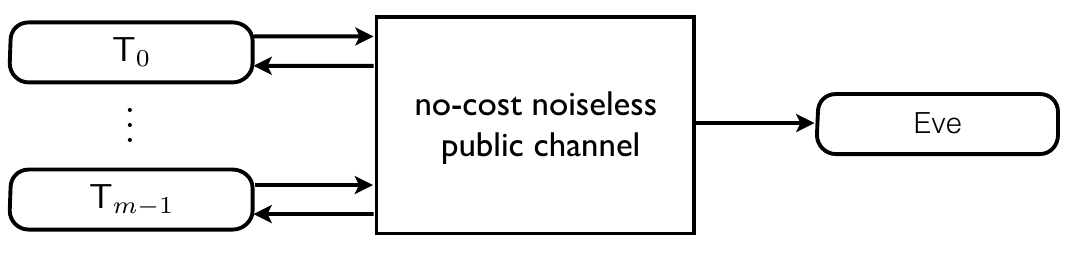}
\caption{}
\label{fig:SetupPublicChannel}
\end{subfigure}
\caption{(a) The broadcast channel that Alice (terminal $\ter_0$) has access to. 
(b) The cost-free noiseless public channel that all of the trusted node can discuss over. Eve overhears all of the public discussions completely.}\label{fig:SetupChannel}
\end{figure}

The protocol stated in Definition~\ref{def:GrpSecKey-SecKeyGenProtocol} 
introduces the most general form of an interactive communication between terminals aiming to share a common secret key $K$ (see also \cite{maurer_secret_1993, ahlswede_common_1993, csiszar_secrecy_2008, gohari_information-theoretic_2010-1}). 

\begin{definition}[Secret key generating protocol]\label{def:GrpSecKey-SecKeyGenProtocol}
~\\
\textrm{1)} For $t=0$, all of the honest terminals generate independent random 
variables $Q_0,\ldots,Q_{m-1}$.\\
\textrm{2)} \textsf{(i)} For time $1\le t\le n$, Alice transmits $X_0[t]$ over the broadcast channel. Then the other terminals receive $X_1[t],\ldots,X_{m-1}[t],$ and Eve receives $X_\eve[t]$.\\
\textsf{(ii)} Following each of the broadcast transmissions, there is the possibility for the legitimate terminals to discuss over a cost-free noiseless public channel. The discussion continues in a round robin order for an arbitrary number of rounds. The whole public discussion at time $t$ is denoted by $D[t]$. Notice that what Alice broadcasts at time $t$ depends on $Q_0$ and $D^{t-1}$.\\
\textrm{3)} Finally, the $i$th terminal creates a key $K_i$ where
$K_i = K_i(Q_i,X_i^n,D^n)$.
\end{definition}

\begin{definition}\label{def:GrpSecKey-AchvKeyRate-Capacity}
A number $R_s$ is called an achievable key generation rate if for every $\epsilon>0$ and sufficiently large $n$ there exists a key generating protocol as defined in Definition~\ref{def:GrpSecKey-SecKeyGenProtocol} such that we have
\begin{gather}
\Prob{K_i\neq K_j}<\epsilon,\quad \forall i,j:i\neq j, \label{eq:AchvRate_Cond1} \\
I(K_0;X_\eve^n,D^n)<\epsilon, \label{eq:AchvRate_Cond2}\\
\frac{1}{n} H(K_0)> R_s-\epsilon.\label{eq:AchvRate_Cond3}
\end{gather}
The supremum of the achievable key rate as $n\rightarrow\infty$ and $\epsilon\rightarrow 0$ is called the \emph{secret key generation (SKG) capacity} $C_s$.
\end{definition}

\subsection{State-Dependent Gaussian Broadcast Channels}
\label{sec:GrpSecKey-GaussBrdcstChnl-ProblemStatement}
Here, we introduce the state-dependent additive white Gaussian broadcast channel model which is the main focus of this paper. In this model, we assume that for each receiver the channel state remains unchanged during  a block of symbols of length $\pktlen$ and changes independently from one block to another block. We also assume $\pktlen$ is large enough so that enables us to apply information theoretical arguments within each block. The transmitted vector sent by Alice is denoted by ${X}_\alice\in\Rbb^{\pktlen}$. The received vector at each receiver (including Eve) depend on its channel state at a particular time instant. We define a random variable $S_i[t] \in [0:s]$ corresponding to the channel state for the $i$th terminal at time $t$ and similarly define the random variable $S_\eve[t] \in [0:s]$ for Eve.  For the channel state of a receiver $r\in\{1,\ldots,m-1,\eve\}$ we assume that\footnote{For simplicity of demonstration and without loss of generality, here we only consider a symmetric problem where the probability distribution over the states are the same for all of the receivers (including Eve). Moreover, we focus on a finite number of states. Both of these restrictions can be relaxed.}
$\Prob{S_r[t] = k} = \delta_k, \forall k\in [0:s],$
where $\sum_{k=0}^s \delta_k=1$.
The received vector at the receiver $r$ is modelled by a state-dependent white Gaussian channel as follows
\begin{equation}\label{eq:GaussianChannelModel}
\hat{X}_r[t] = \sqrt{h_{S_r[t]}} {X}_\alice[t] + {Z}_r[t],\quad \forall r\in\{1,\ldots,m-1,\eve\},
\end{equation}
where $\hat{X}_r[t]\in\Rbb^\pktlen$ and ${Z}_r[t]\in\Rbb^\pktlen$ . 
For the additive noise of each receiver we assume ${Z}_r[t] \sim N(0,\Bs{I}_\pktlen)$ and the noise vectors are also independent over time. The channel gains $\sqrt{h_i}$ are some real constants such that
$h_0 < \cdots < h_s$.
Additionally, the channel input is subject to an average power constraint $P_{\mathrm{max}}$, \ie,
$\frac{1}{\pktlen}\Expc{ \|{X}_\alice \|^2} \le P_{\mathrm{max}}$.

Moreover, we assume that the CSI\footnote{Channel state information.} is completely known by each receiver. So we define a composite received vector for each receiver $r$ as 
${X}_r[t] = (\hat{X}_r[t],S_r[t])$.

\subsection{Deterministic Broadcast Channel}\label{sec:GrpSecKey-DetBrdcstChnl-ProblemStatement}
Now, following the idea proposed in \cite{avestimehr_wireless_2011}, we introduce the deterministic approximation model for our Gaussian channel. 
We assume that the transmitted vector (packet) sent by Alice is denoted by ${X}_\alice\in\Fbb_q^\pktlen$ where $\Fbb_q$ is a finite field of size $q$. 
Then, the received vector at the receiver $r$ is modelled by a state-dependent deterministic broadcast channel as follows
\begin{equation}\label{eq:DeterministicChannelModel}
\hat{{X}}_r[t] = \Bs{F}_{S_r[t]} {X}_\alice[t],\quad\quad \forall r\in\{1,\ldots,m-1,\eve\},
\end{equation}
where $\Bs{F}_i\in\Fbb_q^{\pktlen\times \pktlen}$ for $i\in [0:s]$ and $S_r[t]$ is defined in \S\ref{sec:GrpSecKey-GaussBrdcstChnl-ProblemStatement}. Moreover, similar to the Gaussian model, we define a composite received vector for the receiver $r$ as
${X}_r[t] = ( \hat{{X}}_r[t], S_r[t] )$.

In order to capture and model the different SNR level for the Gaussian channel, we use the shift matrix model developed in \cite{avestimehr_wireless_2011}. To this end, we consider matrices $\Bs{F}_i$ such that they satisfy the following nested structure
\begin{align}
\vec{0}=\ker\Bs{F}_s\subset \ker\Bs{F}_{s-1} \subset \cdots\subset \ker\Bs{F}_0=\Fbb_q^\pktlen, \label{eq:TransMatrixNested_Structure_1} \\
\rank(\Bs{F}_i-\Bs{F}_{i-1}) = \rank(\Bs{F}_i) - \rank(\Bs{F}_{i-1}). \label{eq:TransMatrixNested_Structure_2}
\end{align}
For convenience we assume that $\Bs{F}_s=\Bs{I}_\pktlen$ where $\Bs{I}_\pktlen$ is the identity matrix of size $\pktlen$. The two extreme states ``$0$'' and ``$s$'' correspond to complete erasure and complete reception of the transmitted vector (packet) ${X}_\alice$. The deterministic model is indeed an extension to the packet erasure broadcast channel, studied in \cite{siavoshani_group_2010,argyraki_creating_2013}, which has only two channel states, \ie, $s=1$, (see also  Appendix~\ref{sec:GrpSecKeyAgr_ErasureBrdcstChnl}).

\section{Main Results}\label{sec:MainResults}
The main results of this paper is summarized in the following. For the secret key generation scenario among $m$ terminals that have access to a ``deterministic broadcast channel,'' we completely characterize the key generation capacity. This result can be considered as the generalization of the result of \cite{siavoshani_group_2010,argyraki_creating_2013} for ``packet erasure broadcast channels'' (see Theorem~\ref{thm:GrpScrtKey-DetBrdcstChnl-MainRslt}). For a ``state-dependent Gaussian broadcast channel,'' we provide upper and lower bounds for the key generation capacity and show that these bounds will match in the high-dynamic range, high-SNR regime. 
Furthermore, the achievable secrecy rate by our proposed scheme for the Gaussian model is described by a non-convex power optimization problem. Although this problem is non-convex, by exploiting its special structure, we find the optimal power allocation that leads to the best secrecy rate achievable by the proposed scheme.



\begin{theorem}\label{thm:GrpScrtKey-DetBrdcstChnl-MainRslt}
The SKG capacity among $m$ terminals that have access to a state-dependent deterministic broadcast channel, defined in \S\ref{sec:GrpSecKey-DetBrdcstChnl-ProblemStatement}, is given by
\begin{equation*}
C^{\mathsf{det}}_s = \sum_{i=1}^s \left[ \rank{\Bs{F}_i} - \rank{\Bs{F}_{i-1} }  \right] \theta_i(1-\theta_i) \log{q},
\end{equation*}
where $\theta_i \triangleq \sum_{j=0}^{i-1} \delta_j$.
\end{theorem}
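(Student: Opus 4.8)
The plan is to establish the formula by proving a matching converse and achievability, both organized around the \emph{layered decomposition} of the deterministic channel that the nested conditions \eqref{eq:TransMatrixNested_Structure_1}--\eqref{eq:TransMatrixNested_Structure_2} are designed to enable. Writing $d_i \triangleq \rank\Bs{F}_i - \rank\Bs{F}_{i-1}$ for the dimension of ``layer $i$,'' the goal in both directions is to show that layer $i$ behaves exactly like an independent erasure broadcast channel carrying $d_i$ parallel symbols over $\Fbb_q$, in which a given receiver fails to observe the layer precisely when its state is at most $i-1$, an event of probability $\theta_i = \sum_{j=0}^{i-1}\delta_j$. The per-layer single-pair secrecy contribution $\theta_i(1-\theta_i)\log q$ per symbol, summed against the layer dimensions $d_i$, then yields the claimed capacity.

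For the converse, I would invoke the general upper bound of \S\ref{sec:IndpBrdCstChnl-Secrecy-UpperBound} and use that the group key cannot exceed the secret key shared between Alice and any single legitimate terminal $i$, which is bounded by $\max_{P_{X_\alice}} I(X_\alice; X_i \mid X_\eve)$. Since the states $S_i, S_\eve$ are independent of the input and of each other, this reduces to $\sum_{a,b}\delta_a\delta_b\, H(\Bs{F}_a X_\alice \mid \Bs{F}_b X_\alice)$ evaluated in the conditioned channel. Using only the nested-kernel property $\ker\Bs{F}_a\subseteq\ker\Bs{F}_b$ for $a\ge b$, a coset-counting argument shows that given $\Bs{F}_b X_\alice$ the vector $\Bs{F}_a X_\alice$ ranges over $q^{\rank\Bs{F}_a-\rank\Bs{F}_b}$ equally likely values when $X_\alice$ is uniform (and is determined when $a\le b$), so the conditional entropy equals $\max(0,\rank\Bs{F}_a-\rank\Bs{F}_b)\log q$ and uniform input is optimal. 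Substituting the telescoping identity $\rank\Bs{F}_a-\rank\Bs{F}_b = \sum_{i=b+1}^a d_i$ and collecting terms, the state average factors as $\sum_{a\ge i,\,b\le i-1}\delta_a\delta_b = (1-\theta_i)\theta_i$, giving exactly $\sum_{i=1}^s d_i\,\theta_i(1-\theta_i)\log q$.

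For achievability, I would first make the layered structure explicit. Choosing subspaces $W_i$ with $\ker\Bs{F}_{i-1} = \ker\Bs{F}_i \oplus W_i$ yields $\Fbb_q^\pktlen = W_1\oplus\cdots\oplus W_s$ with $\dim W_i = d_i$, and each $\Bs{F}_k$ annihilates $W_{k+1}\oplus\cdots\oplus W_s$ while acting injectively on $W_1\oplus\cdots\oplus W_k$. The rank-additivity condition \eqref{eq:TransMatrixNested_Structure_2} is what guarantees that, after a suitable change of basis on the outputs, the content a state-$k$ receiver extracts from layer $j\le k$ is a fixed invertible image of the $W_j$-component of $X_\alice$, independent of $k$ and of the other layers; this certifies that the $s$ layers are mutually independent erasure broadcast channels. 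On layer $i$ I would then run the group secret-key-agreement protocol for the erasure broadcast channel from \cite{siavoshani_group_2010} (reviewed in Appendix~\ref{sec:GrpSecKeyAgr_ErasureBrdcstChnl}), whose network-coding-based reconciliation attains the single-pair rate $\theta_i(1-\theta_i)\log q$ per symbol for the entire group of $m$ terminals, hence $d_i\,\theta_i(1-\theta_i)\log q$ for the layer. Summing the independent per-layer keys, with reliability and secrecy conditions \eqref{eq:AchvRate_Cond1}--\eqref{eq:AchvRate_Cond3} preserved under the union over layers, reaches the converse bound.

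I expect the main obstacle to be the achievability's decomposition step: verifying rigorously that conditions \eqref{eq:TransMatrixNested_Structure_1}--\eqref{eq:TransMatrixNested_Structure_2} let one bring all $\Bs{F}_k$ \emph{simultaneously} into a layered normal form of parallel, non-interfering erasure layers with the stated per-layer erasure probabilities, and that running a separate protocol per layer composes without secrecy leakage through the shared public discussion. The converse, by contrast, is essentially a direct single-letter computation once the general bound of \S\ref{sec:IndpBrdCstChnl-Secrecy-UpperBound} is in hand; the only care needed there is confirming uniform-input optimality and that the group bound collapses to the pairwise one by symmetry. It is worth noting that the converse uses only the nested-kernel hypothesis \eqref{eq:TransMatrixNested_Structure_1}, whereas the rank-additivity hypothesis \eqref{eq:TransMatrixNested_Structure_2} is exactly what the achievability exploits to separate the layers.
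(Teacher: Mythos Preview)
Your proposal is correct and matches the paper's overall two-part structure: the converse via the pairwise bound of \S\ref{sec:IndpBrdCstChnl-Secrecy-UpperBound} specialized by symmetry, and the achievability via a direct-sum layer decomposition plus the erasure-channel protocol of Appendix~\ref{sec:GrpSecKeyAgr_ErasureBrdcstChnl} on each layer. The achievability is essentially identical to the paper's (your $W_i$ with $\ker\Bs{F}_{i-1}=\ker\Bs{F}_i\oplus W_i$ is the natural choice of the subspaces $\Pi_i$ in Proposition~\ref{prop:Kernel_DirectSum_DetAchvScheme}).

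The converse computations, however, diverge in an interesting way. The paper expands $I(X_\alice;X_\bob\mid X_\eve)=H(X_\alice\mid X_\eve)-H(X_\alice\mid X_\bob,X_\eve)$, then telescopes through $H(\Bs{F}_j X_\alice\mid \Bs{F}_{j-1}X_\alice)=H((\Bs{F}_j-\Bs{F}_{j-1})X_\alice\mid\Bs{F}_{j-1}X_\alice)$ and invokes \eqref{eq:TransMatrixNested_Structure_2} to bound the latter by $(\rank\Bs{F}_j-\rank\Bs{F}_{j-1})\log q$. You instead write $I(X_\alice;X_\bob\mid X_\eve)=\sum_{a,b}\delta_a\delta_b\,H(\Bs{F}_a X_\alice\mid \Bs{F}_b X_\alice)$ and bound each term directly by coset-counting, which uses only \eqref{eq:TransMatrixNested_Structure_1}. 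That is a genuine simplification: your converse does not need \eqref{eq:TransMatrixNested_Structure_2}. But this also means your closing remark is inverted relative to the paper. In the paper it is the \emph{converse} as written that invokes \eqref{eq:TransMatrixNested_Structure_2}; and in fact your own achievability does not need it either, since once $W_j\subseteq\ker\Bs{F}_{j-1}$ for each $j$, any state-$k$ receiver sees $\Bs{F}_k(X_1+\cdots+X_k)$ with $\Bs{F}_k$ injective on $W_1\oplus\cdots\oplus W_k$, and can therefore recover each layer $X_j$ for $j\le k$ without appealing to \eqref{eq:TransMatrixNested_Structure_2}. So your argument actually shows the theorem holds under \eqref{eq:TransMatrixNested_Structure_1} alone; just drop the claim that \eqref{eq:TransMatrixNested_Structure_2} is what separates the layers.
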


This theorem is proved in \S\ref{sec:GrpSecKeyAgr_DetBrdcstChnl}.
Notice that the result of \cite{siavoshani_group_2010} is a special case of Theorem~\ref{thm:GrpScrtKey-DetBrdcstChnl-MainRslt} when $s=1$.

\begin{theorem}\label{thm:GrpSecKey-GaussBrdcstChnl-MainRslt}
The SKG capacity among $m$ terminals that have access to a state-dependent Gaussian broadcast channel, as defined in \S\ref{sec:GrpSecKey-GaussBrdcstChnl-ProblemStatement}, is upper bounded by
\begin{equation*}
C^{\mathsf{gaus}}_s \le \frac{1}{2} L \sum_{i=0}^{s}\sum_{j=0}^{s} \delta_i\delta_j  \log \left(1 + \frac{h_i P_{\mathrm{max}}}{1 + h_j P_{\mathrm{max}}} \right).
\end{equation*}
Moreover, the secrecy capacity can be lower bounded by the solution of the following (non-convex) optimization problem
\begin{equation*}
C^{\mathsf{gaus}}_s \ge \left\{\begin{array}{ll}
\max & \sum_{i=1}^s \Delta_i L R_i \\
\mathrm{subject\ to} & \sum_{i=1}^s P_i = P_{\mathrm{max}}\\
 & P_i\ge 0,\quad \forall i\in[1:s],
\end{array} \right.
\end{equation*}
where $\Delta_i\triangleq (1-\theta_i)\theta_i$. Also $\forall i\in[1:s]$ we have
\begin{equation*}
R_i \triangleq \frac{1}{2} \left[ \log\left(1+ \frac{h_i P_i}{1+ h_i I_i} \right) -\log\left(1+ \frac{h_{i-1} P_i}{1+ h_{i-1} I_i} \right) \right],
\end{equation*}
where $I_i \triangleq \sum_{j=i+1}^s P_j$. 
%
Additionally, for the high-dynamic range case where $h_i \gg h_{i-1}, \forall i\in[1:s]$, and when we are in high SNR regime, we can write
\begin{equation*}
C^{\mathsf{gaus}}_s \stackrel{\cdot}{=} \frac{1}{2} L \sum_{i=1}^s \Delta_i \log\frac{h_i}{h_{i-1}},
\end{equation*}
where ``$\stackrel{\cdot}{=}$'' defined in \S\ref{sec:Notation_ProblemStatement}, is used to denote for the exponential equality with respect to some scaling parameter $Q$. Here, as $Q\rightarrow\infty$, we asymptotically approach to the high-dynamic, high-SNR regime (for more details refer to \S\ref{sec:HighDynamic_HighSNR_Regime}).
\end{theorem}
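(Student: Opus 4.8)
The plan is to prove the three assertions separately: the converse (upper bound), the achievability (the power-allocation optimization), and their coincidence in the high-SNR, high-dynamic-range limit. For the converse I would specialize the general upper bound for independent broadcast channels from \S\ref{sec:IndpBrdCstChnl-Secrecy-UpperBound}. After a genie reveals all channel states to every party (which can only help), the bound reduces to the single-letter quantity $\max_{p(x_\alice)} \Expc{I(X_\alice;\hat{X}_\bob \mid \hat{X}_\eve)}$, the expectation being over the independent per-block states of a representative legitimate node and of Eve; by the assumed symmetry the choice of representative node is immaterial. For a fixed state pair $(i,j)$ and a Gaussian input of power $P$, computing $h(\hat X_\bob\mid\hat X_\eve)-h(\hat X_\bob\mid X_\alice)$ gives $\frac{1}{2}\log(1+\frac{h_iP}{1+h_jP})$. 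It then remains to argue that a Gaussian input simultaneously maximizes every term and that the constraint is active at $P=P_{\mathrm{max}}$: the former holds because each conditional mutual information is a scalar Gaussian expression maximized by Gaussian inputs, and the latter because $\frac{h_iP}{1+h_jP}$ is increasing in $P$. Multiplying by the block length $L$ and averaging with weights $\delta_i\delta_j$ yields the stated bound.

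For achievability I would mimic the deterministic scheme of Theorem~\ref{thm:GrpScrtKey-DetBrdcstChnl-MainRslt} using a superposition (broadcast-approach) code \cite{liang_broadcast_2009}: Alice transmits $X_\alice=\sum_{i=1}^s X_{\alice,i}$ with layer $i$ of power $P_i$, $\sum_i P_i=P_{\mathrm{max}}$, where layer $i$ suffers interference $I_i=\sum_{j>i}P_j$ from the layers above while the layers below are decoded and stripped. The key step is to equip each layer with a nested, degraded wiretap code of secrecy rate $R_i$ equal to the gap between what a receiver in state $i$ decodes, $\frac{1}{2}\log(1+\frac{h_iP_i}{1+h_iI_i})$, and what a receiver in state $i-1$ can learn, $\frac{1}{2}\log(1+\frac{h_{i-1}P_i}{1+h_{i-1}I_i})$. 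By degradedness a message secret from state $i-1$ is secret from every state $\le i-1$; hence layer $i$ delivers $LR_i$ bits decoded exactly by all nodes (Eve included) in state $\ge i$ and perfectly concealed from all nodes in state $\le i-1$, which is precisely a symmetric packet-erasure broadcast channel with erasure probability $\theta_i$. Running the erasure-channel group key-agreement protocol (the $s=1$ case of Theorem~\ref{thm:GrpScrtKey-DetBrdcstChnl-MainRslt}, reviewed in Appendix~\ref{sec:GrpSecKeyAgr_ErasureBrdcstChnl}) independently on each layer extracts a group key at rate $\theta_i(1-\theta_i)LR_i=\Delta_i LR_i$; summing over layers and optimizing over feasible $\{P_i\}$ gives the claimed lower bound.

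To collapse the two bounds in the high-SNR, high-dynamic-range regime I would work on the exponential scale. In the converse, $\log(1+\frac{h_iP_{\mathrm{max}}}{1+h_jP_{\mathrm{max}}})\stackrel{\cdot}{=}\log\frac{h_i}{h_j}$ when $i>j$ and is asymptotically negligible when $i\le j$; writing $\log\frac{h_i}{h_j}=\sum_{k=j+1}^{i}\log\frac{h_k}{h_{k-1}}$ and exchanging the order of summation turns $\sum_{j<i}\delta_i\delta_j\log\frac{h_i}{h_j}$ into $\sum_k(\sum_{j<k}\delta_j)(\sum_{i\ge k}\delta_i)\log\frac{h_k}{h_{k-1}}=\sum_k\theta_k(1-\theta_k)\log\frac{h_k}{h_{k-1}}=\sum_k\Delta_k\log\frac{h_k}{h_{k-1}}$. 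For the matching achievable point I would take $P_i\stackrel{\cdot}{=}1/h_{i-1}$, feasible because high SNR forces $P_{\mathrm{max}}\gg 1/h_0$, so that $I_i\stackrel{\cdot}{=}1/h_i$; then $h_{i-1}I_i\stackrel{\cdot}{=}h_{i-1}/h_i\to 0$ makes the subtracted term in $R_i$ a constant while $\frac{h_iP_i}{1+h_iI_i}\stackrel{\cdot}{=}h_i/h_{i-1}$, giving $R_i\stackrel{\cdot}{=}\frac{1}{2}\log\frac{h_i}{h_{i-1}}$ (the top layer with $I_s=0$ a fortiori). Hence the lower bound also equals $\frac{1}{2}L\sum_i\Delta_i\log\frac{h_i}{h_{i-1}}$ on the exponential scale, and the two bounds meet (details in \S\ref{sec:HighDynamic_HighSNR_Regime}).

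I expect the main difficulty to lie in the achievability/GDoF interface rather than in the converse. Making the per-layer wiretap argument rigorous requires a secrecy analysis showing that the layered code leaks a vanishing amount to \emph{every} state below $i$ simultaneously and that the subsequent erasure-based reconciliation composes across all $s$ layers without cumulative secrecy loss; and the matching power allocation must be shown to realize all $s$ gaps $\frac{1}{2}\log\frac{h_i}{h_{i-1}}$ at once under the single constraint $\sum_i P_i=P_{\mathrm{max}}$, which is delicate because interference from the upper layers threatens to wash out the very gap each layer is meant to create. By contrast, the Gaussian optimality in the converse and the combinatorial rearrangement identity are comparatively routine.
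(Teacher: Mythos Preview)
Your plan matches the paper's on all three fronts: the converse is Theorem~\ref{thm:SecrecyUpBound-CsNa08-forIndpChnl} specialized (via symmetry) to $I(X_\alice;X_\bob\mid X_\eve)$ and then optimized over Gaussian inputs; the achievability is exactly the layered wiretap code of Theorem~\ref{thm:GaussianWiretapAchievability} composed with the per-layer erasure key-agreement protocol of Appendix~\ref{sec:GrpSecKeyAgr_ErasureBrdcstChnl}; and the DoF upper bound uses the same telescoping identity $\sum_{j<i}\delta_i\delta_j\log(h_i/h_j)=\sum_k\Delta_k\log(h_k/h_{k-1})$.

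The one substantive difference is how you reach the DoF \emph{lower} bound. You exhibit a concrete allocation $P_i\stackrel{\cdot}{=}1/h_{i-1}$ (so $I_i\stackrel{\cdot}{=}1/h_i$) and check directly that each $R_i\stackrel{\cdot}{=}\tfrac12\log(h_i/h_{i-1})$. The paper instead goes through the KKT machinery: it shows that in this regime the roots $r_k^{(1)}$ of \eqref{eq:PowerAllocation_LinearCase_root} satisfy $r_k^{(1)}\stackrel{\cdot}{=}Q^{-\gamma_k}$ and are ordered, so Lemma~\ref{lem:PowerAllocation_Solution_All_I_Positive} forces the unique KKT solution $I_k^{**}=r_k^{(1)}$, and then evaluates the objective there. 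Your route is more elementary and perfectly sufficient for a lower bound; the paper's route buys the extra information that this allocation is in fact the optimizer of \eqref{eq:PowerAloocation-P2} in the limit, not merely DoF-optimal.

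One place your sketch is looser than the paper is the Gaussian-optimality step in the converse. ``Each conditional mutual information is maximized by Gaussian inputs'' is not immediate, since $I(X_\alice;\hat X_\bob\mid\hat X_\eve)=h(\hat X_\bob\mid\hat X_\eve)-h(Z_\bob)$ and the first term is not a single max-entropy functional of $X_\alice$. The paper closes this by bounding $h(\sqrt{h_i}X_\alice+Z_\bob\mid\sqrt{h_j}X_\alice+Z_\eve)$ above by the Gaussian entropy with the same conditional covariance (equivalently the LMMSE residual), which depends only on $\mathrm{Var}(X_\alice)$; that bound is increasing in the variance, so Gaussian $X_\alice$ at full power $P_{\mathrm{max}}$ is extremal. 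You will want that extra line to make the argument airtight.
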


It is worth mentioning that the power optimization problem stated in Theorem~\ref{thm:GrpSecKey-GaussBrdcstChnl-MainRslt} is a non-convex problem. 
Although the closed-form solution of the this problem is not easy to derive explicitly, but by using dynamic programming it can be easily found numerically. In \S\ref{sec:GrpSecretKey-GaussBrdcstChnl-SolvePowerAllocProblem}, based on the structure of this optimization problem and by exploiting special properties of its KKT necessary conditions for the optimality, we propose a dynamic programming algorithm that finds the optimal power allocation (see Algorithms~\ref{alg:Solve_KKT_Conditions} and \ref{alg:Solve_KKT_RecursionMainPart}). More specifically, we have the following theorem.

\begin{theorem}
Algorithms~\ref{alg:Solve_KKT_Conditions}~and~\ref{alg:Solve_KKT_RecursionMainPart} find the optimal solution of the optimization problem stated in Theorem~\ref{thm:GrpSecKey-GaussBrdcstChnl-MainRslt}.
\end{theorem}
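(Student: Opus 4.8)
The plan is to remove the apparent coupling through the interference terms by a change of variables, and then recognize the resulting program as a separable optimization over a chain (totally ordered) feasible set, for which a dynamic program is exact. First I would introduce the partial sums $T_i \triangleq \sum_{j=i}^s P_j$, so that $I_i = T_{i+1}$, $P_i = T_i - T_{i+1}$, and the power constraint together with the sign constraints becomes the monotone chain $P_{\mathrm{max}} = T_1 \ge T_2 \ge \cdots \ge T_s \ge T_{s+1} = 0$. A direct computation gives $1 + \frac{h_i P_i}{1+h_i I_i} = \frac{1+h_i T_i}{1+h_i T_{i+1}}$, so that, writing $g_i(t) \triangleq \log\frac{1+h_i t}{1+h_{i-1}t}$, each layer rate collapses to $R_i = \frac{1}{2}\left[ g_i(T_i) - g_i(T_{i+1}) \right]$, a function of two \emph{consecutive} partial sums only. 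An Abel (summation-by-parts) rearrangement of $\sum_{i=1}^s \Delta_i L R_i$ then shows the objective is, up to the constant $\frac{L}{2}\Delta_1 g_1(P_{\mathrm{max}})$, \emph{separable}: it equals $\frac{L}{2}\sum_{k=2}^s \phi_k(T_k)$ with $\phi_k(t) \triangleq \Delta_k g_k(t) - \Delta_{k-1}g_{k-1}(t)$.

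Thus the non-convex power-allocation problem of Theorem~\ref{thm:GrpSecKey-GaussBrdcstChnl-MainRslt} is equivalent to maximizing a separable sum of one-dimensional functions subject only to the chain constraints on $T_2,\ldots,T_s$, which is exactly the structure for which dynamic programming is natural. I would define the value functions $W_k(u) \triangleq \max\{\sum_{j=k}^s \phi_j(T_j) : u \ge T_k \ge \cdots \ge T_s \ge 0\}$, obeying the Bellman recursion $W_k(u) = \max_{0 \le t \le u}\left[ \phi_k(t) + W_{k+1}(t) \right]$ with $W_{s+1} \equiv 0$, so that the optimum equals $\frac{L}{2}\left[ \Delta_1 g_1(P_{\mathrm{max}}) + W_2(P_{\mathrm{max}}) \right]$. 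Correctness of this recursion rests only on the principle of optimality, which needs the additive chain structure and is \emph{insensitive to convexity}; hence the non-convexity of the $\phi_k$ does not by itself threaten validity of the DP.

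The substantive work, and the place where the KKT conditions enter, is turning this continuous recursion into an \emph{exact}, finitely representable computation rather than a grid approximation. Because the equality and sign constraints are affine, the KKT conditions are necessary at the optimum, and for the separable chain problem they take a pooling form: the indices partition into blocks on which the ordering constraint is active (adjacent $T_j$ equal), and each block sits either at a stationary point of its pooled objective or at a boundary. Correspondingly, I would prove by induction that each $W_k$ is piecewise of a fixed closed form on finitely many intervals of the state $u$, so that the inner maximization is solved either by a stationary point --- a root of $\phi_k'(t) + W_{k+1}'(t) = 0$, i.e.\ the layerwise stationarity $\Delta_k g_k'(t) = \Delta_{k-1}g_{k-1}'(t)$ corrected by the continuation derivative --- or by the boundary $t = u$ (the pooled case). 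The proof is then completed by checking that Algorithms~\ref{alg:Solve_KKT_Conditions} and \ref{alg:Solve_KKT_RecursionMainPart} carry out precisely this recursion and breakpoint bookkeeping, so their output is an exact global optimum. The main obstacle I anticipate is the non-concavity of $\phi_k$ (a difference of two concave, saturating functions $g_k, g_{k-1}$): the per-stage objective $\phi_k(t)+W_{k+1}(t)$ may have several stationary points and $W_{k+1}$ need not be concave, so I must show the algorithm evaluates \emph{all} candidate stationary and boundary points and retains the global maximizer, and that the number of pieces of $W_k$ stays bounded as $k$ decreases. Establishing this bounded piecewise structure, together with monotonicity of the breakpoints in $u$ that sustains the induction, is the delicate step; once in place, comparing the finitely many KKT/boundary candidates certifies global optimality despite non-convexity.
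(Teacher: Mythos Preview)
Your separability observation is correct and illuminating: after passing to the partial sums, the objective indeed collapses to a constant plus $\sum_k \phi_k(I_k)$ with each $\phi_k$ a function of a single variable, and this is exactly why the paper's Lagrangian derivative $\partial\mathfrak{L}/\partial I_k = F^{(1)}_k(I_k) + (\lambda_k-\lambda_{k+1})$ involves $I_k$ alone. The pooled derivative you would get when several consecutive $I_k$ are tied telescopes into the paper's $F^{(2)}_k$, so your framework is fully compatible with the paper's case analysis.

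The gap is architectural. Algorithms~\ref{alg:Solve_KKT_Conditions} and~\ref{alg:Solve_KKT_RecursionMainPart} are \emph{not} a Bellman recursion and never compute value functions $W_k(u)$: they pick indices in arbitrary order, branch into up to three (linear case) or five (quadratic case) subcases for $I^*_k$, and grow a search tree whose leaves are candidate KKT points; at the end they simply evaluate the objective on those finitely many leaves and return the best. The paper's correctness argument is correspondingly much simpler than your plan: (i) the constraints are affine, so the KKT conditions are necessary at any optimum; (ii) the numerators of $F^{(1)}_k$ and $F^{(2)}_k$ are at most linear and quadratic in $I_k$, so the case split in Cases~\ref{cond:PowerAloc_KKT_LinearCase_Conds} and~\ref{cond:PowerAloc_KKT_QuadraticCase_Conds} is exhaustive and the set of KKT solutions is finite (at most $5^{s-1}$); (iii) the maximum over these finitely many candidates is therefore the global optimum. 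No value functions, no piecewise-structure induction, no boundedness-of-breakpoints argument.

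Thus your anticipated obstacle---the non-concavity of $\phi_k$ and the possible blow-up of pieces in $W_k$---is a difficulty created by your own recasting of the algorithm as a Bellman DP; the paper's brute-force enumeration sidesteps it entirely. To prove the theorem as stated, you should drop the value-function machinery and verify directly that the branching in Cases~\ref{cond:PowerAloc_KKT_LinearCase_Conds} and~\ref{cond:PowerAloc_KKT_QuadraticCase_Conds} covers every feasible sign pattern of $(\lambda_k,\lambda_{k+1})$ and every position of $I^*_k$ relative to the (at most two) roots, which is the actual content of \S\ref{sec:GrpSecretKey-GaussBrdcstChnl-SolvePowerAllocProblem}. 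Your separability computation remains a useful contribution: it is the cleanest explanation of \emph{why} the numerators of $F^{(1)}_k$ and $F^{(2)}_k$ have the low degree that makes the enumeration finite.
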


\section{Upper Bound for the Key Generation Capacity of Independent Broadcast Channels}\label{sec:IndpBrdCstChnl-Secrecy-UpperBound}
The secret key generation capacity among multiple terminals (without 
eavesdropper having access to the broadcast channel) is completely 
characterized in \cite{csiszar_secrecy_2008}. 
By using this result, it is possible to state an upper bound  for the 
secrecy capacity of the key generation problem among multiple terminals 
where the eavesdropper has also access to the broadcast channel.
This can be done by adding a dummy terminal to the first problem and
giving all the eavesdropper's information to this dummy node and let
it to participate in the key generation protocol. By doing so, the 
secret key generation rate does not decrease.
Hence by combining \cite[Theorem~4.1]{csiszar_secrecy_2008} and
\cite[Lemma~5.1]{csiszar_secrecy_2008}, the following result can be 
stated.

\begin{theorem} \label{thm:SecrecyUpBound-CsNa08}
The secret key generation capacity among $m$ terminals as defined in 
Definition~\ref{def:GrpSecKey-AchvKeyRate-Capacity}, is upper bounded as 
follows
\begin{equation*}
C_s\le \max_{P_{X_0}} \min_{\lambda\in\Lambda([0:m-1])} \Bigg[ H(X_{[0:m-1]}|X_\eve)\\ -\sum_{B\subsetneq [0:m-1]} \lambda_B H(X_B|X_{B^c},X_\eve) \Bigg],
\end{equation*}
where $\Lambda([0:m-1])$ is the set of all collections $\lambda=\left\{ \lambda_B : B\subsetneq [0:m-1], B\neq \varnothing \right\}$
of weights $0\le\lambda_B\le 1$, satisfying
\begin{equation}\label{eq:LambdaPartitionDef}
\sum_{B\subsetneq [0:m-1],\ i\in B} \lambda_B = 1, \quad\quad \forall i\in [0:m-1].
\end{equation}
In the above expression for the upper bound, it is possible to 
change the order of maximization and minimization, 
see \cite[Theorem~4.1]{csiszar_secrecy_2008}.
\end{theorem}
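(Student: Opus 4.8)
The plan is to reduce our problem, in which Eve overhears both the broadcast outputs $X_\eve$ and the public transcript, to the compromised-helper (private-key) version of the multiterminal key-agreement problem of Csisz\'ar and Narayan, whose converse is exactly the claimed expression. Concretely, I would augment the system with an auxiliary (dummy) terminal $m$ whose observation at each time is set equal to Eve's composite output $X_\eve$, and I would declare terminal $m$ to be \emph{compromised}: its observation $X_\eve^n$ together with every public message it sends is handed to the wiretapper. In the augmented system the group that must agree on the key is still $\mathcal{A}=[0:m-1]$, the full terminal set is $\mathcal{M}=[0:m]$, and the wiretapper's view is the public discussion together with $X_\eve^n$ --- i.e. exactly Eve's view in the original problem.

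First I would establish monotonicity of the reduction. Any protocol that is valid for the original problem is feasible in the augmented one (the dummy terminal may simply stay silent), and allowing it to participate can only help: letting it join the public discussion with knowledge of $X_\eve^n$ enlarges the set of admissible reconciliation strategies, while its compromised status means that revealing its messages cannot weaken secrecy, since the wiretapper already holds $X_\eve^n$. Hence the secret-key capacity of the augmented problem is no smaller than $C_s$, and any upper bound on the former bounds $C_s$. The augmented problem is precisely a key-agreement problem with a compromised/helper terminal and wiretapper access to the public channel, so its capacity is controlled by combining \cite[Theorem~4.1]{csiszar_secrecy_2008} with \cite[Lemma~5.1]{csiszar_secrecy_2008}.

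Next I would specialize to a fixed input distribution $P_{X_0}$. Since the broadcast channel is memoryless and its outputs are conditionally independent given the input, fixing $P_{X_0}$ induces an i.i.d. joint distribution on $(X_0,\dots,X_{m-1},X_\eve)$, which places the augmented system inside the source-type converse. That converse bounds the key rate by $H(X_{[0:m-1]}\mid X_\eve)$ minus the minimum rate $R_{\mathrm{CO}}$ of communication for omniscience among $[0:m-1]$ \emph{conditioned on} $X_\eve$, where $R_{\mathrm{CO}}=\min\sum_{i}R_i$ over the Slepian--Wolf-type region $\sum_{i\in B}R_i\ge H(X_B\mid X_{B^c},X_\eve)$ for all nonempty $B\subsetneq[0:m-1]$. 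Applying linear-programming duality to this region rewrites $R_{\mathrm{CO}}$ as $\max_{\lambda}\sum_B\lambda_B H(X_B\mid X_{B^c},X_\eve)$; because the conditional entropies are nonnegative, the dual weights can be pushed to the boundary $\sum_{B\ni i}\lambda_B=1$, which yields exactly the fractional-partition set $\Lambda([0:m-1])$ and the stated difference. Finally, since Alice controls the channel input, I maximize the single-letter bound over $P_{X_0}$, and the interchange of $\max_{P_{X_0}}$ and $\min_\lambda$ follows as in \cite[Theorem~4.1]{csiszar_secrecy_2008}.

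The step I expect to be the real obstacle is the single-letter converse itself, not the LP algebra. Alice's transmission at time $t$ may depend on the past public discussion $D^{t-1}$, and the discussion is interactive over many rounds, so one must show that this adaptivity and interactivity do not raise the key rate above the single-letter expression evaluated at some fixed $P_{X_0}$. Carefully bounding the mutual information $I(K_0;X_\eve^n,D^n)$ and the reconciliation cost while stripping the feedback dependence --- so that the conditional-entropy difference and the fractional-partition structure emerge in single-letter form --- is where the genuine work lies; once that converse is in place, the dummy-terminal reduction and the duality manipulation are routine.
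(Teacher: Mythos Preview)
Your proposal is correct and follows essentially the same route as the paper: the paper does not give a detailed proof but only sketches the argument in the paragraph preceding the theorem, namely, add a dummy terminal carrying Eve's observation, note that this cannot decrease the key rate, and then invoke \cite[Theorem~4.1]{csiszar_secrecy_2008} together with \cite[Lemma~5.1]{csiszar_secrecy_2008}. Your write-up simply expands on the LP-duality step and the single-letterization concern, both of which are handled in the cited Csisz\'ar--Narayan results rather than reproved here.
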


\begin{remark}
The upper bound stated in Theorem~\ref{thm:SecrecyUpBound-CsNa08} is not the best known upper bound for the secret key sharing capacity of the multi-terminal secret key sharing problem (see also \cite{gohari_information-theoretic_2010, gohari_information-theoretic_2010-1} for alternative improved bounds). However in this work, we use Theorem~\ref{thm:SecrecyUpBound-CsNa08} to derive an upper bound for our problem. This bound is good enough that in addition to the proposed achievability scheme, completely characterize the secret key sharing capacity for the 
``state-dependent deterministic channels'' scenario.
\end{remark}

Now, back to our problem where the channel from Alice to the other terminals are assumed to be independent, we can further simplify the upper bound given in Theorem~\ref{thm:SecrecyUpBound-CsNa08}, as 
stated in Theorem~\ref{thm:SecrecyUpBound-CsNa08-forIndpChnl}.

\begin{theorem}[See also \cite{siavoshani_group_2010}]\label{thm:SecrecyUpBound-CsNa08-forIndpChnl}
If the channels from Alice to the other terminals are independent, 
then the upper bound stated in Theorem~\ref{thm:SecrecyUpBound-CsNa08} for the 
SKG capacity is simplified to
\begin{align}
C_s &\le \max_{P_{X_0}} \min_{j\in[1:m-1]} I(X_0;X_j|X_\eve) \label{eq:SecrecyUpperBnd-CsNa08-Final-1} \\
 &\le \min_{j\in[1:m-1]} \max_{P_{X_0}} I(X_0;X_j|X_\eve). \label{eq:SecrecyUpperBnd-CsNa08-Final-2} 
\end{align}
\end{theorem}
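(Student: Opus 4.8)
The plan is to specialize the general minimax bound of Theorem~\ref{thm:SecrecyUpBound-CsNa08} by exhibiting, for each target terminal $j$, a single \emph{feasible} weight collection $\lambda\in\Lambda([0:m-1])$ whose objective value collapses exactly to $I(X_0;X_j|X_\eve)$; since the inner minimization can only improve on any fixed feasible $\lambda$, this already yields \eqref{eq:SecrecyUpperBnd-CsNa08-Final-1}. Write $\bar j\triangleq[0:m-1]\setminus\{j\}$ and abbreviate the bracketed objective in Theorem~\ref{thm:SecrecyUpBound-CsNa08} by $F(\lambda)$. For a fixed $j\in[1:m-1]$ I would take the bipartition weight $\lambda_{\{j\}}=\lambda_{\bar j}=1$ and $\lambda_B=0$ otherwise. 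First I would check feasibility: both $\{j\}$ and $\bar j$ are non-empty proper subsets of $[0:m-1]$ (using $m\ge 2$), and the balance condition \eqref{eq:LambdaPartitionDef} holds because index $j$ is covered only by $\{j\}$ while every other index is covered only by $\bar j$.

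Next I would evaluate $F$ at this choice. Since $\{j\}^c=\bar j$ and $\bar j^c=\{j\}$, the objective reads $H(X_{[0:m-1]}|X_\eve)-H(X_j|X_{\bar j},X_\eve)-H(X_{\bar j}|X_j,X_\eve)$. Expanding the first term by the chain rule as $H(X_{\bar j}|X_\eve)+H(X_j|X_{\bar j},X_\eve)$ cancels the middle term and leaves $H(X_{\bar j}|X_\eve)-H(X_{\bar j}|X_j,X_\eve)=I(X_j;X_{\bar j}|X_\eve)$.

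The crux is the reduction $I(X_j;X_{\bar j}|X_\eve)=I(X_0;X_j|X_\eve)$, and this is exactly where the independence hypothesis enters. The broadcast model makes the receivers' observations conditionally independent given Alice's input $X_0$; in particular $X_j$ is independent of the remaining legitimate observations and of $X_\eve$ once $X_0$ is fixed, so that $X_j-X_0-(X_{\bar j\setminus\{0\}},X_\eve)$ is a Markov chain. Splitting $X_{\bar j}=(X_0,X_{\bar j\setminus\{0\}})$ and applying the chain rule gives $I(X_j;X_{\bar j}|X_\eve)=I(X_j;X_0|X_\eve)+I(X_j;X_{\bar j\setminus\{0\}}|X_0,X_\eve)$, and the Markov property forces the last mutual information to vanish. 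Hence $\min_\lambda F(\lambda)\le F(\lambda)=I(X_0;X_j|X_\eve)$ for every $j$; minimizing the right-hand side over $j$ and then maximizing both sides over $P_{X_0}$ delivers \eqref{eq:SecrecyUpperBnd-CsNa08-Final-1}. Finally, \eqref{eq:SecrecyUpperBnd-CsNa08-Final-2} is the elementary max-min inequality $\max_{P_{X_0}}\min_j(\cdot)\le\min_j\max_{P_{X_0}}(\cdot)$, valid for any function of the pair $(P_{X_0},j)$ and requiring no further structure. I expect the only delicate point to be the careful justification of the conditional-independence step, namely that $X_j\perp(X_{\bar j\setminus\{0\}},X_\eve)\mid X_0$ implies $X_j\perp X_{\bar j\setminus\{0\}}\mid(X_0,X_\eve)$; everything else is bookkeeping with the chain rule and verifying feasibility of the chosen weights.
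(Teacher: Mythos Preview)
Your proposal is correct and uses essentially the same approach as the paper: both arguments plug the bipartition weight $\lambda_{\{j\}}=\lambda_{\bar j}=1$ into the bound of Theorem~\ref{thm:SecrecyUpBound-CsNa08} and then invoke the max--min inequality. The only cosmetic difference is the order of operations: the paper first exploits independence to rewrite the bracketed objective for \emph{arbitrary} feasible $\lambda$ as $\sum_{B\ni 0}\lambda_B\,I(X_0;X_{B^c}\mid X_\eve)$ and only then specializes $\lambda$, whereas you specialize $\lambda$ first and use independence afterwards to collapse $I(X_j;X_{\bar j}\mid X_\eve)$ to $I(X_0;X_j\mid X_\eve)$; the ``delicate point'' you flag is handled by the standard fact that $I(X_j;X_{\bar j\setminus\{0\}},X_\eve\mid X_0)=0$ forces $I(X_j;X_{\bar j\setminus\{0\}}\mid X_0,X_\eve)=0$.
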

\begin{proof}
For the proof refer to Appendix~\ref{apn:SomeProofs}.
\end{proof}


\begin{remark}
Using \cite[Theorem~7]{maurer_secret_1993} or \cite[Theorem~2]{ahlswede_common_1993}, we observe 
that the bound given in \eqref{eq:SecrecyUpperBnd-CsNa08-Final-2} is indeed tight for the two terminals problem where we have the Markov chains $X_\bob\leftrightarrow X_\alice \leftrightarrow X_\eve$, \ie,
when the channels are independent or $X_\alice \leftrightarrow X_\bob \leftrightarrow X_\eve$, \ie, when the channels are degraded. 
In \S\ref{sec:GrpSecKeyAgr_DetBrdcstChnl}, we will further show that the above upper bound is also tight for the 
stated-dependent deterministic broadcast channels.
\end{remark}

\section{Group Secret Key Agreement over Deterministic Broadcast Channels}\label{sec:GrpSecKeyAgr_DetBrdcstChnl}
In this section, we prove Theorem~\ref{thm:GrpScrtKey-DetBrdcstChnl-MainRslt} that characterizes the secret key generation capacity for a deterministic broadcast channel defined in \S\ref{sec:GrpSecKey-DetBrdcstChnl-ProblemStatement}.
The proof of this theorem, as an underlying machinery, uses the achievability technique for the packet erasure broadcast channel that is appeared in \cite{siavoshani_group_2010,argyraki_creating_2013} 
(for more details, also see  Appendix~\ref{sec:GrpSecKeyAgr_ErasureBrdcstChnl}). 

\subsection{Upper Bound for the Key Generation Capacity}
Using Theorem~\ref{thm:SecrecyUpBound-CsNa08-forIndpChnl}, the SKG capacity $C^{\mathsf{det}}_s$ for the independent broadcast channel can be upper bounded by \eqref{eq:SecrecyUpperBnd-CsNa08-Final-2}.
%
Then we have the following result, Theorem~\ref{thm:SecKeyGenCapacity_UpperBound_Deterministic}.
\begin{theorem}\label{thm:SecKeyGenCapacity_UpperBound_Deterministic}
The SKG capacity of the deterministic broadcast channel, introduced in \S\ref{sec:GrpSecKey-DetBrdcstChnl-ProblemStatement}, is upper bounded by 
\begin{equation*}
C^{\mathsf{det}}_s \le \sum_{i=1}^s \left[ \rank{\Bs{F}_i} - \rank{\Bs{F}_{i-1} }  \right] \theta_i(1-\theta_i) \log{q},
\end{equation*}
where $\theta_i=\sum_{j=0}^{i-1} \delta_j$.
\end{theorem}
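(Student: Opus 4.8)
The plan is to start from the simplified single-letter upper bound \eqref{eq:SecrecyUpperBnd-CsNa08-Final-2} of Theorem~\ref{thm:SecrecyUpBound-CsNa08-forIndpChnl}, namely $C^{\mathsf{det}}_s \le \min_{j} \max_{P_{X_0}} I(X_0; X_j | X_\eve)$, and to evaluate the inner mutual information explicitly for the deterministic model. Since the state distributions are symmetric across receivers, the value is identical for every $j$, so it suffices to bound $\max_{P_{X_\alice}} I(X_0; X_j | X_\eve)$ for a single legitimate receiver. Writing $X_j = (\Bs{F}_{S_j}X_\alice, S_j)$ and $X_\eve = (\Bs{F}_{S_\eve}X_\alice, S_\eve)$, I would first peel off the state variables: because $S_j$ is independent of $(X_\alice, S_\eve, \hat{X}_\eve)$ and $\hat{X}_j, \hat{X}_\eve$ are deterministic functions of $X_\alice$ once the states are fixed, conditioning on the realizations $S_j=a$, $S_\eve=b$ collapses the mutual information to
\begin{equation*}
I(X_0; X_j | X_\eve) = \sum_{a=0}^s\sum_{b=0}^s \delta_a\delta_b\, H(\Bs{F}_a X_\alice \mid \Bs{F}_b X_\alice).
\end{equation*}

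The next step exploits the nested kernel chain \eqref{eq:TransMatrixNested_Structure_1}. When $a \ge b$ we have $\ker\Bs{F}_a \subseteq \ker\Bs{F}_b$, so $\Bs{F}_b X_\alice$ is a deterministic function of $\Bs{F}_a X_\alice$ and $H(\Bs{F}_a X_\alice \mid \Bs{F}_b X_\alice) = H(\Bs{F}_a X_\alice) - H(\Bs{F}_b X_\alice)$; when $a < b$ the reversed inclusion forces this term to vanish. Abbreviating $H_i \triangleq H(\Bs{F}_i X_\alice)$ and telescoping $H_a - H_b = \sum_{i=b+1}^a (H_i - H_{i-1})$, I would swap the order of summation so that, for each $i$, the constraint $b < i \le a$ contributes the factor $\big(\sum_{a\ge i}\delta_a\big)\big(\sum_{b\le i-1}\delta_b\big) = (1-\theta_i)\theta_i$. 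This yields the clean identity
\begin{equation*}
I(X_0; X_j | X_\eve) = \sum_{i=1}^s (1-\theta_i)\theta_i\, H(\Bs{F}_i X_\alice \mid \Bs{F}_{i-1} X_\alice),
\end{equation*}
which holds for every input distribution.

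It then remains to bound each increment $H(\Bs{F}_i X_\alice \mid \Bs{F}_{i-1} X_\alice)$. Given the value of $\Bs{F}_{i-1} X_\alice$, the vector $X_\alice$ lies in a coset of $\ker\Bs{F}_{i-1}$, on which $\Bs{F}_i X_\alice$ ranges over a coset of $\Bs{F}_i(\ker\Bs{F}_{i-1})$, a space of dimension $\dim\ker\Bs{F}_{i-1} - \dim\ker\Bs{F}_i = \rank\Bs{F}_i - \rank\Bs{F}_{i-1}$ (using $\ker\Bs{F}_i\subseteq\ker\Bs{F}_{i-1}$). Hence $H(\Bs{F}_i X_\alice \mid \Bs{F}_{i-1}X_\alice) \le [\rank\Bs{F}_i - \rank\Bs{F}_{i-1}]\log q$, with equality for the uniform input. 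Since $(1-\theta_i)\theta_i \ge 0$, substituting this bound term-by-term and taking $\max_{P_{X_0}}$ (followed by the harmless $\min_j$) gives exactly $\sum_{i=1}^s[\rank\Bs{F}_i - \rank\Bs{F}_{i-1}]\theta_i(1-\theta_i)\log q$, completing the proof.

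The main obstacle I anticipate is the bookkeeping in the first two steps: justifying carefully that the state variables drop out of the mutual information and that every cross term with $a<b$ genuinely vanishes, so that the telescoping reorganization is an exact identity rather than merely an inequality. The closing entropy estimate is a routine linear-algebraic dimension count that relies only on the kernel nesting \eqref{eq:TransMatrixNested_Structure_1}; the rank condition \eqref{eq:TransMatrixNested_Structure_2} is not needed here and will instead be used for the matching achievability.
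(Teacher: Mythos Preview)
Your proposal is correct and follows the same overall architecture as the paper: reduce to $\max_{P_{X_\alice}} I(X_\alice;X_\bob|X_\eve)$ via Theorem~\ref{thm:SecrecyUpBound-CsNa08-forIndpChnl} and symmetry, rewrite this mutual information as $\sum_{i=1}^s \theta_i(1-\theta_i)\,H(\Bs{F}_i X_\alice\mid \Bs{F}_{i-1}X_\alice)$, and then bound each entropy increment. The paper reaches the same identity by the alternative decomposition $I(X_\alice;X_\bob|X_\eve)=H(X_\alice|X_\eve)-H(X_\alice|X_\bob,X_\eve)$ and a change of summation order, rather than your direct state-conditioning $\sum_{a,b}\delta_a\delta_b H(\Bs{F}_a X_\alice\mid \Bs{F}_b X_\alice)$; both routes are equally short.

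The one substantive difference is the final step. The paper writes $H(\Bs{F}_i X_\alice\mid \Bs{F}_{i-1}X_\alice)=H\big((\Bs{F}_i-\Bs{F}_{i-1})X_\alice\mid \Bs{F}_{i-1}X_\alice\big)\le \rank(\Bs{F}_i-\Bs{F}_{i-1})\log q$ and then invokes the extra structural hypothesis \eqref{eq:TransMatrixNested_Structure_2} to identify $\rank(\Bs{F}_i-\Bs{F}_{i-1})$ with $\rank\Bs{F}_i-\rank\Bs{F}_{i-1}$. Your coset/dimension argument shows directly that $H(\Bs{F}_i X_\alice\mid \Bs{F}_{i-1}X_\alice)\le \dim\Bs{F}_i(\ker\Bs{F}_{i-1})\log q=(\rank\Bs{F}_i-\rank\Bs{F}_{i-1})\log q$ using only the kernel nesting \eqref{eq:TransMatrixNested_Structure_1}. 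So your observation that \eqref{eq:TransMatrixNested_Structure_2} is not needed for the upper bound is correct and is a mild strengthening over the paper's argument.
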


\begin{proof}
From \eqref{eq:SecrecyUpperBnd-CsNa08-Final-2} and because of the symmetry of the problem, we have
$C^{\mathsf{det}}_s \le \max_{P_{{X}_\alice}} I({X}_\alice; {X}_\bob| {X}_\eve)$
where we use ``$\alice$'' and ``$\bob$'' to denote for terminal~$0$ and terminal~$1$. Then, we can write
\begin{align*}
H({X}_\alice| {X}_\eve) &= \sum_{i=0}^{s-1} \delta_i \left[ H({X}_\alice|\hat{X}_\eve,S_\eve=i) \right] \nonumber\\
&= \sum_{i=0}^{s-1} \delta_i \left[ H({X}_\alice,\hat{X}_\eve | S_\eve=i) - H(\hat{X}_\eve|S_\eve=i) \right] \nonumber\\
&= \sum_{i=0}^{s-1} \delta_i \left[ H({X}_\alice)- H(\Bs{F}_i {X}_\alice) \right],
\end{align*}
and similarly
\begin{align*}
H({X}_\alice| {X}_\bob, {X}_\eve)= 
\sum_{i=0}^{s-1} \kappa_i \left[ H({X}_\alice)- H(\Bs{F}_i {X}_\alice) \right],
\end{align*}
where $\kappa_i\triangleq 2\delta_i(\delta_0+\cdots+\delta_{i-1})\1_{\{i>0\}} +\delta_i^2$.
Thus, we have
$I({X}_\alice;{X}_\bob| {X}_\eve) = \sum_{i=0}^{s-1} \rho_i  \left[H({X}_\alice)- H(\Bs{F}_i {X}_\alice) \right]$
where $\rho_i \triangleq \delta_i- \kappa_i$.
Now, by observing that 
$H(\Bs{F}_i {X}_\alice)=H(\Bs{F}_i {X}_\alice, \Bs{F}_{i-1} {X}_\alice)$
and applying the chain rule recursively, we get
$H(\Bs{F}_i {X}_\alice) = \sum_{j=1}^i H(\Bs{F}_j {X}_\alice | \Bs{F}_{j-1} {X}_\alice)$.
So $I({X}_\alice; {X}_\bob| {X}_\eve)$ can be expanded as follows
\begin{align*}
I({X}_\alice; {X}_\bob| {X}_\eve) &= \sum_{i=0}^{s-1} \rho_i [H({X}_\alice) - H(\Bs{F}_i {X}_\alice)] \nonumber\\
  &= \sum_{j=1}^s H(\Bs{F}_j {X}_\alice | \Bs{F}_{j-1} {X}_\alice) \sum_{i=0}^{j-1} \rho_i.
\end{align*}
Hence we can upper bound $C^{\mathsf{det}}_s$ as follows 
{\allowdisplaybreaks[4]
\begin{align}\label{eq:UpperBound_Final}
C^{\mathsf{det}}_s 
&\le \max_{P_{{X}_\alice}} \sum_{j=1}^s H(\Bs{F}_j {X}_\alice| \Bs{F}_{j-1} {X}_\alice) \sum_{i=0}^{j-1} \rho_i \nonumber\\
&= \max_{P_{{X}_\alice}} \sum_{j=1}^s H\left([\Bs{F}_j - \Bs{F}_{j-1} ] {X}_\alice| \Bs{F}_{j-1} {X}_\alice \right) \sum_{i=0}^{j-1} \rho_i \nonumber\\
&\stackrel{\text{(a)}}{\le} \max_{P_{{X}_\alice}} \sum_{j=1}^s H\left([\Bs{F}_j - \Bs{F}_{j-1} ] {X}_\alice \right) \sum_{i=0}^{j-1} \rho_i \nonumber\\
&\stackrel{\text{(b)}}{=} \sum_{j=1}^s \rank\left(\Bs{F}_j - \Bs{F}_{j-1} \right)  \left( \sum_{i=0}^{j-1} \rho_i \right) \log{q} \nonumber\\
&\stackrel{\text{(c)}}{=} \sum_{j=1}^s \left[ \rank{\Bs{F}_j} - \rank{\Bs{F}_{j-1} }  \right] \left( \sum_{i=0}^{j-1} \rho_i \right) \log{q}, 
\end{align}}\hspace{-5pt}
where (a) is true because conditioning reduces the entropy,
(b) is true because uniform distribution on ${X}_\alice$ achieves the maximum
values for all the entropies in the summation, and finally (c) is
true because of the assumption we have made in \eqref{eq:TransMatrixNested_Structure_2}.
Also, note that
$\sum_{i=0}^{j-1} \rho_i = \theta_j(1-\theta_j)\ge 0,$
where $\theta_j \triangleq \sum_{i=0}^{j-1} \delta_i$. This completes the proof.
\end{proof}

\subsection{Lower Bound for the Key Generation Capacity}
In this section, we will present a scheme that achieves the same secret key generation rate as we derived in the upper bound stated in Theorem~\ref{thm:SecKeyGenCapacity_UpperBound_Deterministic}.
But before that, let us state the following proposition.
\begin{proposition}\label{prop:Kernel_DirectSum_DetAchvScheme}
Suppose $s$ subspaces $\ker \Bs{F}_i\subseteq\Fbb_q^L$ satisfy the nested condition \eqref{eq:TransMatrixNested_Structure_1}, \ie, $\vec{0}=\ker\Bs{F}_s\subset \ker\Bs{F}_{s-1} \subset \cdots\subset \ker\Bs{F}_0=\Fbb_q^\pktlen$. Then it is possible to find subspaces $\Pi_1,\ldots,\Pi_s,$ such that $\cap_{i\in\mc{V}} \Pi_i=\vec{0}$ for all $\set{V} \subseteq[1:s]$ where $|\set{V}|\ge 2$ and they also satisfy
\begin{gather}
\Pi_1 \oplus \ker\Bs{F}_1 = \Fbb_q^\pktlen, \nonumber\\
\Pi_2\oplus\Pi_1 \oplus \ker\Bs{F}_2 = \Fbb_q^\pktlen, \nonumber\\
\vdots \nonumber\\
\Pi_s\oplus\cdots\oplus\Pi_1 \oplus \ker\Bs{F}_s = \Fbb_q^\pktlen \label{eq:Kernel_DirectSum}
\end{gather}
where ``$\oplus$'' is the direct sum of two disjoint subspaces. 
Moreover for $i\in [1:s]$ we have $\dim\Pi_i=\rank\Bs{F}_i - \rank\Bs{F}_{i-1}$. For more clarification, Figure~\ref{fig:Kernel_of_F} demonstrates the proposition.
\end{proposition}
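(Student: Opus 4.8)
The plan is to construct the $\Pi_i$ explicitly as complements taken step by step along the filtration of kernels, and then to observe that both the direct-sum chain \eqref{eq:Kernel_DirectSum} and the pairwise-intersection condition fall out essentially for free from that construction. The guiding picture is that a complete flag $\ker\Bs{F}_0 \supset \cdots \supset \ker\Bs{F}_s = \vec{0}$ induces a splitting of $\Fbb_q^\pktlen$ into its successive graded pieces, and the $\Pi_i$ are exactly those pieces.

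First I would fix, for each $i\in[1:s]$, the nested inclusion $\ker\Bs{F}_i \subseteq \ker\Bs{F}_{i-1}$ coming from \eqref{eq:TransMatrixNested_Structure_1}, and choose $\Pi_i$ to be \emph{any} vector-space complement of $\ker\Bs{F}_i$ inside $\ker\Bs{F}_{i-1}$, i.e. a subspace with $\ker\Bs{F}_{i-1} = \Pi_i \oplus \ker\Bs{F}_i$; such a complement exists because we work over a field. The dimension claim is then immediate from rank-nullity, $\dim\Pi_i = \dim\ker\Bs{F}_{i-1} - \dim\ker\Bs{F}_i = (\pktlen - \rank\Bs{F}_{i-1}) - (\pktlen - \rank\Bs{F}_i) = \rank\Bs{F}_i - \rank\Bs{F}_{i-1}$, with no appeal to \eqref{eq:TransMatrixNested_Structure_2} needed.

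Next I would establish \eqref{eq:Kernel_DirectSum} by induction on $j$, proving $\Pi_j \oplus \cdots \oplus \Pi_1 \oplus \ker\Bs{F}_j = \Fbb_q^\pktlen$. The base case $j=1$ is just the defining relation $\Pi_1 \oplus \ker\Bs{F}_1 = \ker\Bs{F}_0 = \Fbb_q^\pktlen$, and the inductive step substitutes the refinement $\ker\Bs{F}_{j-1} = \Pi_j \oplus \ker\Bs{F}_j$ into the $(j-1)$-th identity. Rather than track linear independence of bases by hand, I would confirm directness of the enlarged sum by a dimension count: the listed subspaces span $\Fbb_q^\pktlen$, and $\sum_{i=1}^{j}\dim\Pi_i + \dim\ker\Bs{F}_j = \sum_{i=1}^{j}(\rank\Bs{F}_i - \rank\Bs{F}_{i-1}) + (\pktlen - \rank\Bs{F}_j) = \pktlen$ since $\rank\Bs{F}_0 = 0$; spanning together with the matching dimension total forces the sum to be direct. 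Finally, specializing to $j=s$ and using $\ker\Bs{F}_s = \vec{0}$ collapses the chain to the internal direct-sum decomposition $\Pi_s \oplus \cdots \oplus \Pi_1 = \Fbb_q^\pktlen$ of the whole space, whence distinct summands meet only in $\vec{0}$, so that $\cap_{i\in\set{V}}\Pi_i \subseteq \Pi_{i_1}\cap\Pi_{i_2} = \vec{0}$ for any two distinct $i_1,i_2 \in \set{V}$ whenever $|\set{V}|\ge 2$.

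I do not expect a genuine obstacle here; the one point deserving care is verifying that the sums stay direct as the filtration is refined, which I would handle uniformly through the dimension count above rather than by manipulating explicit bases. The conceptual content is simply that every property requested is a restatement of the single direct-sum splitting of $\Fbb_q^\pktlen$ induced by the flag of kernels.
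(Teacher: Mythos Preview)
Your argument is correct. The paper itself does not supply a proof of this proposition; it merely states it (with Figure~\ref{fig:Kernel_of_F} for intuition) and immediately uses it in the achievability scheme, so there is nothing substantive to compare your approach against. Your construction---choosing each $\Pi_i$ as a complement of $\ker\Bs{F}_i$ inside $\ker\Bs{F}_{i-1}$ and propagating the direct-sum decomposition by induction with a dimension count---is exactly the natural way to fill in the omitted details, and your observation that the pairwise-intersection condition is automatic from the global direct sum $\Pi_s\oplus\cdots\oplus\Pi_1=\Fbb_q^\pktlen$ is the right way to close the argument.
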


\begin{figure}
\centering
\includegraphics[scale=0.8]{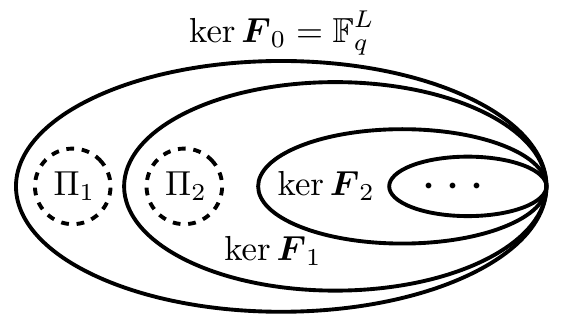}
\caption{Demonstration of Proposition~\ref{prop:Kernel_DirectSum_DetAchvScheme}. Here, we have $\vec{0}=\ker\Bs{F}_s\subset \ker\Bs{F}_{s-1} \subset \cdots\subset \ker\Bs{F}_0=\Fbb_q^\pktlen$ and $\Pi_1,\ldots,\Pi_s$ satisfy \eqref{eq:Kernel_DirectSum}.}
\label{fig:Kernel_of_F}
\end{figure}


In our proposed achievability scheme, Alice uses superposition coding
where she creates a vector
\begin{equation}\label{eq:SuperPosCoding}
{X}_\alice[t] = {X}_{\alice, 1}[t]+\cdots + {X}_{\alice, s}[t],
\end{equation}
such that ${X}_{\alice, i}[t]\in\Pi_i$. Because of \eqref{eq:Kernel_DirectSum},
$\{\Pi_1,\ldots,\Pi_s\}$ form a basis for $\Fbb_q^\pktlen$ so every vector
${X}_\alice[t] \in\Fbb_q^\pktlen$ can be uniquely decomposed as \eqref{eq:SuperPosCoding}.  
Now each ${X}_{\alice, i}[t] \in\Pi_i$ can be considered as a vector that
is transmitted by Alice and will be received independently 
by each trusted terminal or Eve with erasure probability 
$\theta_i = \sum_{j=0}^{i-1} \delta_i$.
Note that the vector ${X}_{\alice, i}[t]$ is 
correctly received by the $r$th receiver only if $S_r\ge i$.
 
So we may view the broadcast channel from Alice to the rest of terminals
as $s$ independent \emph{packet erasure channels}; where $\Pi_i$ is the set
of messages transmitted over the $i$th channel (layer) and the erasure
probability of the $i$th channel is $\theta_i$.

Then we can proceed as follows. On the $k$th layer, we run independently the scheme propose in \cite{siavoshani_group_2010,argyraki_creating_2013}
for the secret key sharing problem over an erasure broadcast channel (see also Appendix~\ref{sec:GrpSecKeyAgr_ErasureBrdcstChnl} for more details). Then, we can state the following result.

\begin{theorem}\label{thm:AchvAecrecyRate-DeterministicChnl}
The achievable SKG rate of the above scheme for each layer $k$ is given by
${R}^{\mathsf{det}}_k = (1-\theta_k)\theta_k \dim(\Pi_k)\log{q}$.
So for the total achievable secrecy rate we have
\begin{align*}
{R}^{\mathsf{det}}_s &= \sum_{i=1}^s \theta_i (1-\theta_i)  \dim{\left(\Pi_i\right)} \log{q} \nonumber\\
&= \sum_{i=1}^s \left[ \rank{\Bs{F}_i} - \rank{\Bs{F}_{i-1} }  \right] \theta_i (1-\theta_i) \log{q}.
\end{align*}
\end{theorem}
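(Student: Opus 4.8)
The plan is to verify that the superposition scheme of \eqref{eq:SuperPosCoding} genuinely decouples the broadcast channel into $s$ parallel packet erasure channels, to apply the erasure-channel SKG achievability of \cite{siavoshani_group_2010,argyraki_creating_2013} (reviewed in Appendix~\ref{sec:GrpSecKeyAgr_ErasureBrdcstChnl}) to each layer in isolation, and then to show that the per-layer keys combine additively. First I would make the decoupling precise. Using the direct-sum structure \eqref{eq:Kernel_DirectSum} of Proposition~\ref{prop:Kernel_DirectSum_DetAchvScheme}, I would fix the particular realization in which $\Pi_i\oplus\ker\Bs{F}_i=\ker\Bs{F}_{i-1}$, so that $\Pi_j\subseteq\ker\Bs{F}_{j-1}\subseteq\ker\Bs{F}_i$ for every $j>i$. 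Consequently a receiver $r$ in state $S_r=i$ observes $\Bs{F}_i{X}_\alice=\Bs{F}_i({X}_{\alice,1}+\cdots+{X}_{\alice,i})$, in which the higher layers vanish, while $\Bs{F}_i$ is injective on the complement $\Pi_1\oplus\cdots\oplus\Pi_i$ of $\ker\Bs{F}_i$; hence $r$ recovers exactly ${X}_{\alice,1},\ldots,{X}_{\alice,i}$ and learns nothing about the remaining layers. Thus layer $i$ behaves as a packet erasure broadcast channel whose packet carries $\dim(\Pi_i)\log q$ bits and whose erasure probability, identical for every trusted terminal and for Eve by symmetry, is $\Prob{S_r<i}=\theta_i$.

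Next I would invoke the erasure-channel result on a single layer. For a symmetric packet erasure broadcast channel with erasure probability $\theta_k$ and packet size $\dim(\Pi_k)\log q$, the group scheme of \cite{siavoshani_group_2010,argyraki_creating_2013}, which uses network-coding reconciliation so that all $m$ honest nodes agree while Eve stays ignorant, achieves a key rate equal to the fraction of packets received by a legitimate node but missed by Eve, namely $(1-\theta_k)\theta_k\dim(\Pi_k)\log q$. This yields the claimed per-layer rate ${R}^{\mathsf{det}}_k$.

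The main obstacle, and the step I would treat most carefully, is justifying that the $s$ layer-protocols can be run in parallel and their keys concatenated without loss of rate or secrecy. The layers are \emph{not} jointly independent erasure channels: at a single receiver every layer-erasure is governed by the one state $S_r$, so they are correlated across $i$. I would resolve this by conditioning on the full state sequences $\{S_r^n\}_r$ (including $S_\eve^n$): given the states, each layer's erasure pattern is fixed, and since the messages sent on the disjoint subspaces $\Pi_1,\ldots,\Pi_s$ and their associated public discussions are drawn independently, the per-layer keys $K_1,\ldots,K_s$ together with Eve's per-layer views are mutually independent conditioned on the states. Hence $I(K_{1:s};\text{Eve's view}\mid\{S_r^n\})=\sum_{i=1}^s I(K_i;\text{Eve's layer-}i\text{ view}\mid\{S_r^n\})$, each term being driven to zero by the per-layer guarantee, so that averaging over the states preserves \eqref{eq:AchvRate_Cond2}, while reliability \eqref{eq:AchvRate_Cond1} holds layerwise. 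Finally, summing the per-layer rates and substituting $\dim(\Pi_i)=\rank\Bs{F}_i-\rank\Bs{F}_{i-1}$ from Proposition~\ref{prop:Kernel_DirectSum_DetAchvScheme} gives ${R}^{\mathsf{det}}_s$, which coincides with the upper bound of Theorem~\ref{thm:SecKeyGenCapacity_UpperBound_Deterministic} and thereby completes the proof of Theorem~\ref{thm:GrpScrtKey-DetBrdcstChnl-MainRslt}.
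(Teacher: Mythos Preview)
Your proposal is correct and follows the same high-level route the paper intends: decompose via \eqref{eq:SuperPosCoding} into $s$ packet-erasure layers, apply the erasure scheme of Appendix~\ref{sec:GrpSecKeyAgr_ErasureBrdcstChnl} to each, and sum. In fact the paper offers essentially no proof of Theorem~\ref{thm:AchvAecrecyRate-DeterministicChnl} beyond the paragraph preceding it; it simply asserts that each layer ``can be considered as a vector \ldots\ received independently \ldots\ with erasure probability $\theta_i$'' and then states the result. So you are not deviating from the paper---you are supplying the justification it omits.

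Two remarks on the extra details you add. First, your specific choice $\Pi_i\oplus\ker\Bs{F}_i=\ker\Bs{F}_{i-1}$ (so that $\Pi_i\subset\ker\Bs{F}_{i-1}$) is a legitimate realization of Proposition~\ref{prop:Kernel_DirectSum_DetAchvScheme} and is precisely the structure suggested by Figure~\ref{fig:Kernel_of_F}; it is what makes the claim ``a receiver in state $i$ recovers exactly layers $1,\ldots,i$ and nothing else'' literally true, whereas for an arbitrary complement the higher layers need not lie in $\ker\Bs{F}_i$. The paper leaves this implicit. Second, your observation that the per-receiver erasure events across layers are \emph{not} independent (they are all determined by the single state $S_r[t]$) is a genuine point the paper glosses over with the phrase ``$s$ independent packet erasure channels.'' Your resolution---condition on the full state sequences, after which the layer contents, discussions, and keys are mutually independent because the $\{{X}_{\alice,i}^n\}_i$ are drawn independently---is the right idea. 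One small refinement: the per-layer secrecy in Appendix~\ref{sec:GrpSecKeyAgr_ErasureBrdcstChnl} is ultimately established via concentration on the number of packets Eve receives (Lemma~\ref{lemma_linear_1} is worst-case in $\Bs{A}_\eve$, and Lemma~\ref{lem:ConcentrationLemma} handles the typical set of state realizations), so the conditional mutual information is small only on a high-probability set of state sequences; averaging then gives \eqref{eq:AchvRate_Cond2}. With that caveat your argument goes through, and the substitution $\dim\Pi_i=\rank\Bs{F}_i-\rank\Bs{F}_{i-1}$ finishes exactly as you say.
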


Observe that this matches the upper bound stated in Theorem~\ref{thm:SecKeyGenCapacity_UpperBound_Deterministic}, and therefore yields a characterization of the group key-agreement rate for deterministic channels, \ie, this completes the proof of Theorem~\ref{thm:GrpScrtKey-DetBrdcstChnl-MainRslt}. 

\begin{remark}
This result can be easily extended to the asymmetric case where the channels to the legitimate users are not statistically identical (but still independent). Moreover, notice that the key-generation rate is the same for any $m\geq 2$. This is in fact similar to the erasure channel case \cite{siavoshani_group_2010} (see also Appendix~\ref{sec:GrpSecKeyAgr_ErasureBrdcstChnl}),
where the critical difference between $m=2$ and $m>2$ is that the key-reconciliation necessitated the use of ideas from NC.
\end{remark}

\section{Group Secret Key Agreement over State-dependent Gaussian Broadcast Channels}\label{sec:GrpSecKeyAgr_GaussBrdcstChnl}
In this section, by using the results derived in the previous sections, we will study the secret key generation capacity among multiple terminals having access to a state-dependent Gaussian broadcast channel. We will derive upper and lower bounds for the secret key generation capacity. Although, the proposed bounds are not matched in general, we will show that they will match in the high-dynamic range, high-SNR regime in a degree of freedom sense.

\subsection{Upper Bound for the Key Generation Capacity}
In order to upper bound the secrecy capacity for the Gaussian broadcast channel, we cannot apply the result of 
Theorem~\ref{thm:SecrecyUpBound-CsNa08-forIndpChnl} directly because
this result has been derived under the assumption that the transmitted
and received symbols are discreet. However, the work in \cite{chan_multiterminal_2014} 
has extended the results of \cite{csiszar_secrecy_2008} for continuous channels.
So by using \cite[Theorem~3.2]{chan_multiterminal_2014}, we can write an 
upper bound for the secrecy capacity similar to Theorem~\ref{thm:SecrecyUpBound-CsNa08} with the addition of a power constraint over the transmitted symbols.
Then we can state the following result, as stated in Theorem~\ref{thm:SecKeyGenCapacity_UpperBound_Gaussian}.

\begin{theorem}\label{thm:SecKeyGenCapacity_UpperBound_Gaussian}
The key generation capacity of the Gaussian broadcast channel 
given in \eqref{eq:GaussianChannelModel}
using public discussions is upper bounded as follows
\begin{equation}
C^{\mathsf{gaus}}_s \le \frac{1}{2} L \sum_{i=0}^{s}\sum_{j=0}^{s} \delta_i\delta_j  \log \left(1 + \frac{h_i P_{\mathrm{max}}}{1 + h_j P_{\mathrm{max}}} \right).
\end{equation}
\end{theorem}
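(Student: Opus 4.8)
The plan is to specialize the independent-channel upper bound of Theorem~\ref{thm:SecrecyUpBound-CsNa08-forIndpChnl} to the Gaussian state model. Since that theorem was stated for discrete alphabets, I would first invoke its continuous-alphabet counterpart through \cite[Theorem~3.2]{chan_multiterminal_2014}, carrying along the average-power constraint $\frac{1}{L}\Expc{\|{X}_\alice\|^2}\le P_{\mathrm{max}}$. By the symmetry of the model (all legitimate receivers are statistically identical), the minimization over $j$ in \eqref{eq:SecrecyUpperBnd-CsNa08-Final-2} is vacuous and the bound collapses to
\[
C^{\mathsf{gaus}}_s \le \max_{P_{{X}_\alice}} I({X}_\alice; {X}_\bob \mid {X}_\eve),
\]
where ${X}_\bob=(\hat{X}_\bob,S_\bob)$ and ${X}_\eve=(\hat{X}_\eve,S_\eve)$ are the composite received vectors.

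Next I would strip off the state variables and reduce to a per-state-pair quantity. Because $S_\bob$ is independent of $({X}_\alice,\hat{X}_\eve,S_\eve)$, the chain rule gives $I({X}_\alice;S_\bob\mid\hat{X}_\eve,S_\eve)=0$, hence $I({X}_\alice;{X}_\bob\mid{X}_\eve)=I({X}_\alice;\hat{X}_\bob\mid S_\bob,\hat{X}_\eve,S_\eve)$. Conditioning on the independent states, with $\Prob{S_\bob=i,S_\eve=j}=\delta_i\delta_j$, turns this into $\sum_{i,j}\delta_i\delta_j\, I({X}_\alice;\sqrt{h_i}{X}_\alice+{Z}_\bob \mid \sqrt{h_j}{X}_\alice+{Z}_\eve)$, where the \emph{same} input law appears in every summand. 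Pulling the maximization inside the sum (the maximum of a sum is at most the sum of the maxima) upper-bounds $C^{\mathsf{gaus}}_s$ by $\sum_{i,j}\delta_i\delta_j \max_{P_{{X}_\alice}} I_{ij}$, with $I_{ij}$ the conditional mutual information of a fixed-gain Gaussian channel under the power constraint.

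The crux is then the single-pair extremal problem: for independent standard-Gaussian noises ${Z}_\bob,{Z}_\eve$ and gains $a=\sqrt{h_i}$, $b=\sqrt{h_j}$, show that
\[
\max_{\frac{1}{L}\Expc{\|{X}\|^2}\le P_{\mathrm{max}}} I\left({X};\, a{X}+{Z}_\bob \mid b{X}+{Z}_\eve\right) = \tfrac{1}{2}L\log\left(1+\tfrac{h_iP_{\mathrm{max}}}{1+h_jP_{\mathrm{max}}}\right),
\]
attained by ${X}\sim N(\vec{0},P_{\mathrm{max}}\Bs{I}_L)$. Writing the objective as $h(a{X}+{Z}_\bob\mid b{X}+{Z}_\eve)-h({Z}_\bob)$ and then as $h(a{X}+{Z}_\bob,\, b{X}+{Z}_\eve)-h(b{X}+{Z}_\eve)-h({Z}_\bob)$, I recognize it as the difference between a (vector) channel-output entropy and a scalar channel-output entropy subject to a second-moment constraint. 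I expect this Gaussian-optimality step to be the main obstacle: the naive route of bounding only the joint term by the Gaussian maximum-entropy value, while lower-bounding $h(b{X}+{Z}_\eve)$ separately, is too loose (it discards the $1+h_jP_{\mathrm{max}}$ denominator and merely yields $\tfrac12\log(1+(h_i+h_j)P_{\mathrm{max}})$), because the two entropies are coupled through the \emph{same} ${X}$. I would therefore establish the tight Gaussian maximizer either via the entropy-power inequality in the Bergmans style, or more cleanly by invoking the vector extremal inequality of Liu--Viswanath, which is tailored to exactly such differences of entropies under a covariance constraint.

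Finally I would assemble the pieces. Since the noises are i.i.d.\ across the $L$ coordinates, the per-pair conditional mutual information is additive over dimensions and a white input is optimal, so the scalar value $\tfrac12\log(1+\tfrac{a^2P_{\mathrm{max}}}{1+b^2P_{\mathrm{max}}})$ is multiplied by $L$; summing against the weights $\delta_i\delta_j$ over all $i,j\in[0:s]$ then recovers exactly $\frac{1}{2}L\sum_{i,j}\delta_i\delta_j\log(1+\tfrac{h_iP_{\mathrm{max}}}{1+h_jP_{\mathrm{max}}})$, which completes the bound.
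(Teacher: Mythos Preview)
Your outline is correct and would yield the theorem, but it works harder than the paper does at the one place you flag as the obstacle.

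The paper follows your first two paragraphs verbatim: invoke the continuous-alphabet version via \cite{chan_multiterminal_2014}, reduce by symmetry to $\sup_{P_{X_\alice}} I(X_\alice;X_\bob\mid X_\eve)$, and condition on the independent state pair $(S_\bob,S_\eve)$. From there it does \emph{not} pull the supremum inside the sum, and it does not call on the EPI or an extremal inequality. It simply uses the conditional maximum-entropy bound
\[
h\big(\sqrt{h_i}X_\alice+Z_\bob \,\big|\, \sqrt{h_j}X_\alice+Z_\eve\big)\ \le\ \tfrac12\log\Big[(2\pi e)^L\det K_{Y|Z}\Big],
\]
where $K_{Y|Z}=K_Y-K_{YZ}K_Z^{-1}K_{ZY}$ is the linear-MMSE error covariance. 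The point is that $K_{Y|Z}$ depends on the input only through its covariance, so the ``coupling through the same $X$'' that you worry about is already captured: in the scalar case $K_{Y|Z}=\frac{1+(h_i+h_j)P}{1+h_jP}$, and subtracting $h(Z_\bob)$ immediately gives $\tfrac12\log\bigl(1+\tfrac{h_iP}{1+h_jP}\bigr)$ with the correct denominator. Summing over $(i,j)$ and noting that each such term is increasing in the per-coordinate input variance (and concave, so equal allocation across the $L$ i.i.d.\ coordinates is optimal under the trace constraint) finishes the proof with $P=P_{\max}$.

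So the ``naive'' route you discard is in fact the right one once you bound the conditional entropy directly rather than splitting it as $h(Y,Z)-h(Z)$ and handling the two pieces separately. Your EPI/Liu--Viswanath plan is sound but unnecessary; it buys generality you don't need here, while the paper's argument is a two-line application of $h(Y\mid Z)\le h(Y-AZ)$ with the optimal linear estimator.
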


\begin{proof}
Using \cite[Theorem~3.2]{chan_multiterminal_2014} and by proceeding similar steps to the proof of Theorem~\ref{thm:SecrecyUpBound-CsNa08-forIndpChnl} (see Appendix~\ref{apn:SomeProofs}), we can write
\begin{align*}
C^{\mathsf{gaus}}_s &\le \min_{j\in[1:m-1]} \sup_{\begin{subarray}{c} P_{X_0}:\\ \Expc{\|X_0 \|^2} \le L P_{\mathrm{max}} \end{subarray}} I({X}_0; {X}_j| {X}_\eve) \nonumber \\
  & \stackrel{\text{(a)}}{=} \sup_{\begin{subarray}{c} P_{X_\alice}:\\ \Expc{\|X_\alice \|^2} \le L P_{\mathrm{max}} \end{subarray}} I({X}_\alice; {X}_\bob| {X}_\eve),
\end{align*}
where (a) is true because of the symmetry.
Hence, there exists an input distribution $P_{X_\alice}$ such that $\Expc{\|X_\alice \|^2} \le L P_{\mathrm{max}}$ where the secrecy capacity is upper bounded as follows
{\allowdisplaybreaks[4]
\begin{align*}
C^{\mathsf{gaus}}_s &\le I({X}_\alice; {X}_\bob| {X}_\eve) \nonumber\\
  &= I({X}_\alice; \hat{{X}}_\bob,S_\bob|\hat{{X}}_\eve,S_\eve) \nonumber\\
  &= H(\hat{X}_\bob,S_\bob|\hat{X}_\eve,S_\eve) - H(\hat{X}_\bob,S_\bob |\hat{X}_\eve,S_\eve , X_\alice)\nonumber\\
  &\stackrel{\text{(a)}} = H(\hat{X}_\bob,S_\bob|\hat{X}_\eve,S_\eve) - H(\hat{X}_\bob,S_\bob| X_\alice)\nonumber\\
  &= H(\hat{X}_\bob,S_\bob| \hat{X}_\eve,S_\eve) - H(S_\bob| X_\alice) - H(\hat{X}_\bob|S_\bob, X_\alice)\nonumber\\
  &\stackrel{\text{(b)}} = H(\hat{X}_\bob,S_\bob| \hat{X}_\eve,S_\eve) - H(S_\bob) - H(Z_\bob)\nonumber\\
  &= H(\hat{X}_\bob,\hat{X}_\eve|S_\eve,S_\bob) + H(S_\eve,S_\bob) - H(\hat{X}_\eve,S_\eve)-H(S_\bob) - H(Z_\bob)\nonumber\\
  &= H(\hat{X}_\bob,\hat{X}_\eve|S_\eve,S_\bob)-H(\hat{X}_\eve|S_\eve) - H(Z_\bob). 
\end{align*}}\hspace{-3pt}
where (a) is true since we have the Markov chain $X_\bob\leftrightarrow X_\alice \leftrightarrow X_\eve$ and
(b) follows from the fact that the state variables are independent of $X_\alice$ and given $X_\alice$ and $S_\bob$ the only uncertainty left in $\hat{X}_\bob$ is that of noise $Z_\bob$.
Now the above relation can be more simplified as follows
{\allowdisplaybreaks[4]
\begin{align}\label{eq:Maurer-GaussUpBound-1}
C^{\mathsf{gaus}}_s &\le \sum_{i=0}^{s}\sum_{j=0}^{s} \delta_i\delta_j H(\hat{X}_\bob,\hat{X}_\eve|S_\eve=j,S_\bob=i)  - \sum_{k=0}^{s} \delta_k H(\hat{X}_\eve|S_\eve=k) - H(Z_\bob)\nonumber\\
  &=\sum_{i=0}^{s}\sum_{j=0}^{s} \delta_i\delta_j H(\sqrt{h_i} X_\alice + Z_\bob, \sqrt{h_j} X_\alice + Z_\eve)  - \sum_{k=0}^{s} \delta_k H(\sqrt{h_k} X_\alice + Z_\eve) - H(Z_\bob)\nonumber\\
  &=\sum_{i=0}^{s}\sum_{j=0}^{s} \delta_i\delta_j H(\sqrt{h_i} X_\alice + Z_\bob| \sqrt{h_j} X_\alice + Z_\eve) - H(Z_\bob) \nonumber\\
  &\stackrel{\text{(a)}}{\leq} \sum_{i=0}^{s}\sum_{j=0}^{s} \frac{\delta_i\delta_j}{2} 
    \log \Big[ (2\pi e)^L \times  \det\Big( \cov(\sqrt{h_i} X_\alice + Z_\bob| \sqrt{h_j} X_\alice + Z_\eve) \Big) \Big] - H(Z_\bob),
\end{align}}\hspace{-5pt}
where (a) follows from the fact that for a fixed variance, Gaussian distribution maximizes the entropy.

The inequality (a) in \eqref{eq:Maurer-GaussUpBound-1} is achieved when 
$(\sqrt{h_i} X_\alice + Z_\bob| \sqrt{h_j} X_\alice + Z_\eve)$ 
has a Gaussian distribution. A sufficient condition for this to be satisfied 
is when $X_\alice$, $Z_\bob$, and $Z_\eve$ are Gaussian and independent, namely, $X_\alice \sim N(\vec{0},P_{\mathrm{max}}\Bs{I}_L)$, $Z_\bob \sim N(\vec{0},\Bs{I}_L)$, and $Z_\eve \sim N(\vec{0},\Bs{I}_L)$. 
This observation makes the calculation of 
\begin{equation*}
\frac{1}{2}\log \left[ (2{\pi}e)^L \det\left(\cov( \sqrt{h_i} X_\alice+ Z_\bob| \sqrt{h_j} X_\alice+ Z_\eve) \right) \right]
\end{equation*}
much easier as it is equivalent to the evaluation of 
$H( \sqrt{h_i} {X}_\alice + {Z}_\bob, \sqrt{h_j} {X}_\alice + {Z}_\eve) - H( \sqrt{h_j} {X}_\alice + {Z}_\eve)$ 
when ${X}_\alice$, ${Z}_\bob$, and ${Z}_\eve$ 
are Gaussian and independent as shown below,
{\allowdisplaybreaks[4]
\begin{align*}
&\frac{1}{2}\log \left[ (2{\pi}e)^L \det\left(\cov( \sqrt{h_i} {X}_\alice + {Z}_\bob| \sqrt{h_j} {X}_\alice + {Z}_\eve) \right) \right] = \nonumber\\
& \hspace{45pt} = H\big( \sqrt{h_i} {X}_\alice+ {Z}_\bob, \sqrt{h_j} {X}_\alice+ {Z}_\eve \big) - H\big( \sqrt{h_j} {X}_\alice+ {Z}_\eve \big) \nonumber\\
& \hspace{45pt} = \sum_{k=1}^L H\big( \sqrt{h_i} X_{\alice,k} + Z_{\bob,k} , \sqrt{h_j} X_{\alice,k} + Z_{\eve,k} \big) - H\big( \sqrt{h_j} X_{\alice,k} + Z_{\eve,k} \big) \nonumber\\
& \hspace{45pt} = \frac{L}{2} \Big[ \log \big( (2{\pi} e)^2(1 + h_i P_{\mathrm{max}} + h_j P_{\mathrm{max}}) \big)  - \log \big( 2\pi e(1+ h_j P_{\mathrm{max}}) \big) \Big],
\end{align*}}\hspace{-4pt}
where $\Expc{X_{\alice,k}^2}= P_{\mathrm{max}}$ and $\Expc{Z_{\bob,k}^2}=\Expc{Z_{\eve,k}^2}=1$ 
for all $k\in[1:L]$.

Hence, the upper bound on the secrecy capacity reads as follows
{\allowdisplaybreaks[4]
\begin{align*}
C^{\mathsf{gaus}}_s 
%
%
  & \le \sum_{i=0}^{s}\sum_{j=0}^{s} \frac{\delta_i\delta_j L}{2} \log \left[ (2{\pi} e)^2(1 + h_i P_{\mathrm{max}} + h_j P_{\mathrm{max}}) \right] -\sum_{i=0}^{s}\sum_{j=0}^{s} \frac{\delta_i\delta_j L}{2}  \log \left[ 2\pi e(1+ h_j P_{\mathrm{max}}) \right]  - \frac{L}{2} \log(2\pi e)\nonumber\\
%
  &  = \frac{1}{2} L \sum_{i=0}^{s}\sum_{j=0}^{s} \delta_i\delta_j  \log \left(1 + \frac{h_i P_{\mathrm{max}}}{1 + h_j P_{\mathrm{max}}} \right),
\end{align*}}\hspace{-5pt}
where we are done.
\end{proof}

\subsection{Lower Bound for the Key Generation Capacity}
\label{sec:GrpSecretKey-GaussBrdcstChnl-LowerBound}
Before stating our achievability scheme, let us first define a ``nested message set, degraded channel'' wiretap scenario.
\begin{definition}\label{def:ExtWiretapChannel}
Assume a wiretap channel scenario where there is a transmitter called Alice who broadcasts $X_\alice$ and there are $s+1$ receivers $Y_i$ where the $i$th receiver receives $Y_i$ according to the broadcast channel $(\mc{X}_\alice,p(y_0,\ldots,y_s|x),\mc{Y}_0\times\cdots\times\mc{Y}_s))$ such that 
\begin{equation*}
p(y_0,\ldots,y_s|x_\alice)= p(y_s|x_\alice)\cdot p(y_{s-1}|y_s)\cdots p(y_0|y_1).
\end{equation*}

Suppose that Alice has $s$ messages $W_1,\ldots,W_s$ where $W_i\in\{1,\ldots,2^{LR_i}\}$ and 
$W_i\sim\mathsf{Uni}([1:2^{LR_i}])$.
The goal is that she wants to broadcast these messages such that $\forall i$:\\
\emph{\textsf{(i)}} each message $W_i$ should be decodable by the receivers $Y_i,\ldots,Y_s$ 
with a negligible error probability, and\\
\emph{\textsf{(ii)}} all the receivers $Y_0,\ldots,Y_{i-1}$ should be ignorant about the message $W_i$, 
namely for the leakage rate we should have
\begin{equation}\label{eq:DefLeakageRate_ExtWiretap}
R_{\mathrm{leak},i}^{(L)}\triangleq\frac{1}{L}I(W_{i+1},\ldots,W_s;Y_i^{1:L})\le\epsilon_L, \forall i\in [0:s].
\end{equation}
\end{definition}

Now suppose that a multi-receiver wiretap scenario as defined in Definition~\ref{def:ExtWiretapChannel} consists of $s + 1$ independent
Gaussian channels where the $r$th channel is defined as follows
\begin{equation}\label{eq:GaussianWiretapChannelModel}
Y_r[t] = \sqrt{h_r} X[t] + Z_r[t],\quad\quad \forall r\in [0 : s],
\end{equation}
where $Z_r[t] \sim N(0, 1)$ and $h_r$ are some fixed constant representing the channel gains such that $h_0 < \cdots < h_s$.
We also assume that the channel input is subject to an average power constraint $P_{\mathrm{max}}$, \ie, $\frac{1}{L} \sum_{t=1}^L \Expc{X^2[t]} \le P_{\mathrm{max}}$.
Then we can state the following result.
\begin{theorem}\label{thm:GaussianWiretapAchievability}
Using a properly designed layered wiretap code similar to \cite{liang_broadcast_2009,liang_broadcast_2014}, we can achieve the following set of rates for the ``nested message set, degraded Gaussian wiretap channel,''
\begin{align}\label{eq:Layer_i_Rate}
R_i &= \frac{1}{2} \left[ \log\left(1+ \frac{h_i P_i}{1+ h_i I_i} \right) -\log\left(1+ \frac{h_{i-1} P_i}{1+ h_{i-1} I_i} \right) \right],
\end{align}
$\forall i\in[1:s]$, where $I_i \triangleq \sum_{j=i+1}^s P_j$.
\end{theorem}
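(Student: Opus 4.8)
The plan is to realize the nested message set by a layered superposition code in which each layer carries one message through an independent Gaussian input, and to secure each layer with a Wyner-type stochastic wiretap encoder. I would write $X = \sum_{i=1}^s X_i$ with $X_i \sim N(0,P_i)$ independent, so that $\sum_i P_i \le P_{\mathrm{max}}$ and, when layer $i$ is decoded, the layers $i+1,\ldots,s$ act as Gaussian interference of total power $I_i = \sum_{j>i} P_j$. The first fact to record is that the Gaussian model \eqref{eq:GaussianWiretapChannelModel} together with $h_0 < \cdots < h_s$ realizes the physically degraded ordering assumed in Definition~\ref{def:ExtWiretapChannel}: each receiver $Y_{r-1}$ is a stochastically degraded version of $Y_r$. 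This single structural fact is what lets me collapse both the decoding requirement and the secrecy requirement to a single adjacent pair per layer.

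For reliability I would use successive interference cancellation at each receiver, from the bottom layer upward. Receiver $Y_i$ decodes $W_1$ while treating layers $2,\ldots,s$ as noise, subtracts it, and proceeds, so that by the time it decodes layer $i$ it has removed layers $1,\ldots,i-1$ and faces an effective point-to-point Gaussian channel of signal power $h_i P_i$ and noise-plus-interference power $1 + h_i I_i$. The achievable layer-$i$ codebook rate is therefore $\frac{1}{2}\log(1 + \frac{h_i P_i}{1 + h_i I_i})$. Degradedness guarantees that every stronger receiver $Y_j$ with $j>i$ can reproduce this decoding, so decodability by $Y_i$ propagates to $Y_{i+1},\ldots,Y_s$, as required by part \textsf{(i)} of Definition~\ref{def:ExtWiretapChannel}.

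For secrecy I would place a wiretap code on each layer, sacrificing a dummy-randomness rate equal to what the next weaker receiver $Y_{i-1}$ can extract from layer $i$, namely $\frac{1}{2}\log(1 + \frac{h_{i-1} P_i}{1 + h_{i-1} I_i})$ (here $Y_{i-1}$ has also removed layers $1,\ldots,i-1$ and treats the higher layers as noise). The net confidential rate of layer $i$ is the difference of the two mutual informations, which is exactly \eqref{eq:Layer_i_Rate}. Because the channel is degraded, secrecy against the single receiver $Y_{i-1}$ implies secrecy against all weaker receivers $Y_0,\ldots,Y_{i-2}$ by the data-processing inequality, matching part \textsf{(ii)}.

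The step I expect to be the main obstacle is establishing the \emph{joint} leakage condition \eqref{eq:DefLeakageRate_ExtWiretap}, i.e. $\frac{1}{L} I(W_{i+1},\ldots,W_s;Y_i^{1:L}) \to 0$ for the entire bundle of higher messages at once, rather than mere layer-by-layer secrecy. This is where I would lean on the layered-secrecy analysis of \cite{liang_broadcast_2009,liang_broadcast_2014}: one shows that the aggregate dummy randomness injected across layers $i+1,\ldots,s$ exhausts essentially all of the residual uncertainty that $Y_i$ retains about those layers, so the equivocation telescopes and the joint mutual information vanishes. The degraded structure is essential here, since it ensures that the total rate $Y_i$ can glean about the bundle above it equals the sum of the adjacent increments already budgeted as dummy randomness, leaving no slack through which information about $(W_{i+1},\ldots,W_s)$ could leak.
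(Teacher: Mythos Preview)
Your proposal is correct and follows essentially the same route as the paper: a superposition of independent Gaussian layers with per-layer power $P_i$, successive decoding from the bottom layer upward so that the codebook rate at layer $i$ is $\tfrac{1}{2}\log(1+h_iP_i/(1+h_iI_i))$, and Wyner-style random binning at each layer with bin rate $\tfrac{1}{2}\log(1+h_{i-1}P_i/(1+h_{i-1}I_i))$, with the joint-leakage analysis deferred to \cite{liang_broadcast_2009,liang_broadcast_2014}. Your explicit articulation of why degradedness reduces both the reliability and the secrecy constraints to a single adjacent pair, and your flagging of the joint leakage \eqref{eq:DefLeakageRate_ExtWiretap} as the nontrivial step, are if anything more detailed than the paper's own sketch.
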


\begin{proof}
In the following, we will describe a code construction that achieves the rates stated in Theorem~\ref{thm:GaussianWiretapAchievability}. Because it is very similar to \cite{liang_broadcast_2009} and also due to space limit, we only present a sketch of the proof for the theorem.

Assume that the code has $s$ layers that correspond to each channel where they are indexed from
$1$ up to $s$ (the channel $Y_0$ should decode nothing). To each layer a power constraint $P_i$ is assigned such
that $\sum_{i=1}^s P_i \le P_{\mathrm{max}}$. The transmitter uses superposition coding to encode each message $W_i$ that corresponds 
to layer $i$; namely, it broadcasts
\begin{equation*}
X[t]=\sum_{i=1}^s X_i[t],
\end{equation*}
over the channel described by \eqref{eq:GaussianWiretapChannelModel}. 
Then by receiving $Y_r$, the $r$th receiver uses successive decoding, that starts 
from the layer $1$ to decode $X_1$ assuming the rest of the layers as noise and 
subtracting $X_1$ from the received vector after decoding. Then it continues this process
to decode the rest of layers.

More precisely we construct $s$ codebooks $\mc{\hat{C}}_i(2^{L\hat{R}_i},L)$ 
each contains $2^{L\hat{R}_i}$ codewords $X_i^L$ of length $L$ by choosing in total $L 2^{L\hat{R}_i}$ symbols independently from the Gaussian distribution ${N}(0,P_i)$ where
\begin{equation*}
\hat{R}_i = \frac{1}{2}\log\left(1+ \frac{h_i P_i}{1+ h_i I_i} \right),
\end{equation*}
and $I_i=\sum_{j=i+1}^s P_j$.
Each codebook $\hat{\mc{C}}_i$, $0<i\le s$, is divided into $2^{LR_i}$ bins where
\begin{align*}
R_i &= \frac{1}{2} \left[ \log\left(1+ \frac{h_i P_i}{1+ h_i I_i} \right) -\log\left(1+ \frac{h_{i-1} P_i}{1+ h_{i-1} I_i} \right) \right].
\end{align*}
At every layer $i$, each message is mapped into one bin, and one codeword in the bin is randomly chosen. So, layer $i$ can transmits $2^{LR_i}$ messages. Following a similar argument as stated in \cite{liang_broadcast_2009,liang_broadcast_2014}, it can be shown that the above codebook satisfies the requirement of Definition~\ref{def:ExtWiretapChannel}.
%
%
\end{proof}

\begin{remark}
Note that all of the above discussions are also valid for complex channels. The only difference is that there will be no $\frac{1}{2}$ coefficient before rates given by \eqref{eq:Layer_i_Rate} and other expressions should be updated accordingly.
\end{remark}

Now, as described in the proof of Theorem~\ref{thm:GaussianWiretapAchievability}, by using a properly designed layered coding for the nested message set, degraded channel wiretap scenario, we can convert the Gaussian channel given in \eqref{eq:GaussianChannelModel} to a set of $s$ independent erasure channels where the erasure of the messages for each channel (layer) depends on the receiver channel state.
In fact using the layered coding scheme for the wiretap channel, we mimic the orthogonality behaviour that we have for the deterministic channel as described by \eqref{eq:TransMatrixNested_Structure_1} and \eqref{eq:Kernel_DirectSum}.

To be more specific, we assume that Alice broadcasts an $\pktlen$-length vector
\begin{equation*}
{X}_\alice[t] = \sum_{i=1}^s {X}_{\alice, i}[t],
\end{equation*}
where she maps $W_i$ (the messages corresponding to the $i$th layer) to ${X}_{\alice, i}[t]$ according to the codebook described in the proof of Theorem~\ref{thm:GaussianWiretapAchievability}.
%
From the proof we know that the receiver $r$ which observes the channel state $S_r=i$ can decode messages up to layer$i$ and is ignorant about messages of layers above $i$. So, equivalently, we can say that the message $W_i$ experiences erasure probability 
$\theta_i=\sum_{j=0}^{i-1} \delta_j$,
when it passes through the channel \eqref{eq:GaussianChannelModel}.

Now for each layer $i$, we run the interactive secret key sharing scheme introduced in 
\cite{siavoshani_group_2010,argyraki_creating_2013} (see also see Appendix~\ref{sec:GrpSecKeyAgr_ErasureBrdcstChnl})
where Alice broadcasts an $n$-length sequence of random messages, \ie, $W_i^n$. Then, by discussing over the public channel, the trusted terminals reconcile their secret messages to build a common key. The key generation rate for each layer is $\Delta_i L R_i$, so for a fixed power allocation we achieve the following secrecy rate
\begin{equation*}
{R}^{\mathsf{gaus}}_s \le \sum_{i=1}^s \Delta_i L R_i,
\end{equation*}
where $R_i$ is defined in \eqref{eq:Layer_i_Rate} and
$\Delta_i = (1-\theta_i) \theta_i$.

The maximum secrecy rate is obtained by optimizing the above rate over the power allocations $\{P_i\}_{i=1}^s$. Thus we can write
\begin{equation}\label{eq:PowerAloocation-P1}
{R}^{\mathsf{gaus}}_s = \left\{\begin{array}{ll}
\max & \sum_{i=1}^s \Delta_i L R_i \\
\text{subject to} & \sum_{i=1}^s P_i\le P_{\mathrm{max}}\\
 & P_i\ge 0,\quad \forall i\in[1:s].
\end{array} \right.
\end{equation}
Because $R_1$ is an increasing function of $P_1$ when other $P_i$ are kept fixed and $R_i$ does not depend on $P_1$
for $i>1$ we can write the power constant inequality as an equality. We also apply a change of variables to Problem~\ref{eq:PowerAloocation-P1} from $\{P_i\}$ to $\{I_k\}$. So we can rewrite~\eqref{eq:PowerAloocation-P1} in the canonical form (see \cite{boyd_convex_2004}) as follows
\begin{equation}\label{eq:PowerAloocation-P2}
{R}^{\mathsf{gaus}}_s = \left\{\begin{array}{ll}
\min & -\sum_{i=1}^s \Delta_i L R_i \\
\text{subject to} & -[I_{k-1}-I_k]\le 0,\ \forall k\in[1:s],
\end{array} \right.
\end{equation}
where for convenience we define $I_0\triangleq P_{\mathrm{max}}$, $I_s\triangleq 0$, and we have also
\begin{equation*}
R_i = \frac{1}{2} \log\left(\frac{1+h_i I_{i-1}}{1+ h_i I_i} \cdot \frac{1+h_{i-1} I_i}{1+ h_{i-1} I_{i-1}} \right).
\end{equation*}
In \S\ref{sec:GrpSecretKey-GaussBrdcstChnl-SolvePowerAllocProblem}, we will focus on solving the optimization problem~\eqref{eq:PowerAloocation-P2}.

\section{Solving the Non-convex Power Allocation Problem}
\label{sec:GrpSecretKey-GaussBrdcstChnl-SolvePowerAllocProblem}
Here, we present how the optimization~problem~\eqref{eq:PowerAloocation-P2} can be solved. Our final result is not in closed form but instead we propose a recursive algorithm (\ie, a dynamic program) that finds all the possible solutions of KKT\footnote{Karush--Kuhn--Tucker conditions (\eg, see \cite{boyd_convex_2004}).} conditions (which provide necessary conditions for an optimal solution to the optimization~problem~\eqref{eq:PowerAloocation-P2}) and find an optimum solution by searching among them. By using the proposed algorithm, we reduce the search space of the optimization~problem~\eqref{eq:PowerAloocation-P2} from a multi-dimensional continuous space to a finite elements set; \ie, the set of solutions to the KKT conditions. In this sense, the final result is exact (the proposed algorithm is not a numerical approximation), but it is hard to describe the solution in a single closed form equation for all possible parameters involved in the secrecy problem (\eg, channel gains, probability distribution over states, etc.). However, note that for each set of given problem parameters, it is possible to state the final solution in terms of these given parameters (but here we only focus on deriving the final ``value'' of the solution, not its ``expression''). To find the optimal solutions of the above-mentioned optimization problem, we proceed as follows.

Because the constraints of optimization problem~\eqref{eq:PowerAloocation-P2} are affine, we can use the KKT conditions to derive a set of necessary conditions for the optimum power allocation (\eg, see \cite[Chapter~5]{bertsekas_convex_2003}). By defining the Lagrangian $\mf{L}$ as
\begin{equation*}
\mf{L}(P_1,\ldots,P_s,\lambda_1,\ldots,\lambda_s) = -\sum_{i=1}^s \Delta_i L R_i + \sum_{i=1}^s \lambda_i [I_i-I_{i-1}],
\end{equation*}
and applying the KKT theorem (\eg, see \cite[Chapter~5]{boyd_convex_2004}), we write a set of necessary conditions for the optimal solution of \eqref{eq:PowerAloocation-P2} as follows
\begin{equation}\label{eq:PowerAlocation-KKTNecessaryCond}
\left\{\begin{array}{ll}
\frac{\partial \mf{L}}{\partial I_k} = 0, & \forall k\in[1:s-1],\\
\lambda_k [I_k-I_{k-1}] = 0, & \forall k\in[1:s],\\
I_k \le I_{k-1}, & \forall k\in[1:s],\\
\lambda_k \ge 0, & \forall k\in[1:s].        
\end{array} \right.
\end{equation}
By taking the derivative of $\mf{L}$ with respect to $I_k$, $\forall k\in[1:s-1]$, and doing some algebra we get
\begin{align}\label{eq:PowerAllocation_LagrangianDerivative_LinearCase}
0=\frac{\partial \mf{L}}{\partial I_k} &= \frac{({h}_{k+1} \beta_k-{h}_{k-1}\alpha_k)I_k-(\alpha_k-\beta_k)}{(\ln{2})(1+ {h}_{k-1} I_k)(1+ {h}_{k} I_k)(1+ {h}_{k+1} I_k)} + (\lambda_k - \lambda_{k+1}) \nonumber\\
 &= F^{(1)}_k(I_k) + (\lambda_k - \lambda_{k+1}),
\end{align}
where $\alpha_k\triangleq ({h}_{k+1}- {h}_k)\Delta_{k+1}$, $\beta_k\triangleq ({h}_k - {h}_{k-1})\Delta_{k}$ and $F^{(1)}_k(I_k)$ is defined accordingly. Notice that because $I_k$'s are positive variables the denominator of $F^{(1)}_k(I_k)$ is strictly positive. For the ease of reference, some of the important variables of our problem are gathered in Table~\ref{tab:Important_Variables}.

\begin{table}
\centering
\begin{tabular}{| c | c |}
\hline
\textbf{Variable} & \textbf{Definition}\\
\hline
$\theta_k\ (\forall k\in[1:s])$ & \parbox[t]{\TableWidth}{The effective erasure probability that the message $W_k$ (message of $k$th layer) experience which is  $\sum_{i=0}^{k-1} \delta_i$.}\\
\hline
$\Delta_k\ (\forall k\in[1:s])$ & \parbox[t]{\TableWidth}{A dummy variable which is $\theta_k (1-\theta_k)$.}\\
\hline
$I_k\ (\forall k\in[0:s])$ & \parbox[t]{\TableWidth}{The contribution of the interference of all layers above $k$ to the decoding of the $k$th layer which is $\sum_{i=k+1}^s P_i$. For convenience we define $I_0=P_{\mathrm{max}}$ and $I_s=0$.}\\ 
\hline
$h_k\ (\forall k\in[0:s])$ & \parbox[t]{\TableWidth}{The square of channel gains. Remember that $h_0 < \cdots < h_s$.}\\
\hline
$\lambda_k\ (\forall k\in[1:s])$ & \parbox[t]{\TableWidth}{The Lagrangian multipliers of optimization problem \eqref{eq:PowerAloocation-P2}.}\\
\hline
$\alpha_k\ (\forall k\in[1:s-1])$ &  \parbox[t]{\TableWidth}{$=({h}_{k+1}- {h}_k)\Delta_{k+1}$.}\\
\hline
$\beta_k\ (\forall k\in[1:s-1])$ & \parbox[t]{\TableWidth}{$=({h}_k - {h}_{k-1})\Delta_{k}$.}\\
\hline
$\{I^*_k\}_{k=0}^s$ & \parbox[t]{\TableWidth}{With an abuse of notation denotes any solution to the set of KKT conditions stated in \eqref{eq:PowerAlocation-KKTNecessaryCond}.}\\
\hline
$\{I^{**}_k\}_{k=0}^s$ & \parbox[t]{\TableWidth}{The optimum power allocation of the optimization problem \eqref{eq:PowerAloocation-P2} that also satisfies \eqref{eq:PowerAlocation-KKTNecessaryCond}.}\\
\hline
$r^{(1)}_k \ (\forall k\in[0:s])$ & \parbox[t]{\TableWidth}{The root of the numerator of $F^{(1)}(x)$ defined in \eqref{eq:PowerAllocation_LinearCase_root}. Note that we define $r^{(1)}_0\triangleq P_{\mathrm{max}}$ and $r^{(1)}_s \triangleq 0$.}\\
\hline
$r^{(2)}_{k,1}, r^{(2)}_{k,2}$ & \parbox[t]{\TableWidth}{The real roots (if exist) of the numerator of $F^{(2)}(x)$ defined in \eqref{eq:PowerAllocation_LagrangianDerivative_QuadraticCase}.}\\
\hline
\end{tabular}
\caption{Explanation for some of the important variables.}\label{tab:Important_Variables}
\end{table}

The main idea of our proof is to propose a recursive algorithm that first finds all the solutions of the KKT equations \eqref{eq:PowerAlocation-KKTNecessaryCond}, (with an abuse of notation) each is denoted by $\{I^*_k\}_{k=0}^s$. Then among these solutions finds the one that maximizes the secrecy rate given by \eqref{eq:PowerAloocation-P2}, which is denoted by $\{I^{**}_k\}_{k=0}^s$.

To this end, in every iteration, the proposed algorithm picks some $k$ (for that $I^*_k$ is not determined yet)  and then determine the sign of $F^{(1)}_k(x)$ (or as will be discussed later, for some cases determine the sign of $F^{(2)}_k(x)$ which will be defined in \eqref{eq:PowerAllocation_LagrangianDerivative_QuadraticCase}). Then using \eqref{eq:PowerAllocation_LagrangianDerivative_LinearCase} (or in some cases using \eqref{eq:PowerAllocation_LagrangianDerivative_QuadraticCase}) in addition to the complementary slackness condition, it determines whether we have to examine the following three cases: (i) $I^*_k=I^*_{k-1}$,  (ii) $I^*_k=I^*_{k+1}$, or (iii)  the value of $I^*_k$ is determined in this iteration. Hence, at the end of each iteration the size of the optimization problem is reduced by one (either $I^*_k$ is determined in this iteration or is equal to the $I^*_{k-1}$ or $I^*_{k+1}$). Ignoring the details for a moment, we can repeat the above procedure until all values of $I^*_k$ are determined. These sets of $I^*_k$'s are the solutions to the KKT conditions \eqref{eq:PowerAlocation-KKTNecessaryCond}.

Considering more details, we can proceed as follows. First, let $r^{(1)}_k$ to be the root of the numerator of $F^{(1)}_k(x)$ (assuming ${h}_{k+1}\beta_k- {h}_{k-1}\alpha_k\neq 0$), namely,
\begin{align}\label{eq:PowerAllocation_LinearCase_root}
r^{(1)}_k &\triangleq  \frac{\alpha_k-\beta_k}{{h}_{k+1}\beta_k- {h}_{k-1}\alpha_k},\quad\quad \forall k\in [1:s-1],
\end{align}
and by convention set $r^{(1)}_0 \triangleq P_{\mathrm{max}}$ and $r^{(1)}_s\triangleq 0$. 
Then we can observe the following different situations, as stated in Case~\ref{cond:PowerAloc_KKT_LinearCase_Conds}.

\begin{case}[Linear Case\footnote{Here by the linear case, we mean that the numerator of $F^{(1)}_k(I_k)$ in \eqref{eq:PowerAllocation_LagrangianDerivative_LinearCase} is a linear function of $I_k$.}]
\label{cond:PowerAloc_KKT_LinearCase_Conds}
Based on different values of problem parameters, for each $k\in[1:s-1]$, we have various cases for a solution $I^*_k$ that satisfies \eqref{eq:PowerAlocation-KKTNecessaryCond} as follows:\\
(1)	If ${h}_{k+1}\beta_k- {h}_{k-1}\alpha_k = 0$ (which means that the numerator of $F^{(1)}_k(x)$ is a constant and $r^{(1)}_k$ is not defined), then because of the ordering over channel gains we should have $\alpha_k > \beta_k$. Now, because of \eqref{eq:PowerAllocation_LagrangianDerivative_LinearCase} and since $\lambda_k$'s are non-negative, we should have $ \lambda_{k}^* > 0$ which by using the complementary slackness condition leads to $ I_k^*=I_{k-1}^*$. This is equivalent to $P^*_{k}=0$.\\
(2)	If $\alpha_k<\beta_k$ then we get $h_{k+1}\beta_k > h_{k-1}\alpha_k$, so we have $r^{(1)}_k < 0$ and $F^{(1)}_k(x)>0$ for $x\ge 0$. Because of \eqref{eq:PowerAllocation_LagrangianDerivative_LinearCase}, we conclude that $\lambda_{k+1}^*>0$ which by using the complementary slackness condition results in $I_k^*=I_{k+1}^*$, \ie,  $P^*_{k+1}=0$.\\
(3)	If ${h}_{k+1}\beta_k < {h}_{k-1}\alpha_k$ we can conclude that $\alpha_k>\beta_k$ so we have $r^{(1)}_k < 0$ and $F^{(1)}_k(x)<0$ for $x\ge 0$. Because of \eqref{eq:PowerAllocation_LagrangianDerivative_LinearCase}, we conclude that $\lambda_k^*>0$ which by using the complementary slackness condition results in $I_k^*=I_{k-1}^*$, \ie, $P^*_k=0$.\\
(4)	If $\alpha_k>\beta_k$ and ${h}_{k+1}\beta_k > {h}_{k-1}\alpha_k$ then we have $r^{(1)}_k >0$. Moreover, we have $F^{(1)}_k(x)>0$ for $x> r^{(1)}_k$ and $F^{(1)}_k(x)<0$ for $x < r^{(1)}_k$. Now, there exists the following different cases:
\begin{itemize}
		\item[(a)] If $P_{\mathrm{max}} < r^{(1)}_k$ then we have $F^{(1)}_k(x)<0$ for $x\le P_{\mathrm{max}}$. From \eqref{eq:PowerAllocation_LagrangianDerivative_LinearCase}, we conclude that $\lambda_k^*>0$ which leads to $I_k^*=I_{k-1}^*$, \ie, $P_k=0$.
		\item[(b)] If $0 < r^{(1)}_k \le P_{\mathrm{max}}$ then we have $F^{(1)}_k(x)<0$ for $x < r^{(1)}_k$ and $F^{(1)}_k(x)>0$ for $x > r^{(1)}_k$. Now $I^*_k$ can be equal to $r^{(1)}_k$ without any further requirement. However, if we have $r^{(1)}_k < I^*_k$ then we should have $I^*_k=I^*_{k+1}$. Similarly if we have $I^*_k < r^{(1)}_k$ then we have to have $I^*_k=I^*_{k-1}$.
\end{itemize}
\end{case}

The above different cases are derived under the assumption that we do not have any extra information about a solution $\{I^*_k\}$. In particular, prior to solving the KKT conditions, we do not know whether we have $I^*_k=I^*_{k\pm l}$ (for some valid $k$ and $l$) or not. Note that due to the ordering on the optimal solution $I^*_k$'s imposed by \eqref{eq:PowerAloocation-P2}, if we have $I^*_k=I^*_{k+l}$ then we should also have $I^*_{k}=I^*_{k+1}=\cdots=I^*_{k+l}$.

Now suppose that, by some mean (\eg, from the previous iterations of our proposed algorithm for finding the solutions of KKT equations \eqref{eq:PowerAlocation-KKTNecessaryCond}), we know that $I^*_k=I^*_{k+l}$. This knowledge enables us to reduce the size of the optimization problem \eqref{eq:PowerAloocation-P2}. Notice that after having this information, the derivative of the Lagrangian $\mf{L}$ with respect to $I_k$ 
is not given by \eqref{eq:PowerAllocation_LagrangianDerivative_LinearCase} anymore. More precisely, let us assume that $I^*_k=I^*_{k+l}$. Then, taking the derivative of $\mf{L}$ with respect to $I_k$ and by doing some algebra, we can write 
\begin{align}\label{eq:PowerAllocation_LagrangianDerivative_QuadraticCase}
 0 = \frac{\partial \mf{L}}{\partial I_k} &= \frac{\Delta_k(1+h_{k+l}I_k)(1+h_{k+l+1}I_k)(h_k-h_{k-1})}{(\ln{2})(1+h_{k-1}I_k)(1+h_{k}I_k)(1+h_{k+l}I_k)(1+h_{k+l+1}I_k)} \nonumber\\
    &\quad - \frac{\Delta_{k+l+1}(1+h_{k-1}I_k)(1+h_{k}I_k)(h_{k+l+1}-h_{k+l})}{(\ln{2})(1+h_{k-1}I_k)(1+h_{k}I_k)(1+h_{k+l}I_k)(1+h_{k+l+1}I_k)} + (\lambda_k-\lambda_{k+l+1})  \nonumber\\
    &= F^{(2)}_k(I_k) + (\lambda_k-\lambda_{k+l+1}).
\end{align}
Notice that for all values of $l\in [1:s-k]$, the numerator of $F^{(2)}_k(I_k)$ is a quadratic function in $I_k$ and the denominator is strictly positive for $I_k\ge 0$ because $h_k$'s are positive real quantities.

Similar to the Case~\ref{cond:PowerAloc_KKT_LinearCase_Conds}, here in this case, we can also find the real roots $r^{(2)}_{k,1}$ and $r^{(2)}_{k,2}$ of the numerator of $F^{(2)}_k(x)$ and find the sign of the function $F^{(2)}_k(x)$ for different values of $x\in[0,P_{\mathrm{max}}]$ based on the place of these roots. Hence, based on the real roots of the numerator of $F^{(2)}_k(x)$, we can write a set of different conditions similar to Case~\ref{cond:PowerAloc_KKT_LinearCase_Conds}, as stated in Case~\ref{cond:PowerAloc_KKT_QuadraticCase_Conds}. 
\begin{case}[Quadratic Case\footnote{Here by the quadratic case, we mean that the numerator of $F^{(2)}_k(I_k)$ in \eqref{eq:PowerAllocation_LagrangianDerivative_QuadraticCase} is a quadratic function of $I_k$.}]
\label{cond:PowerAloc_KKT_QuadraticCase_Conds}
Here, we do not write all the different possibilities for \eqref{eq:PowerAllocation_LagrangianDerivative_QuadraticCase} because the idea is very similar to Case~\ref{cond:PowerAloc_KKT_LinearCase_Conds}.
Instead, we explain the main part of the procedure in the following. In general, based on the position of the roots of numerator of $F^{(2)}_k(x)$, we can potentially have up to five different cases. To clarify the method, for example, consider the situation where the numerator of $F^{(2)}_k(x)$ has two distinct real roots $r^{(2)}_{k,1}<r^{(2)}_{k,2}$ where $r^{(2)}_{k,1}, r^{(2)}_{k,2} \in [0,P_{\mathrm{max}}]$. Then we have to consider the following five different cases for the solution $I^*_k$: (1) $I^*_k \in \left[0,r^{(2)}_{k,1}\right)$, (2) $I^*_k = r^{(2)}_{k,1}$, (3) $I^*_k\in \left(r^{(2)}_{k,1},r^{(2)}_{k,2}\right)$, (4) $I^*_k = r^{(2)}_{k,2}$, and (5) $I^*_k\in \left(r^{(2)}_{k,2},P_{\mathrm{max}} \right]$. In the items (1), (3), and (5) one can find the sign of $F^{(2)}_k(x)$ in the corresponding interval and based on that determine whether we should have $I^*_k=I^*_{k-1}$ or $I^*_{k}=I^*_{k+l+1}$.
\end{case}

\begin{remark}
It is worth to emphasize that the structure of the optimization problem \eqref{eq:PowerAloocation-P2} is such that the denominator of $F^{(1)}_k(x)$ and $F^{(2)}_k(x)$ are always strictly positive for $x\ge 0$. Moreover, the numerator of $F^{(1)}_k(x)$ is always at most a linear function of $x$ and that of $F^{(2)}_k(x)$ is always at most a quadratic function in $x$. This fact significantly simplifies finding the solutions of KKT equations \eqref{eq:PowerAlocation-KKTNecessaryCond} and hence solving the optimization problem \eqref{eq:PowerAloocation-P2}.
\end{remark}

The above discussion for different possible cases based on the numerator roots of $F^{(1)}_k$ and $F^{(2)}_k$ can be applied to any specific instance of the optimization problem \eqref{eq:PowerAloocation-P2}. As briefly explained before, the main idea is to apply a recursive algorithm that finds the set of solutions of \eqref{eq:PowerAlocation-KKTNecessaryCond} iteratively by determining variables $I_k$'s one by one. This procedure can also be considered as reducing the size of the optimization problem \eqref{eq:PowerAloocation-P2} (\ie, number of states) by one in every iteration.

To better explain our proposed method, we present the pseudo code of our algorithm in  Algorithm~\ref{alg:Solve_KKT_Conditions} and Algorithm~\ref{alg:Solve_KKT_RecursionMainPart}. 
For more clarification, in addition to the comments inside the pseudo code, the important variables of the pseudo code are explained separately in Table~\ref{tab:DescriptionImptVariables_in_Algs}.

Putting it together, we can describe our algorithm as follows (see also Algorithm~\ref{alg:Solve_KKT_Conditions} and Algorithm~\ref{alg:Solve_KKT_RecursionMainPart}). 
First, Algorithm~\ref{alg:Solve_KKT_Conditions} initializes a data structure $\mathrm{d}(i)$ for each $i\in[0:s]$ which contains the required information about $I^*_i$ (see Lines~\ref{alg:line:Init_begin} to \ref{alg:line:Init_end} of Algorithm~\ref{alg:Solve_KKT_Conditions}). Then it calls Algorithm~\ref{alg:Solve_KKT_RecursionMainPart} that is a recursive function.

Starting from the original KKT conditions, at every iteration, Algorithm~\ref{alg:Solve_KKT_RecursionMainPart} picks a number $k$ (such that $I_k$ is not determined yet) and apply Case~\ref{cond:PowerAloc_KKT_LinearCase_Conds} or Case~\ref{cond:PowerAloc_KKT_QuadraticCase_Conds} (depending if it is linear or quadratic case) to that particular $I_k$ (see Lines~\ref{alg:line:Pick_a_k}, \ref{alg:line:Linear_Case} and \ref{alg:line:Quadratic_Case} of Algorithm~\ref{alg:Solve_KKT_RecursionMainPart}). By doing so, the size of the original KKT conditions (\ie, number of undetermined variables $I_k$) is reduced by one and we may have up to five (in fact up to three if Case~\ref{cond:PowerAloc_KKT_LinearCase_Conds} holds and up to five if Case~\ref{cond:PowerAloc_KKT_QuadraticCase_Conds} holds) new set of KKT conditions to be solved.

Now, we can repeat the above process on each of these new set of conditions and go forward iteratively (see Lines~\ref{alg:line:Call_Recrsion_LinCase} and \ref{alg:line:Call_Recrsion_QuadCase} of Algorithm~\ref{alg:Solve_KKT_RecursionMainPart}). This procedure is like discovering a tree starting from some point as root (the root is determined by the first $k\in[1:s-1]$ picked up by the algorithm).  Note that many of these new set of conditions do not lead to valid solutions that satisfy the original KKT conditions \eqref{eq:PowerAlocation-KKTNecessaryCond}. This will be determined later as the algorithm proceeds by observing some contradictions on the intervals of $I_k$'s. This process continues until we obtain problems of zero size (that have all variable $I_k$'s determined and satisfy the original KKT conditions \eqref{eq:PowerAlocation-KKTNecessaryCond}; Line~\ref{alg:line:Dtermined_and_Satisfied} of Algorithm~\ref{alg:Solve_KKT_RecursionMainPart}) or at some point in the middle of the algorithm the determined $I_k$'s up to that point violate the KKT conditions (so this particular branch will be discarded; Lines~\ref{alg:line:Chk_Consistency_LinCase} and \ref{alg:line:Chk_Consistency_QuadCase} of Algorithm~\ref{alg:Solve_KKT_RecursionMainPart}).

The above-mentioned algorithm enables us to find all of the solutions to KKT conditions \eqref{eq:PowerAlocation-KKTNecessaryCond}. Then because the KKT equations provide a necessary condition on the optimal solution, it is sufficient to check among all of the solutions of KKT equations to find the optimal power allocation for the optimization problem~\eqref{eq:PowerAloocation-P2} (Line~\ref{alg:line:Find_Optimal_Power_Allocation} of Algorithm~\ref{alg:Solve_KKT_RecursionMainPart}). Consequently, the search space of the original optimization problem is reduced from the continuous space $\mbb{R}^{s-1}$ to a set of size at most $5^{s-1}$ elements. Note that this is the worst case analysis and in practice the size of the set can be much smaller than this number\footnote{In the examples discussed in the following, the number of solutions is something like $2$ or $3$.}.

\begin{algorithm}[H]
\caption{Finding all of the solutions that satisfy KKT conditions \eqref{eq:PowerAlocation-KKTNecessaryCond}.} \label{alg:Solve_KKT_Conditions}
\begin{algorithmic}[1]
  \Require $s$, $\{h_i\}_{i=0}^s$, $\{\Delta_k\}_{k=1}^s$, $P_{\mathrm{max}}$

  \ForAll{$i\in [0:s]$} \Comment{Initialization} \label{alg:line:Init_begin}
    \State $\mathrm{d}(i).\mathrm{l} = \mathrm{d}(i).\mathrm{u} = i$ \Comment{In general we may have $I^*_{\mathrm{d}(i).\mathrm{l}}=\cdots=I^*_{\mathrm{d}(i).\mathrm{u}}$}
    \State $\mathrm{d}(i).\mathrm{min} = 0$ \Comment{Initializing the lower bound on $I^*_i$}
    \State $\mathrm{d}(i).\mathrm{max} = P_{\mathrm{max}}$ \Comment{Initializing the upper bound on $I^*_i$}
    \State $\mathrm{d}(i).\mathrm{determined} = \mathrm{false}$ \Comment{At the beginning, the value of $I^*_i$ is not determined}
  \EndFor
  \State $\mathrm{d}(0).\mathrm{min} = P_{\mathrm{max}}$; \quad $\mathrm{d}(0).\mathrm{determined} = \mathrm{true}$ \Comment{Also part of the initialization}
  \State $\mathrm{d}(s).\mathrm{max} = 0$; \quad $\mathrm{d}(s).\mathrm{determined} = \mathrm{true}$ \Comment{Also part of the initialization} \label{alg:line:Init_end}

  \State $\mathrm{SolSet}=$ \Call{Recursion}{$\mathrm{d}$, $\{h_i\}_{i=0}^s$, $\{\Delta_k\}_{k=1}^s$}
  \State For all $I\in\mathrm{SolSet}$ find the one which maximizes the achievable rate $R$; call it $\mathrm{I}^{**}$ \label{alg:line:Find_Optimal_Power_Allocation}
  \State \Return $\mathrm{I}^{**}$
  \State \textbf{end}
\end{algorithmic}
\end{algorithm}

\begin{table}
\centering
\begin{tabular}{|c|c|}
\hline
\textbf{Variable} & \textbf{Description}\\
\hline
$\mathrm{d}(i)$ &  \parbox[t]{\TableWidth}{An array that contains the available information about the solution $I^*$ of \eqref{eq:PowerAlocation-KKTNecessaryCond} at every step of the algorithm. At the beginning, we have $i\in [0,s]$. However, the size of the problem becomes smaller in every iteration.}\\
\hline
$\mathrm{d}(i).\mathrm{l}$ and $\mathrm{d}(i).\mathrm{u}$ & \parbox[t]{\TableWidth}{Lower and Upper bounds on the indices of states such that we have $I^*_{\mathrm{d}(i).\mathrm{l}}=\cdots=I^*_{\mathrm{d}(i).\mathrm{u}}$.}\\
\hline
$\mathrm{d}(i).\mathrm{min}$ and $\mathrm{d}(i).\mathrm{max}$ & \parbox[t]{\TableWidth}{Determine the interval that $I_k^*$'s belongs to, \ie, $I_k^* \in \big[ \mathrm{d}(i).\mathrm{min},\mathrm{d}(i).\mathrm{max} \big]$ where $k \in \big[ \mathrm{d}(i).\mathrm{l} : \mathrm{d}(i).\mathrm{u} \big]$.}\\
\hline
$\mathrm{d}(i).\mathrm{determined}$ &  \parbox[t]{\TableWidth}{Let $k=\mathrm{d}(i).\mathrm{l}$. If the value of $I_k^*$ is completely determined, we have $\mathrm{d}(i).\mathrm{determined}=$~``$\mathrm{true}$'' otherwise it is equal to ``$\mathrm{false}$''.}\\
\hline
$\mathrm{SolSet}$ & \parbox[t]{\TableWidth}{A set that at the end of the algorithm contains all of the solutions (data structures $\mathrm{d}$) that satisfy the KKT conditions \eqref{eq:PowerAlocation-KKTNecessaryCond}.}\\
\hline
\end{tabular}
\caption{Description of the important variables in Algorithms~\ref{alg:Solve_KKT_Conditions} and~\ref{alg:Solve_KKT_RecursionMainPart}.}
\label{tab:DescriptionImptVariables_in_Algs}
\end{table}

In the following, in Lemma~\ref{lem:PowerAllocation_Solution_All_I_Positive} and in \S\ref{sec:HighDynamic_HighSNR_Regime}, we present two special cases for the optimization problem \eqref{eq:PowerAloocation-P2} that is insightful.

\begin{lemma}\label{lem:PowerAllocation_Solution_All_I_Positive}
Consider the set of $\{r^{(1)}_k\}_{k=0}^s$ as defined in \eqref{eq:PowerAllocation_LinearCase_root}. If we have $0= r^{(1)}_s< r^{(1)}_{s-1} < \cdots < r^{(1)}_1 < r^{(1)}_0=P_{\mathrm{max}},$
then the KKT conditions given by \eqref{eq:PowerAlocation-KKTNecessaryCond} have a unique solution. Moreover, the optimal power allocation is determined by $I_k^{**}=r^{(1)}_k$ for all $k\in[1:s-1]$.
\end{lemma}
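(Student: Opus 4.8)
The plan is to prove the two assertions separately: first that the assignment $I^*_k = r^{(1)}_k$ (together with $\lambda^*_k = 0$) solves the KKT system \eqref{eq:PowerAlocation-KKTNecessaryCond}, and then that it is the \emph{only} solution; optimality then follows because, the constraints of \eqref{eq:PowerAloocation-P2} being affine, the KKT conditions are necessary at any optimum, and an optimum exists since the objective is continuous over a compact feasible set. For the existence step, note that under the hypothesis $0 = r^{(1)}_s < r^{(1)}_{s-1} < \cdots < r^{(1)}_1 < r^{(1)}_0 = P_{\mathrm{max}}$ each $r^{(1)}_k$ lies in $(0,P_{\mathrm{max}})$, which places us in Case~\ref{cond:PowerAloc_KKT_LinearCase_Conds}(4); hence $h_{k+1}\beta_k - h_{k-1}\alpha_k > 0$, so $F^{(1)}_k(x) < 0$ for $x < r^{(1)}_k$ and $F^{(1)}_k(x) > 0$ for $x > r^{(1)}_k$ (the denominator of $F^{(1)}_k$ being strictly positive for $x \ge 0$). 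I record this sign pattern for later use. Setting $I^*_k = r^{(1)}_k$ then yields a strictly decreasing chain, so every constraint $I_k \le I_{k-1}$ is inactive, complementary slackness forces $\lambda^*_k = 0$, and each stationarity equation \eqref{eq:PowerAllocation_LagrangianDerivative_LinearCase} reduces to $F^{(1)}_k(I^*_k) = 0$, which holds by definition of $r^{(1)}_k$.

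The crux is uniqueness. I would take an arbitrary KKT solution $\{I^*_k,\lambda^*_k\}$ and partition the chain $P_{\mathrm{max}} = I^*_0 \ge I^*_1 \ge \cdots \ge I^*_s = 0$ into maximal blocks of equal consecutive values. For a singleton interior block at index $k$ one has $I^*_{k-1} > I^*_k > I^*_{k+1}$, so complementary slackness gives $\lambda^*_k = \lambda^*_{k+1} = 0$ and stationarity forces $F^{(1)}_k(I^*_k) = 0$, i.e. $I^*_k = r^{(1)}_k$. The main obstacle is to exclude blocks of size at least two. Consider an interior such block $\{a,\ldots,b\}$ with $1 \le a < b \le s-1$ and common value $v$. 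Its separating constraints are inactive, so $\lambda^*_a = 0$; telescoping \eqref{eq:PowerAllocation_LagrangianDerivative_LinearCase} then gives $\lambda^*_{a+1} = F^{(1)}_a(v)$ and $\lambda^*_{b+1} = \sum_{k=a}^{b} F^{(1)}_k(v) = 0$. If $v < r^{(1)}_a$ then $F^{(1)}_a(v) < 0$, contradicting $\lambda^*_{a+1} \ge 0$; if instead $v \ge r^{(1)}_a$ then, because the roots strictly decrease in $k$, we have $v > r^{(1)}_k$ for every $k > a$, so every term $F^{(1)}_k(v)$ with $k>a$ is strictly positive while $F^{(1)}_a(v) \ge 0$, making the block sum strictly positive and contradicting $\lambda^*_{b+1} = 0$. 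Either way the block cannot exist.

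The two boundary blocks are disposed of by the same telescoping. A block $\{0,\ldots,b\}$ with $b \ge 1$ has $v = P_{\mathrm{max}} = r^{(1)}_0 > r^{(1)}_k$ for all $k \ge 1$, so $\lambda^*_1 = -\sum_{k=1}^{b} F^{(1)}_k(P_{\mathrm{max}}) < 0$, which is impossible; symmetrically a block $\{a,\ldots,s\}$ with $a \le s-1$ has $v = 0 = r^{(1)}_s < r^{(1)}_k$ for all $k \le s-1$, forcing $\lambda^*_s = \sum_{k=a}^{s-1} F^{(1)}_k(0) < 0$. Hence every block is a singleton, the chain $I^*_0 > I^*_1 > \cdots > I^*_s$ is strictly decreasing, all multipliers vanish, and the singleton argument gives $I^*_k = r^{(1)}_k$ for every $k \in [1:s-1]$. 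This is therefore the unique KKT point.

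Finally, since an optimum of \eqref{eq:PowerAloocation-P2} exists and must satisfy the necessary KKT conditions, and these have the single solution just found, the optimal power allocation is $I^{**}_k = r^{(1)}_k$. I expect the sign-flip and partial-sum bookkeeping for the blocks of size at least two to be the delicate part; the remainder is a direct consequence of complementary slackness, the telescoping of the stationarity equations, and the explicit sign behaviour of $F^{(1)}_k$ established at the outset.
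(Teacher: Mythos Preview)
Your argument is correct; both the existence step and the block-decomposition uniqueness step go through as written.  The route, however, is genuinely different from the paper's.  The paper does not partition the chain into maximal equal-value blocks; instead it argues by a \emph{cascade}: starting from any index $k$ with $I^*_k > r^{(1)}_k$, the sign of $F^{(1)}_k$ forces $\lambda^*_{k+1}>0$, hence $I^*_{k+1}=I^*_k > r^{(1)}_{k+1}$, and one propagates rightward until $I^*_k=\cdots=I^*_s=0$, contradicting $I^*_k>r^{(1)}_k>0$; the case $I^*_k<r^{(1)}_k$ propagates leftward to $P_{\max}$ symmetrically.  Your approach trades this inductive propagation for a single telescoped identity $\lambda^*_{b+1}=\lambda^*_a+\sum_{k=a}^{b}F^{(1)}_k(v)$ on each block, which makes the case analysis (interior block, left boundary, right boundary) completely explicit and avoids the implicit induction.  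The paper's version is shorter; yours is more systematic and arguably cleaner to audit, since every contradiction is obtained from one signed sum rather than from a chain of implications.  Both rely on exactly the same two ingredients: the sign pattern of $F^{(1)}_k$ around $r^{(1)}_k$ (coming from Case~\ref{cond:PowerAloc_KKT_LinearCase_Conds}(4)) and the strict ordering $r^{(1)}_{k}>r^{(1)}_{k+1}$.
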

\begin{proof}
For the proof of this lemma refer to Appendix~\ref{apn:SomeProofs}. 
\end{proof}

As an example, based on the above discussions, for the case of having 3 states ($s=2$) the solution of the power allocation can be simplified as presented in Lemma~\ref{lem:OptProbSol_SpecialCase_s=2}.

\begin{lemma}\label{lem:OptProbSol_SpecialCase_s=2}
If $s=2$ then based on the values of channel coefficients and 
probability distribution over the states, we have the following result for the optimal power allocation:
\begin{enumerate}
	\item if $\alpha_1<\beta_1$ then we have $P^{**}_1=P_{\mathrm{max}}$,
	\item if ${h}_2\beta_1 < {h}_0\alpha_1$ then $P^{**}_2=P_{\mathrm{max}}$, and finally
	\item if $\alpha_1>\beta_1$ and ${h}_2\beta_1 > {h}_0\alpha_1$ then $P^{**}_2=\min\left(r^{(1)}_1,P_{\mathrm{max}}\right)$ where $r^{(1)}_1$ is defined in \eqref{eq:PowerAllocation_LinearCase_root}.
\end{enumerate}
\end{lemma}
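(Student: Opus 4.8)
The plan is to specialize the general KKT analysis of Case~\ref{cond:PowerAloc_KKT_LinearCase_Conds} to $s=2$, where the situation collapses dramatically. When $s=2$ the only genuinely free optimization variable is $I_1=P_2$, since the boundary values $I_0=P_{\mathrm{max}}$ and $I_2=0$ are fixed by convention. Consequently the only stationarity equation to impose is $\partial\mf{L}/\partial I_1=0$, that is, \eqref{eq:PowerAllocation_LagrangianDerivative_LinearCase} with $k=1$, and there is no prior information of the form $I_k=I_{k+l}$ to feed back into the Lagrangian. Hence the recursion never reaches the quadratic case \eqref{eq:PowerAllocation_LagrangianDerivative_QuadraticCase}: the entire analysis reduces to the linear case for the single index $k=1$, with root $r^{(1)}_1=(\alpha_1-\beta_1)/(h_2\beta_1-h_0\alpha_1)$.

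First I would record the dictionary between the $I$-variables and the powers, namely $P_1=I_0-I_1=P_{\mathrm{max}}-I_1$ and $P_2=I_1-I_2=I_1$. Under this correspondence, $I_1^*=I_2^*=0$ means $P_1^{**}=P_{\mathrm{max}}$, while $I_1^*=I_0^*=P_{\mathrm{max}}$ means $P_2^{**}=P_{\mathrm{max}}$. Then I would transcribe the sub-cases of Case~\ref{cond:PowerAloc_KKT_LinearCase_Conds} verbatim with $k=1$: sub-case~(2), $\alpha_1<\beta_1$, forces $\lambda_2^*>0$ and hence $I_1^*=I_2^*=0$, giving statement~(1); sub-case~(3), $h_2\beta_1<h_0\alpha_1$, forces $\lambda_1^*>0$ and hence $I_1^*=I_0^*$, giving statement~(2); and sub-case~(4), $\alpha_1>\beta_1$ together with $h_2\beta_1>h_0\alpha_1$, yields $r^{(1)}_1>0$, whose two branches (4a)~$P_{\mathrm{max}}<r^{(1)}_1$ and (4b)~$0<r^{(1)}_1\le P_{\mathrm{max}}$ both collapse to $I_1^{**}=\min(r^{(1)}_1,P_{\mathrm{max}})$, i.e. $P_2^{**}=\min(r^{(1)}_1,P_{\mathrm{max}})$, giving statement~(3). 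The degenerate boundary $h_2\beta_1=h_0\alpha_1$ (sub-case~(1)) is subsumed into statement~(2).

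Next I would check that the three hypotheses partition the relevant parameter region and that, within each, the KKT point is unique. Since the $h_i$ are increasing and nonnegative and $\alpha_1,\beta_1\ge 0$ (because $\Delta_k=\theta_k(1-\theta_k)\ge 0$), one verifies that $\alpha_1<\beta_1$ excludes $h_2\beta_1<h_0\alpha_1$, and that $h_2\beta_1<h_0\alpha_1$ forces $\alpha_1>\beta_1$; so statements~(1)--(3) are mutually exclusive and, up to the measure-zero boundary $h_2\beta_1=h_0\alpha_1$, exhaustive. Uniqueness of the KKT point in each regime is the content of the sign analysis: in sub-case~(4b), for instance, $I_1>r^{(1)}_1$ would force $\lambda_2^*>0$ hence $I_1=I_2=0<r^{(1)}_1$, while $I_1<r^{(1)}_1$ would force $\lambda_1^*>0$ hence $I_1=I_0=P_{\mathrm{max}}\ge r^{(1)}_1$, both contradictions, leaving $I_1^*=r^{(1)}_1$ as the only admissible value.

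Finally, to upgrade ``unique KKT point'' to ``optimal power allocation,'' I would invoke that the feasible set $\{P_1+P_2=P_{\mathrm{max}},\ P_i\ge 0\}$ is compact and the objective $\Delta_1 L R_1+\Delta_2 L R_2$ is continuous, so a maximizer exists; since the constraints are affine the KKT conditions are necessary at the maximizer, and being the unique KKT candidate in each regime it must coincide with $\{I^{**}_k\}$. I expect the only delicate point to be the bookkeeping in sub-case~(4): confirming that both branches (4a) and (4b) merge into the single expression $\min(r^{(1)}_1,P_{\mathrm{max}})$, and that the sign constraints on $\lambda_1,\lambda_2$ rule out any spurious endpoint KKT points. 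Everything else is a direct reading of Case~\ref{cond:PowerAloc_KKT_LinearCase_Conds} under the substitution $k=1$.
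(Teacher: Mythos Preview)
Your proposal is correct and follows exactly the approach the paper intends: the paper presents this lemma as a direct specialization of Case~\ref{cond:PowerAloc_KKT_LinearCase_Conds} to $k=1$ (``based on the above discussions''), without writing out a separate proof, and your transcription of sub-cases~(2), (3), and (4) to statements~(1), (2), and (3) together with the uniqueness and existence arguments is precisely what that specialization amounts to.
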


\subsection{High-dynamic range, high-SNR regime} \label{sec:HighDynamic_HighSNR_Regime}
In this section, we show that our proposed achievability scheme, stated in Section~\ref{sec:GrpSecretKey-GaussBrdcstChnl-LowerBound}, is optimal for the ``high-dynamic range, high-SNR regime'' in a degrees of freedom sense. We first give a formal definition of degrees of freedom in our setup as follows.
The degrees of freedom for secret key sharing over a state dependent Gaussian broadcast channel is defined as
\begin{align*}
\DoF_s  = \lim_{Q\rightarrow \infty} \frac{C^{\mathsf{gaus}}_s}{\frac{1}{2} \log Q} 
\end{align*}
where $h_i = Q^{\gamma_i}$ for $i \in [0:s]$, $\gamma_i > 0$ and $\gamma_i > \gamma_{i-1}$.

Clearly, as $Q \rightarrow \infty$, $h_i \gg h_{i-1}$ (high-dynamic range) and $h_i \gg 1$ (high-SNR). The following theorem completely characterizes $\DoF_s$, and hence proves the optimality of our proposed achievability scheme in the high-dynamic range and high-SNR regime.

\begin{theorem}\label{thm:dof_characterization}
The degrees of freedom ($\DoF_s$) for secret key sharing over a state dependent Gaussian broadcast channel is given by:
$\DoF_s = L \sum_{i=1}^{s} \left (\gamma _{i}-\gamma_{i-1}  \right) \Delta_i$.
\end{theorem}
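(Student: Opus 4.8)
The plan is to obtain the claimed value of $\DoF_s$ by sandwiching $C^{\mathsf{gaus}}_s$ between the converse of Theorem~\ref{thm:SecKeyGenCapacity_UpperBound_Gaussian} and the achievable rate of \S\ref{sec:GrpSecretKey-GaussBrdcstChnl-LowerBound}, substituting $h_i = Q^{\gamma_i}$ everywhere and retaining only the leading $\log Q$ behaviour. Because $\gamma_i > 0$ for all $i$, every channel gain satisfies $h_i \to \infty$ as $Q\to\infty$, which is exactly what produces the high-SNR, high-dynamic-range asymptotics. I expect both bounds to converge, after normalization, to $L\sum_{i=1}^s (\gamma_i - \gamma_{i-1})\Delta_i$, so the whole argument reduces to two asymptotic evaluations plus one combinatorial identity that reconciles the different shapes of the two bounds.

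For the converse I would start from $C^{\mathsf{gaus}}_s \le \frac{1}{2}L\sum_{i=0}^s\sum_{j=0}^s \delta_i\delta_j \log\big(1 + \frac{h_i P_{\mathrm{max}}}{1 + h_j P_{\mathrm{max}}}\big)$ and classify each summand by the sign of $i-j$. When $i \le j$ the argument $\frac{h_i P_{\mathrm{max}}}{1 + h_j P_{\mathrm{max}}}$ stays bounded (it is below $1$ when $i=j$ and tends to $0$ like $Q^{-(\gamma_j-\gamma_i)}$ when $i<j$), so those terms are $O(1) = o(\log Q)$ and do not affect the DoF. When $i > j$ the argument behaves like $h_i/h_j = Q^{\gamma_i - \gamma_j}$, so the summand contributes $(\gamma_i - \gamma_j)\log Q$ to leading order, giving $C^{\mathsf{gaus}}_s \stackrel{\cdot}{\le} \frac{1}{2}L\log Q \sum_{i>j}\delta_i\delta_j(\gamma_i - \gamma_j)$. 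The decisive algebraic step is the identity $\sum_{i>j}\delta_i\delta_j(\gamma_i - \gamma_j) = \sum_{k=1}^s (\gamma_k - \gamma_{k-1})\theta_k(1-\theta_k)$, which I would prove by writing $\gamma_i - \gamma_j = \sum_{k=j+1}^i (\gamma_k - \gamma_{k-1})$ and interchanging the order of summation: for each fixed $k$ the inner double sum factors as $\big(\sum_{j=0}^{k-1}\delta_j\big)\big(\sum_{i=k}^{s}\delta_i\big) = \theta_k(1-\theta_k) = \Delta_k$. Dividing by $\frac{1}{2}\log Q$ then yields $\DoF_s \le L\sum_{i=1}^s(\gamma_i-\gamma_{i-1})\Delta_i$.

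For achievability I would use $C^{\mathsf{gaus}}_s \ge \sum_{i=1}^s \Delta_i L R_i$ with $R_i$ as in Theorem~\ref{thm:GaussianWiretapAchievability}, and rather than solving the full non-convex program \eqref{eq:PowerAloocation-P2} I would exhibit one explicitly good allocation, namely $I_k = 1/h_k$ for $k\in[1:s-1]$ together with the forced endpoints $I_0 = P_{\mathrm{max}}$ and $I_s = 0$. Since $h_k$ is increasing this sequence is decreasing, so all $P_k = I_{k-1} - I_k \ge 0$ and the budget $\sum_k P_k = I_0 - I_s = P_{\mathrm{max}}$ is met once $Q$ is large. Writing $R_i = \frac{1}{2}\log\big(\frac{1+h_i I_{i-1}}{1 + h_i I_i}\cdot \frac{1 + h_{i-1}I_i}{1 + h_{i-1}I_{i-1}}\big)$ and substituting this allocation, the choice $I_k = 1/h_k$ is precisely what pins the interference seen by layer $i$ at the noise floor, so the leading factor scales like $h_i/h_{i-1} = Q^{\gamma_i-\gamma_{i-1}}$ while the second logarithmic factor contributes only $O(1)$ to $R_i$; the endpoint layers $i=1$ and $i=s$ give the same pre-log after accounting for $I_0 = P_{\mathrm{max}}$ and $I_s = 0$. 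Consequently $R_i \stackrel{\cdot}{=} \frac{1}{2}(\gamma_i - \gamma_{i-1})\log Q$ for every $i$, giving $C^{\mathsf{gaus}}_s \stackrel{\cdot}{\ge} \frac{1}{2}L\log Q\sum_{i=1}^s (\gamma_i-\gamma_{i-1})\Delta_i$ and the matching lower bound on $\DoF_s$.

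Combining the two bounds proves the theorem. I expect the main obstacle to lie on the achievability side: verifying that the chosen allocation respects the ordering and power constraints of \eqref{eq:PowerAloocation-P2} uniformly in $Q$, and that every layer simultaneously attains its full $(\gamma_i-\gamma_{i-1})$ pre-log without one layer's interference term eroding another's rate. The endpoint layers $i=1$ and $i=s$ require separate bookkeeping because of the forced values $I_0 = P_{\mathrm{max}}$ and $I_s=0$, but both still collapse to the same per-layer DoF once $O(1)$ additive constants are absorbed into the $\stackrel{\cdot}{=}$ notation. By contrast, the converse is comparatively routine once the combinatorial identity is established.
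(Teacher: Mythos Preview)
Your proposal is correct, and your converse is essentially identical to the paper's: both drop the $i\le j$ terms as $o(\log Q)$, telescope $\gamma_i-\gamma_j=\sum_{k=j+1}^i(\gamma_k-\gamma_{k-1})$, and interchange summations to produce $\sum_k\Delta_k(\gamma_k-\gamma_{k-1})$.

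The achievability side differs in a genuinely useful way. The paper invokes Lemma~\ref{lem:PowerAllocation_Solution_All_I_Positive}: it first argues that in the high-dynamic-range regime $r^{(1)}_k\stackrel{\cdot}{=}Q^{-\gamma_k}$, so the ordering hypothesis of that lemma is satisfied, and hence the \emph{optimal} allocation is $I^{**}_k=r^{(1)}_k$; it then evaluates the resulting rate. You instead exhibit the explicit feasible allocation $I_k=1/h_k=Q^{-\gamma_k}$ and compute each $R_i$ directly, never claiming optimality. Your route is more elementary and self-contained (it bypasses the KKT machinery entirely), and it makes transparent why $I_k\approx 1/h_k$ is the ``right'' scaling: it pins the interference at each layer to the noise floor. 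The paper's route, on the other hand, ties the DoF result back to their power-allocation algorithm and shows that the optimal allocation itself has this $Q^{-\gamma_k}$ shape. Since $r^{(1)}_k\stackrel{\cdot}{=}1/h_k$, the two allocations coincide in the exponent sense, so both yield the same per-layer pre-log $(\gamma_i-\gamma_{i-1})$.
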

\begin{proof}
We prove the theorem in two steps. First, we show a lower bound on $\DoF_s$ using the proposed achievability scheme in Section~\ref{sec:GrpSecretKey-GaussBrdcstChnl-LowerBound}, and then we show a matching upper bound on $\DoF_s$ using the upper bound on the secret key generation capacity as stated in Theorem~\ref{thm:SecKeyGenCapacity_UpperBound_Gaussian}.

\paragraph*{Lower bound on $\DoF_s$} If ${h}_i\gg {h}_{i-1}$ for all $i$, we have $r^{(1)}_i \stackrel{\cdot}{=} Q^{-\gamma_i}$.
Then the ordering condition stated in Lemma~\ref{lem:PowerAllocation_Solution_All_I_Positive} 
is satisfied and as a result we have $I^{**}_i= r^{(1)}_i$. Using this observation, we can derive a lower bound on $\DoF_s$ as shown below,
\begin{align}
\DoF_s  
&\stackrel{(a)} \geq \lim_{Q\rightarrow \infty}  \frac{ L\sum_{i=1}^s \Delta_i \left( \frac{1}{2} \log\left(\frac{1+ {h}_i r^{(1)}_{i-1}}{1+ {h}_i r^{(1)}_i} \cdot \frac{1+ {h}_{i-1} r^{(1)}_i}{1+ {h}_{i-1} r^{(1)}_{i-1}} \right) \right)} {\frac{1}{2} \log Q}  \nonumber\\
&= L\sum_{i=1}^s \Delta_i \left( \gamma_{i}-\gamma_{i-1} \right) 
\label{eq:dof_lower_bound}
\end{align}
where (a) follows from ${h}_i\gg {h}_{i-1}$ and Lemma~\ref{lem:PowerAllocation_Solution_All_I_Positive}.

\paragraph*{Upper bound on $\DoF_s$} An upper bound on $\DoF_s$ can be derived as shown below,
\begin{align*}
\DoF_s  
	& \stackrel{\text{(a)}}{\le} L \sum_{i>j} \delta_i\delta_j \left (\gamma_{i}-\gamma_{j} \right ) \nonumber\\
	&= L \sum_{i=1}^{s}\sum_{j=0}^{i-1} \sum_{k=j+1}^{i} \left (\gamma_{k}-\gamma_{k-1} \right )  \delta_i\delta_j  \nonumber\\
	&\stackrel{\text{(b)}}{=} L \sum_{k=1}^{s}\sum_{i=k}^{s}\sum_{j=0}^{k-1}(\gamma_{k}-\gamma_{k-1})\delta_i\delta_j \nonumber\\
	&=  L \sum_{k=1}^{s} \Delta_k (\gamma_{k-1}-\gamma_{k})
\end{align*}
where (a) follows from Theorem~\ref{thm:SecKeyGenCapacity_UpperBound_Gaussian} and (b) follows by exchanging the order of the summations. The above upper bound on $\DoF_s$ matches the lower bound in \eqref{eq:dof_lower_bound} and this completes the proof of the theorem.
\end{proof}

\begin{algorithm}[H]
\caption{The recursive core part of the algorithm that is called by Algorithm~\ref{alg:Solve_KKT_Conditions}.} \label{alg:Solve_KKT_RecursionMainPart}
\begin{algorithmic}[1]  
  \Function{Recursion}{$\mathrm{d}$,$\{h_i\}_{i=0}^s$, $\{\Delta_k\}_{k=1}^s$}
    \State $\mathrm{SolutionSet} = \varnothing$
    \State Find an index $j$ such that $\mathrm{d}(j).\mathrm{determined}=\mathrm{false}$ \label{alg:line:Pick_a_k} 
    \If{there is such $j$}
      \State $k=\mathrm{d}(j).\mathrm{l}$
      \If{$\mathrm{d}(j).\mathrm{l} = \mathrm{d}(j).\mathrm{u}$} \Comment{Linear case ($I^*_k\neq I^*_{k+1}$)} \label{alg:line:Linear_Case}
        \State Find the root $r^{(1)}$ of the numerator of $F^{(1)}_k(x)$ defined in \eqref{eq:PowerAllocation_LinearCase_root}
        \State Based on the value of $r^{(1)}$, break the interval $[\mathrm{d}(j).\mathrm{min}, \mathrm{d}(j).\mathrm{max}]$ if necessary
        \ForAll{possible subinterval of the interval $[\mathrm{d}(j).\mathrm{min}, \mathrm{d}(j).\mathrm{max}]$}
          \State Find the sign of $F^{(1)}_k(x)$ in this subinterval
          \State According to the sign of $F^{(1)}_k(x)$ (and based on Case~\ref{cond:PowerAloc_KKT_LinearCase_Conds}), update $\mathrm{d}(j).\mathrm{l}$, $\mathrm{d}(j).\mathrm{u}$, $\mathrm{d}(j).\mathrm{min}$, $\mathrm{d}(j).\mathrm{max}$, and $\mathrm{d}(j).\mathrm{determined}$, but to a new data structure $\mathrm{d}'$
          \If{$\mathrm{d}'$ is consistent up to this point} \Comment{if $\forall i$ we have $\mathrm{d}'(i).\mathrm{min} \le \mathrm{d}'(i).\mathrm{max}$} \label{alg:line:Chk_Consistency_LinCase}
            \State $\mathrm{SolutionSet} \gets \mathrm{SolutionSet}\ \cup$ \Call{Recursion}{$\mathrm{d}'$, $\{h_i\}_{i=0}^s$, $\{\Delta_k\}_{k=1}^s$} \label{alg:line:Call_Recrsion_LinCase}
          \EndIf
        \EndFor
      \Else \Comment{Quadratic case}  \label{alg:line:Quadratic_Case}
        \State Find the roots $r^{(2)}_1$ and $r^{(2)}_2$ of the numerator of $F^{(2)}_k(x)$ 
        \State Based on the values of $r^{(2)}_1$ and $r^{(2)}_2$, break the interval $[\mathrm{d}(j).\mathrm{min}, \mathrm{d}(j).\mathrm{max}]$ if necessary
        \ForAll{possible subinterval of the interval $[\mathrm{d}(j).\mathrm{min}, \mathrm{d}(j).\mathrm{max}]$}
          \State Find the sign of $F^{(2)}_k(x)$ in this subinterval
          \State According to the sign of $F^{(2)}_k(x)$, update $\mathrm{d}(j).\mathrm{l}$, $\mathrm{d}(j).\mathrm{u}$, $\mathrm{d}(j).\mathrm{min}$, $\mathrm{d}(j).\mathrm{max}$, and $\mathrm{d}(j).\mathrm{determined}$, but to a new data structure $\mathrm{d}'$
          \If{$\mathrm{d}'$ is consistent up to this point} \Comment{if $\forall i$ we have $\mathrm{d}'(i).\mathrm{min} \le \mathrm{d}'(i).\mathrm{max}$} \label{alg:line:Chk_Consistency_QuadCase}          
            \State $\mathrm{SolutionSet} \gets \mathrm{SolutionSet}\ \cup$ \Call{Recursion}{$\mathrm{d}'$, $\{h_i\}_{i=0}^s$, $\{\Delta_k\}_{k=1}^s$} \label{alg:line:Call_Recrsion_QuadCase}
          \EndIf
        \EndFor      
      \EndIf      
    \Else \Comment{If all $\mathrm{d}(j)$'s are determined}
    	  \If{the found solution is consistent} \Comment{if $\forall i$ we have $\mathrm{d}(i).\mathrm{min} \le \mathrm{d}(i+1).\mathrm{min}$}
    	    \State $\mathrm{SolutionSet} = \left\{ \mathrm{d} \right\}$ \label{alg:line:Dtermined_and_Satisfied}
    	  \EndIf
    \EndIf
    \State \Return $\mathrm{SolutionSet}$
  \EndFunction
\end{algorithmic}
\end{algorithm}

\subsection{Numerical Evaluations} \label{subsec:GrpSecGauss-numerical_evaluation}
In this section, we numerically evaluate the performance of the secret key sharing scheme proposed in Section~\ref{sec:GrpSecretKey-GaussBrdcstChnl-SolvePowerAllocProblem} for a few examples and compare it with the upper bound stated in Theorem~\ref{thm:SecKeyGenCapacity_UpperBound_Gaussian}.

\begin{example}\label{ex:Numerical_Evaluation_Ex1}
Consider a setup with $3$ states ($s=2$) where $h_0 = -5\mathrm{dB}$, $-5\mathrm{dB} < h_1<30\mathrm{dB}$ and $h_2 = 30\mathrm{dB}$.
The probability distribution across the states is assumed to be uniform. Figure~\ref{fig:three_states_UB_vs_LB} shows the achievable rate and the upper bound as a function of $h_1$ with the following choices of $P_{\mathrm{max}}$: (a) $P_{\mathrm{max}}= 0.01$ and (b) 
$P_{\mathrm{max}}= 10$. Clearly, there is a gap between the upper bound and the achievable rate. As it is mentioned before, the proposed scheme is not optimal in an absolute sense, but only in a degrees of freedom sense as proved in \S\ref{sec:HighDynamic_HighSNR_Regime}.

\begin{figure}
\centering
\includegraphics[scale=0.5]{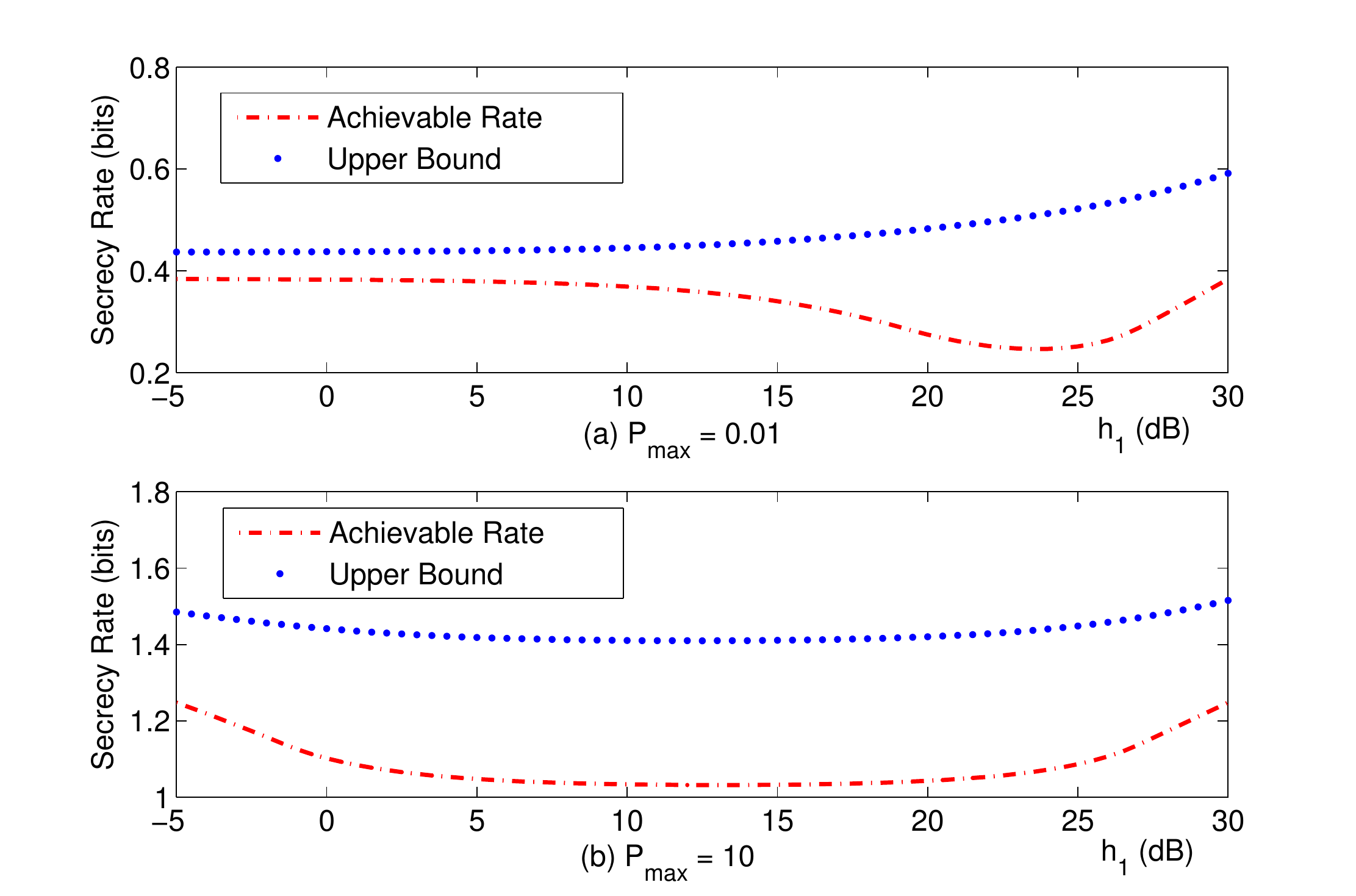}
\caption{The achievable rate and the upper bound as a function of $h_1$ with $P_{\mathrm{max}}$: (a) $P_{\mathrm{max}}= 0.01$, (b) $P_{\mathrm{max}}= 10$ (see Example~\ref{ex:Numerical_Evaluation_Ex1}).}
\label{fig:three_states_UB_vs_LB}
\vspace{-15pt}
\end{figure} 
\end{example}

\begin{example}\label{ex:Numerical_Evaluation_Ex2}
Consider a setup with $4$ states where $h_0 = -5\mathrm{dB}$, $h_3 = 30\mathrm{dB}$, $h_1 = \min[g_1, g_2]\ \mathrm{dB}$ and $h_2 = \max[g_1, g_2]\ \mathrm{dB}$ where $-5\mathrm{dB}$ $< g_1,g_2<30\mathrm{dB}$.
The probability distribution across the states is assumed to be uniform.
Figure~\ref{fig:four_states_UB_vs_LB} shows the achievable rate and the upper bound as a function of $g_1$ and $g_2$ with $P_{\mathrm{max}}=10$. Similar to Example~\ref{ex:Numerical_Evaluation_Ex1}, this illustrates the absolute gap between the upper bound and the achievable rate.
\begin{figure}
\centering
\includegraphics[scale=0.3]{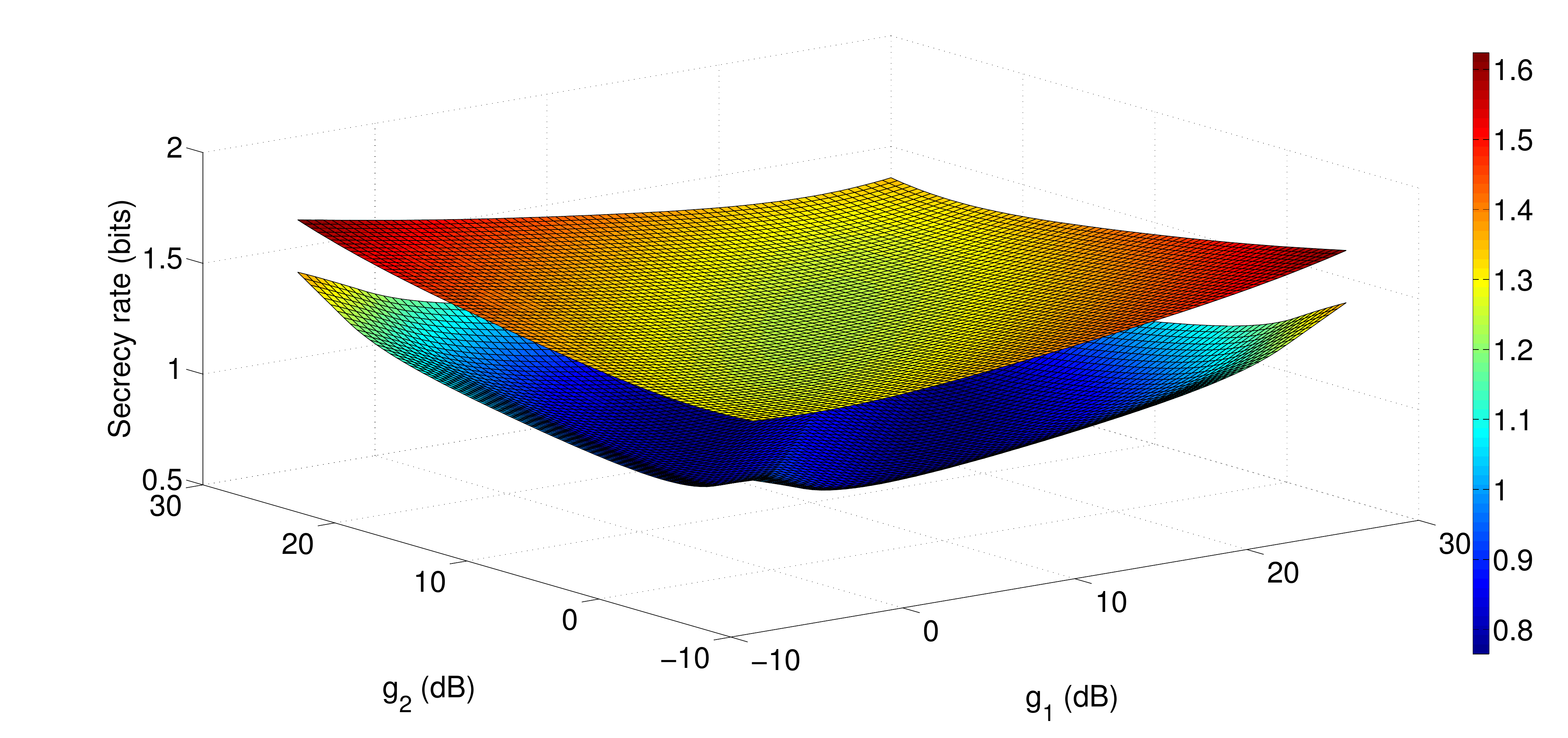}
\caption{The achievable rate (lower surface) and the upper bound (upper surface) as a function of $g_1$ and $g_2$ with $P_{\mathrm{max}}=10$ in a setup with $4$ equiprobable states (see Example~\ref{ex:Numerical_Evaluation_Ex2}).}
\label{fig:four_states_UB_vs_LB}
\vspace{-15pt}
\end{figure} 
\end{example}

\begin{example}\label{ex:Numerical_Evaluation_Ex3}
Consider a setup with $36$ (equiprobable) states, uniformly spaced in the range $-5\mathrm{dB}$ to $30\mathrm{dB}$ (\ie, $h_i= (-5+i)\mathrm{dB}$, $i\in[0:s]$). Figure~\ref{fig:power_profile_5_to30} shows the fraction of $P_{\mathrm{max}}$ allocated to each state by the proposed scheme for $P_{\mathrm{max}}\in\{0.1, 1, 10, 100\}$.
\begin{figure}
\centering
\includegraphics[scale=0.35]{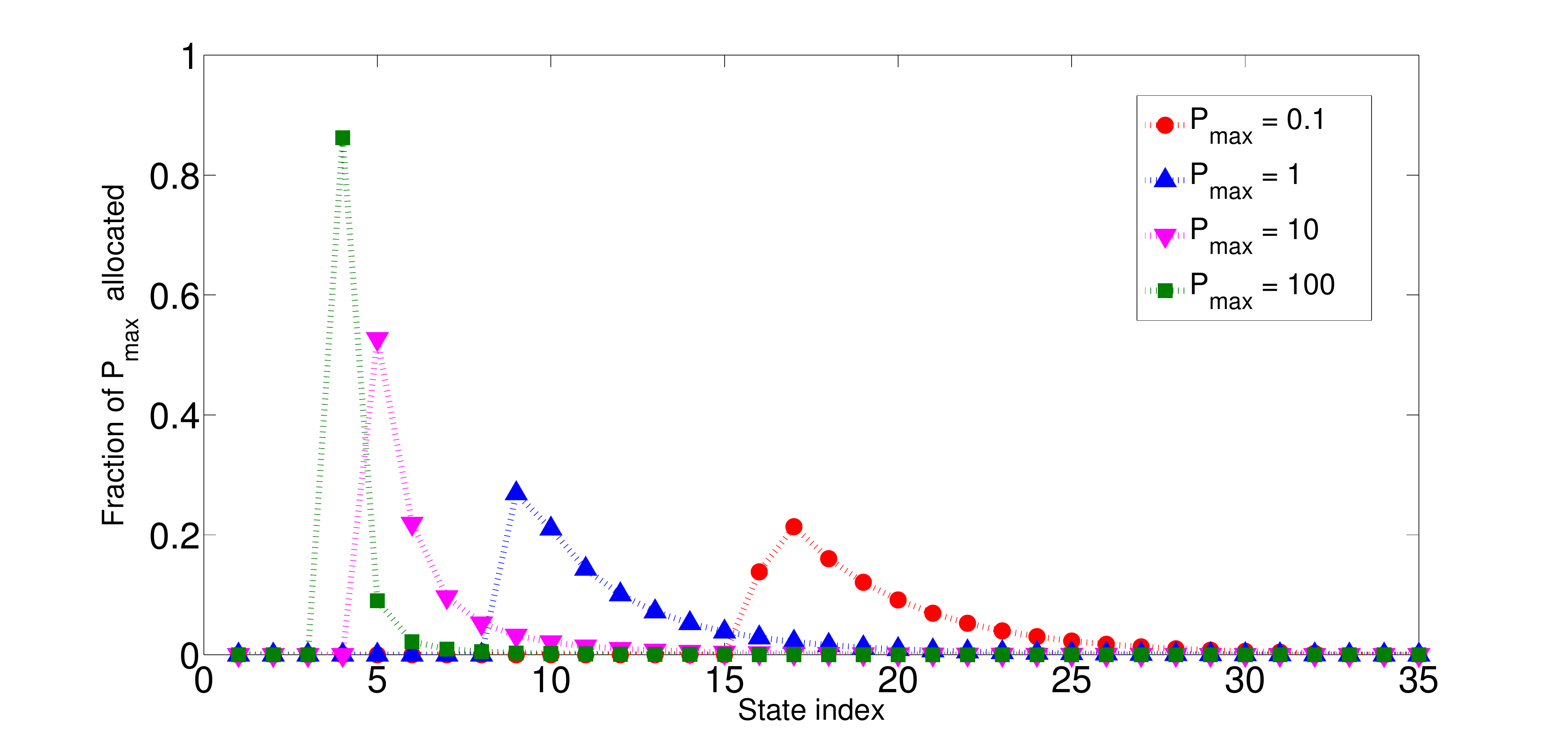}
\caption{Fraction of $P_{\mathrm{max}}$ allocated to each layer by the proposed scheme as explained in Example~\ref{ex:Numerical_Evaluation_Ex3}.
}
\label{fig:power_profile_5_to30}
\vspace{-10pt}
\end{figure}
\end{example}

The above examples illustrate different aspects of the proposed scheme: the gap with respect to the upper bound and the distribution of power across the states. In general, these aspects depend on the setup parameters. For a given setup, the numerical implementation of our proposed scheme can be used for efficient evaluations, even with a large number of states (\eg, $36$ states in Example~\ref{ex:Numerical_Evaluation_Ex3}).

\section{Discussion, Open Questions and Future Directions}
\label{sec:OpenQuestion_FutureDirections}
Here, in this section we bring forward discussion about multi-party secret key sharing problem, open questions and possible future directions. 

First, the SKG capacity problem among multiple terminals over a state-dependent Gaussian channel in the presence of a passive eavesdropper is still unsolved.
But, the optimality of the proposed multi-party secret key sharing scheme has been shown for deterministic channels (that includes the erasure channels as a special case). By having intuition from this result, the achievability scheme for the Gaussian state-dependent channel is based on the message level erasure, simulated by using the wiretap code. However, in our outer bound on the SKG capacity, we do not have such an assumption and this can be a reason explains the gap between our achievability scheme and outer bound. 

Similar ideas used in this work for the secret key sharing problem over erasure channels can also be applied for the secret communication over these channels, \eg, see \cite{IT15-Laszlo-Secrecy}. However, in our work, we go beyond and used these ideas to propose a coding scheme for multi-terminal secret key sharing over the Gaussian state-dependent broadcast channel (in the presence of public discussion). On the other hand, this is still open whether the same connection can be obtained between secret communication over erasure and state-dependent Gaussian channels or not.

In our achievability scheme, we use public channel to send feedback from all the receivers to Alice. However, it is worth mentioning that although the public channel is available and without cost, we use it to communicate only the channel state which is a limited feedback; but not to transmit all the output feedback. 
Hence, it is possible to adopt our protocol to use ACK/NAK (\eg, similar to \cite{argyraki_creating_2013}) instead of public channel. However, the resulting protocol maybe not optimal even for the deterministic channels.

Finally, we would like to emphasize that this thread of work is not pure theoretical and there have been some attempts to implement these ideas (\eg, see \cite{safaka_exchanging_2011,safaka_exchanging_2013, atsan_low_2013,argyraki_creating_2013}). As an example, \cite{argyraki_creating_2013} reports to create shared secret key in a test-bed containing 5 nodes at rate 10 kbit/sec, with their secrecy being independent of the adversary's computational capabilities.

\appendices
\section{Group Secret Key Agreement over Erasure Broadcast Channels}
\label{sec:GrpSecKeyAgr_ErasureBrdcstChnl}
In this appendix, we characterize the secret key generation capacity among multiple terminals communicating over an erasure broadcast channel. More precisely we prove the following result.

\begin{theorem}[{\cite{siavoshani_group_2010}}]\label{thm:GrpScrtKey-ErsrChnl}
The secret key generation capacity among $m$ terminals that have access to an erasure broadcast channel
is given by
\begin{equation*}
C^{\mathsf{ers}}_s = (1-\delta)\delta_\eve \left(\pktlen\log{q}\right),
\end{equation*}
where $\delta$ is the erasure probability from Alice to the rest of terminals and $\delta_\eve$ is the erasure probability from Alice to Eve. It worth to mention that the capacity $C^{\mathsf{ers}}_s$ does not depend on $m$.
\end{theorem}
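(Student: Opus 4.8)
The plan is to prove the two inequalities separately, reusing the machinery already set up in the body. Since the erasure channel is exactly the $s=1$ instance of the deterministic model (with a single nontrivial layer), the claim is morally a corollary of Theorem~\ref{thm:GrpScrtKey-DetBrdcstChnl-MainRslt}; nevertheless I would give a self-contained argument, as the erasure case is the primitive on which the layered schemes are built.

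For the converse I would start from Theorem~\ref{thm:SecrecyUpBound-CsNa08-forIndpChnl}, which (by the symmetry of the channels to the honest terminals) reduces the bound to $C^{\mathsf{ers}}_s \le \max_{P_{X_\alice}} I(X_\alice;X_\bob\mid X_\eve)$, where $\bob$ and $\eve$ denote a generic honest terminal and Eve. Writing $X_\bob=(\hat{X}_\bob,S_\bob)$ and $X_\eve=(\hat{X}_\eve,S_\eve)$ and conditioning on the (input-independent) erasure states, only the event ``$\bob$ receives while $\eve$ is erased'' contributes:
\begin{align*}
I(X_\alice;X_\bob\mid X_\eve)
&= \sum_{a,b}\Prob{S_\bob=a,\,S_\eve=b}\, I\!\left(X_\alice;\hat{X}_\bob\mid \hat{X}_\eve,S_\bob=a,S_\eve=b\right)\\
&= (1-\delta)\,\delta_\eve\, H(X_\alice),
\end{align*}
because if $\bob$ is erased the term vanishes, and if $\eve$ receives then $\hat{X}_\eve=X_\alice$ determines $X_\alice$. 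Taking $X_\alice$ uniform on $\Fbb_q^\pktlen$ maximizes $H(X_\alice)=\pktlen\log q$, giving $C^{\mathsf{ers}}_s \le (1-\delta)\delta_\eve\,(\pktlen\log q)$.

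For the achievability I would have Alice broadcast $N$ i.i.d.\ uniform packets from $\Fbb_q^\pktlen$. By independence of the erasures, each honest terminal typically receives a $(1-\delta)$ fraction, Eve a $(1-\delta_\eve)$ fraction, and a given packet is received by a fixed honest terminal but missed by Eve with probability $(1-\delta)\delta_\eve$. The honest terminals then reconcile, over the public channel, to a common collection of packets, after which a random linear hash (privacy amplification) distills the key and conditions \eqref{eq:AchvRate_Cond1}--\eqref{eq:AchvRate_Cond3} are verified by a leftover-hash argument. For $m=2$ this is the classical source-model construction; the point of interest for $m>2$ is that reconciliation is performed by having Alice transmit \emph{network-coded} (random linear) combinations of her packets on the public channel, so that each honest terminal can invert for the shared collection using the packets it already holds, exactly as in \cite{siavoshani_group_2010,argyraki_creating_2013}.

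The hard part is the reconciliation-with-leakage bookkeeping that makes the rate independent of $m$. Naively reconciling the full union of received packets broadcasts too many independent equations and erodes Eve's equivocation for $m\ge 3$; the combinations sent publicly must instead be arranged so that, as far as possible, they lie in the span of what Eve can already reproduce from $X_\eve^n$, while still covering every honest terminal's deficit. The MDS / random linear coding property is what simultaneously guarantees (i) vanishing decoding error at all honest terminals, and (ii) that the residual equivocation of the reconciled data given $(X_\eve^n,D^n)$ stays at $(1-\delta)\delta_\eve\,(\pktlen\log q)$ per use, matching the converse and showing no dependence on $m$. I would defer the detailed block-length and concentration estimates to \cite{siavoshani_group_2010}, whose erasure-channel analysis this appendix is reviewing.
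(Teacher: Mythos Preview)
Your converse is correct and is exactly the paper's argument: apply Theorem~\ref{thm:SecrecyUpBound-CsNa08-forIndpChnl}, condition on the erasure indicators, and observe that only the event ``Bob receives, Eve erased'' contributes $H(X_\alice)$, maximized by the uniform input.

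For the achievability, however, your outline differs structurally from the paper's scheme and, as written, contains a real gap. You propose the classical order \emph{reconcile, then privacy-amplify}, and you try to preserve the rate for $m>2$ by arranging the public linear combinations to ``lie in the span of what Eve can already reproduce from $X_\eve^n$.'' But Eve's reception pattern is unknown to the honest parties, so this alignment is not something the protocol can enforce; the MDS/random-linear property does not by itself make the reconciliation messages fall into Eve's span.

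The paper instead reverses the order and adds a partitioning step. Alice first partitions the received packets into the sets $\set{N}_{\set{S}}$ indexed by the exact subset $\set{S}\subseteq[1:m-1]$ of honest receivers, and \emph{privacy-amplifies within each $\set{N}_{\set{S}}$ first}: from each $\set{N}_{\set{S}}$ she forms $\delta_\eve\,n_{\set{S}}$ linear combinations ($y$-packets) using an MDS generator (Lemma~\ref{lemma_linear_1}), which are secure from Eve \emph{regardless} of which $(1-\delta_\eve)n_{\set{S}}$ packets of $\set{N}_{\set{S}}$ Eve happened to see. Only then does reconciliation occur, on the already-secret $y$-packets: Alice publicly reveals $h-\min_i h_i$ network-coded $z$-packets (Lemma~\ref{lemma_linear_3}) so that every honest terminal recovers all $y$-packets, and the key is taken to be $\min_i h_i$ combinations of the $y$-packets chosen to complete the $z$-packets to a basis (Lemma~\ref{lemma_linear_2}). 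The leakage accounting is then purely linear-algebraic: the $z$-packets cost exactly their number, independent of Eve's realization, and concentration of $\min_i h_i$ around $(1-\delta)\delta_\eve n$ gives the rate, with no dependence on $m$.

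So the missing idea in your sketch is the per-subset privacy amplification \emph{before} reconciliation; once that is in place, the public discussion need not (and does not try to) align with Eve's span.
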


The converse part of Theorem~\ref{thm:GrpScrtKey-ErsrChnl} is a direct corollary of Theorem~\ref{thm:SecKeyGenCapacity_UpperBound_Deterministic} assuming that we have only two states; \ie, complete erasure and complete reception. The achievability part is stated in \S\ref{sec:GrpScrtKey-ErsrChnl-AchvScheme}.
\subsection{Lower Bound for the Key Generation Capacity}\label{sec:GrpScrtKey-ErsrChnl-AchvScheme}
Here we describe and analyse our proposed achievability scheme for the secret key generation over erasure broadcast channels. The proposed scheme achieves secrecy rate that matches the upper bound derived from Theorem~\ref{thm:SecKeyGenCapacity_UpperBound_Deterministic}. Moreover, the complexity of this scheme is polynomial in the packet length $L$ and block length $n$. The proposed scheme consists of several phases and it proceeds as follows.

\noindent{\bf \small{Private Phase:}}
\begin{enumerate}
\item
Alice broadcasts $n$ packets, ${x}_1,\ldots,{x}_n,$ where 
${x}_i\in\Fbb_q^{\pktlen}$ and ${x}_i\sim\Uni{\Fbb_q^\pktlen}$ 
(we will call them ``$x$-packets''). Of these, $n^*$ packets are received 
by at least one honest node. This set is denoted by $\mc{N}^*$ where $n^*=|\mc{N}^*|$.
\end{enumerate}

\noindent{\bf \small{Public Discussion (Initial Phase):}}
\begin{enumerate}
\item
Each honest node sends Alice publicly a feedback message specifying which 
$x$-packets it received. Let $\set{I}_{i}$ denotes the set of packets' indices
received by the $i$th terminal where $i\in[1:m-1]$. 
\item 
Alice constructs $h = \delta_\eve \cdot n^*$ linear combinations of the $x$-packets, ${y}_1, \ldots, {y}_h$ (we will call them ``$y$-packets''), as follows:

\textsf{(i)} She divides the set $\set{N}^*$ of $x$-packets that were received by at least one honest 
node into non-overlapping subsets, such that each subset consists of all the 
packets that were commonly received by a different subset of honest nodes.
To be more precise, let $\set{S}$ be an arbitrary non-empty subset of $[1:m-1]$ and let us define the set
\begin{equation*}
\set{N}_{\set{S}} \triangleq \big\{ i\in[1:n]\ |\ i\in \set{I}_{j}:\forall j\in \set{S},\text{ and } i\notin \set{I}_{j}:\forall j\in \overline{\set{S}} \big\}.
\end{equation*}
Then we have
\begin{equation*}
\set{N}^* = \bigcup_{\varnothing \neq \set{S}\subseteq [1:m-1]} \set{N}_{\set{S}}.
\end{equation*}
Figure~\ref{fig:BobCalvin_RcvdSets} shows all the sets $\set{N}_{\set{S}}$, $\set\subseteq [1:2]$ for the case where $m=2$.

\begin{figure}
\centering
\includegraphics[scale=0.9]{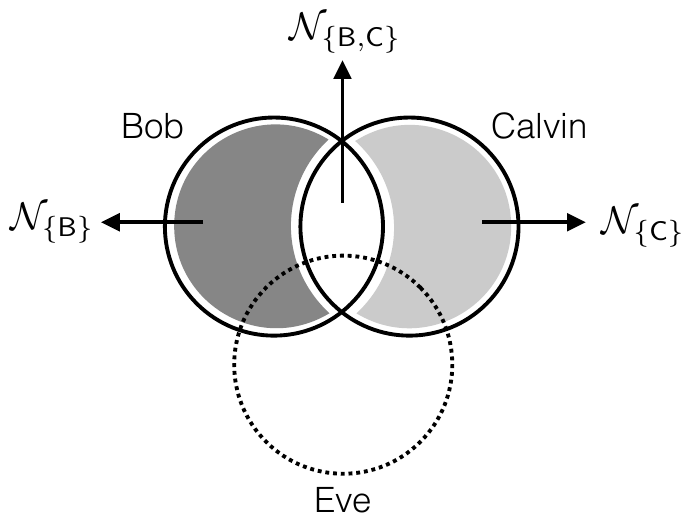}
\caption{Here, we use Bob and Calvin to denote for $\ter_1$ and $\ter_2$, respectively. The sets $\set{N}_{\{\bob\}}, \set{N}_{\{\calvin\}}$, and $\set{N}_{\{\bob,\calvin\}}$ are shown for the case where $m=2$. For this case we have $\set{N}^* = \set{N}_{\{\bob\}} \cup \set{N}_{\{\calvin\}} \cup \set{N}_{\{\bob,\calvin\}}$.}
\label{fig:BobCalvin_RcvdSets}
\end{figure}

\textsf{(ii)} From each such subset of packets $\set{N}_{\set{S}}$, she creates 
$\delta_\eve \cdot n_{\set{S}}$ linear combinations using the construction described in Lemma~\ref{lemma_linear_1} (provided in \Apn\ref{sec:GrpScrtKey-Apndx-SomeLemmas}), 
where $n_{\set{S}} \triangleq|\set{N}_{\set{S}}|$.
Then she publicly reveals the coefficients she used to create all the $y$-packets.
\item 
Each node $i\in[1:m-1]$ reconstructs as many (say $h_i$) of the $y$-packets as it 
can (based on the $x$-packets it received in step \#1). For $h_i$ we can write
\begin{equation*}
h_i =  \sum_{\begin{subarray}{c}\varnothing \neq \set{S}\subseteq [1:m-1]:\ i\in \set{S} \end{subarray}} \delta_\eve\cdot n_{\set{S}}.
\end{equation*}
Note that as $n$ grows we have $h_i\rightarrow\Expc{h_i}$ which is equal to
$\Expc{h_i} = (1-\delta)\delta_\eve n$.
Up to this point of the protocol, and assuming that $n$ is large enough, for each subset $\set{S}\subseteq [1:m-1]$ the terminals in the set $\set{S}\cup \{0\}$ have created a common secret key (which is secure with high probability) of length $\delta_\eve n_{\set{S}}$ among themselves.
\end{enumerate}

\noindent{\bf \small{Public Discussion (Reconciliation Phase):}}
\begin{enumerate}
\item 
Alice creates $h-\min_i h_i$ linear combinations of the $y$-packets 
(we will call them ``$z$-packets''), using the construction provided in 
Lemma~\ref{lemma_linear_3} (provided in \Apn\ref{sec:GrpScrtKey-Apndx-SomeLemmas}).
She publicly reveals both the contents and the coefficients of the 
$z$-packets, such that each node $i$ receives at least $h-h_i$ of them.
\item
Each node $i$ combines the ${h-h_i}$ $z$-packets it received with the 
$h_i$ $y$-packets it recreated in phase 1, and reconstructs all the $y$-packets.
\item 
Alice creates $l = \min_i h_i$ linear combinations of the $y$-packets, $k_1, \ldots, k_l$
(we will call them ``$k$-packets''), using the construction stated in 
Lemma~\ref{lemma_linear_2} (provided in the \Apn\ref{sec:GrpScrtKey-Apndx-SomeLemmas}).
She publicly reveals the coefficients she used to create all the $k$-packets. 
\item
Each node $i$ reconstructs all the $k$-packets. The common secret key is the 
concatenation of all the $k$-packets, $K = \{k_1, \ldots, k_l \}$.
\end{enumerate}

Now, based on Definition~\ref{def:GrpSecKey-SecKeyGenProtocol}, we summarize the above achievability scheme as follows. At $t=0$ Alice generates the random variable $Q_0=\{x_1,\ldots,x_n \}$ where ${x}_i\sim\Uni{\Fbb_q^\pktlen}$. We have also $Q_{1:{m-1}}=\varnothing$.
For each time $t$, $1\le t<n$, she broadcasts ${X}_0[t]={x}_t$ and there is no public discussions afterwards; namely we have ${D}[t]=\varnothing$. After the $n$th transmission by Alice there is a public discussion in many rounds; summarized as follows. We have ${D}[n]=(P_1,P_2,P_3,P_4)$ where $P_1$ denotes the set of indices $\set{I}_{i}$ that have been sent back by the honest terminals, $P_2$ denotes the coefficients of the $y$-packets, $P_3$ denotes the $z$-packets and their coefficients, and finally $P_4$ represents the coefficients of $k$-packets.

The performance of the proposed protocol in terms of the secret key generation rate, is stated in the following theorem.

\begin{theorem}\label{thm:AchvAecrecyRate-ErasureChnl}
The achievable secret key generation rate of the above scheme is
\begin{equation*}
{R}^{\mathsf{ers}}_s = (1-\delta)\delta_\eve \left(\pktlen\log{q}\right).
\end{equation*}
It is noteworthy to mention that the achievable rate does not depend on the number of terminals $m$.
\end{theorem}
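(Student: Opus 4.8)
The plan is to verify the three requirements of Definition~\ref{def:GrpSecKey-AchvKeyRate-Capacity} for the protocol described above: reliability \eqref{eq:AchvRate_Cond1}, secrecy \eqref{eq:AchvRate_Cond2}, and rate \eqref{eq:AchvRate_Cond3}. First I would fix notation and establish concentration. Since the honest nodes' erasures are independent $\mathrm{Bernoulli}(\delta)$ and Eve's are an independent $\mathrm{Bernoulli}(\delta_\eve)$ thinning, the law of large numbers gives $n^*/n \to 1-\delta^{m-1}$, and for each $i$ the number of packets received by node $i$, namely $\sum_{\set{S}\ni i} n_{\set{S}}$, concentrates around $(1-\delta)n$, so that $h_i/n \to (1-\delta)\delta_\eve$ and hence $l=\min_i h_i$ obeys $l/n \to (1-\delta)\delta_\eve$. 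Because the key $K=(k_1,\ldots,k_l)$ consists of $l$ packets in $\Fbb_q^L$ obtained from the uniform $x$-packets through a full-rank linear map, on the typical event $H(K_0)=l\,L\log q$, whence $\tfrac1n H(K_0)\to(1-\delta)\delta_\eve(L\log q)$, establishing \eqref{eq:AchvRate_Cond3}.

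For reliability I would argue that after the reconciliation phase every honest node holds all $h$ $y$-packets. Node $i$ recreates $h_i$ of them in the initial phase and, by the construction of the $z$-packets in Lemma~\ref{lemma_linear_3}, receives at least $h-h_i$ further independent combinations; together these determine all $h$ $y$-packets. Each node then applies the same publicly announced coefficients (Lemma~\ref{lemma_linear_2}) to compute identical $k$-packets, so $K_i=K_j$ for all $i,j$ except on the vanishing-probability event that one of the concentration or full-rank conditions fails, which gives \eqref{eq:AchvRate_Cond1}.

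The heart of the argument, and the step I expect to be the main obstacle, is the secrecy condition \eqref{eq:AchvRate_Cond2}. I would first show that, conditioned on Eve's entire initial-phase view $V_0$ — the $x$-packets she overheard, the public feedback $\{\set{I}_i\}$, and the announced $y$-coefficients — the collection $Y=(y_1,\ldots,y_h)$ of $y$-packets is (almost) uniform over $(\Fbb_q^L)^h$. This is where Lemma~\ref{lemma_linear_1} enters: within each block $\set{N}_{\set{S}}$ Eve has overheard at most $(1-\delta_\eve)n_{\set{S}}+o(n)$ of the $n_{\set{S}}$ packets, and the MDS-type construction guarantees that the $\delta_\eve n_{\set{S}}$ chosen combinations are linearly independent of any such set of overheard coordinates, for every typical realization of Eve's (random, and unknown to Alice) erasure pattern. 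The delicate points are that the guarantee must hold uniformly over all admissible Eve received-sets and must survive the union over the $2^{m-1}-1$ blocks $\set{N}_{\set{S}}$; a margin of $\epsilon n$ absorbed into the typicality set handles the fluctuations.

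Finally I would propagate this secrecy through the public reconciliation. The revealed $z$-packets form $h-l$ fixed linear combinations $Z=M_z Y$, and the key is $K=M_k Y$ with $M_k$ the $l\times h$ coefficient matrix of the $k$-packets; by the joint design of Lemmas~\ref{lemma_linear_2} and~\ref{lemma_linear_3} the $h\times h$ matrix obtained by stacking $M_z$ on top of $M_k$ is invertible over $\Fbb_q$. Hence $Y\mapsto(Z,K)$ is a bijection on $(\Fbb_q^L)^h$, so conditioned on $V_0$ the pair $(Z,K)$ is uniform and $K$ is uniform and independent of $Z$. Collecting Eve's full view $(V_0,Z,\text{public coefficients})$ then yields $I(K_0;X_\eve^n,D^n)\le\epsilon$ once the $o(1)$ slack from the approximate uniformity of $Y$ is accounted for. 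What makes this tight is the counting: the $h=\delta_\eve n^*$ secured $y$-packets, minus the $h-l$ degrees of freedom disclosed by the $z$-packets, leave exactly $l$ secret packets, which is precisely the length of the key and yields the rate $(1-\delta)\delta_\eve(L\log q)$ independently of $m$.
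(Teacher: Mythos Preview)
Your proposal is correct and follows essentially the same route as the paper: verify the three conditions of Definition~\ref{def:GrpSecKey-AchvKeyRate-Capacity} using concentration of $l/n$ for the rate, the constructive reconciliation (Lemma~\ref{lemma_linear_3}) for reliability, and for secrecy first establish that $Y$ is hidden from Eve's initial view via Lemma~\ref{lemma_linear_1} applied block-by-block, then propagate this through the public phase using the bijection $Y\leftrightarrow(Z,K)$ guaranteed by Lemma~\ref{lemma_linear_2}. The paper phrases the secrecy step with a chain-rule decomposition of $I(K,P_3,P_4;X_\eve^n,P_1,P_2)$ rather than your ``$Y$ is conditionally uniform'' language, but the content is the same.
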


\begin{proof}
The way that the achievability scheme is proposed, constructively satisfies  
Condition~\eqref{eq:AchvRate_Cond1}, \ie, we have
$\Prob{K_i\neq K_j} = 0$, $\forall i,j\in [0:m-1],\ i\neq j$.

To prove Condition~\eqref{eq:AchvRate_Cond3}, we proceed as follows. First, let us define $\overline{l}\triangleq l/n$. Then we can write
\begin{align*}
H(K) &= H(K,\overline{l}) = H(K|\overline{l}) + H(\overline{l}) \nonumber\\
&\ge H(K|\overline{l}) \nonumber\\
&= H(K|\alpha<\overline{l}<\beta)\cdot \Prob{\alpha<\overline{l}<\beta} + H(K|\overline{l}\ge\beta)\cdot \Prob{\overline{l}\ge\beta}
+ H(K|\overline{l}\le\alpha)\cdot \Prob{\overline{l}\le\alpha} \nonumber\\
&\ge H(K|\alpha<\overline{l}<\beta)\cdot \Prob{\alpha<\overline{l}<\beta} \nonumber\\
&\ge n\alpha \left(\pktlen\log{q}\right) \Big[1-\Prob{\overline{l}\le\alpha}-\Prob{\overline{l}\ge\beta} \Big],
\end{align*}
where $\mu=(1-\delta)\delta_\eve$ and also $\alpha=\mu-\gamma$ and $\beta=\mu+\gamma$ for some small $\gamma$ such that $0<\gamma\le \mu$.
Now by applying the concentration result of Lemma~\ref{lem:ConcentrationLemma} 
(see \Apn\ref{sec:GrpScrtKey-Apndx-SomeLemmas}), we have
\begin{equation*}
\Prob{\overline{l}\le\alpha} \le m\exp\left(-\frac{\gamma^2}{2\mu} n\right)\triangleq a,
\end{equation*}
and
\begin{equation*}
\Prob{\overline{l}\ge\beta} \le \exp\left(-\frac{m\gamma^2}{3\mu} n\right) \triangleq b.
\end{equation*}
Hence, observe that by choosing 
\begin{equation*}
{R}^{\mathsf{ers}}_s = \mu (\pktlen\log{q})
\end{equation*}
and
\begin{gather*}
\epsilon = \mu\left(\pktlen\log{q}\right) \left[ a+b \right] + \gamma \left(\pktlen\log{q}\right) \left[1- a-b \right]
\end{gather*}
we have Condition~\eqref{eq:AchvRate_Cond3} satisfied, \ie,
$\frac{1}{n}H(K) > {R}^{\mathsf{ers}}_s-\epsilon$.
Finally, we get the desired result by making $\gamma$ arbitrarily small because we have $\epsilon\rightarrow 0$ if $\gamma\rightarrow 0$.

To prove Condition~\eqref{eq:AchvRate_Cond2}, we need to show that
\begin{equation*}
I(K;{X}_\eve^n,P_1,P_2,P_3,P_4)<\epsilon.
\end{equation*}
By using a similar technique used above to bound $H(K)$, (more precisely by using Lemma~\ref{lemma_linear_1} and some concentration results for $n_{\set{S}}$), it can be shown that
\begin{equation}\label{eq:proof_achv_secrecy_rate_1}
I({Y};{X}_\eve^n,P_1,P_2) < \epsilon
\end{equation}
where $Y$ is a random variable representing the $y$-packets.
Using Lemma~\ref{lemma_linear_2}, by construction, we have also
\begin{equation}\label{eq:proof_achv_secrecy_rate_2}
I(K ; P_3,P_4)=0.   
\end{equation}
Now we know that the coefficients of the $z$-packets and $k$-packets form
a basis (see Lemma~\ref{lemma_linear_2}) so the random variable ${Y}$ and the 
random variable $(K,P_3,P_4)$ are equivalent (having one we have the other).
Then we can write \eqref{eq:proof_achv_secrecy_rate_1} as follows
\begin{align*}
I({Y} ; {X}_\eve^n,P_1,P_2) &= I(K,P_3,P_4 ; {X}_\eve^n,P_1,P_2) \nonumber\\
&= I(P_3,P_4 ; {X}_\eve^n,P_1,P_2) + I(K ; {X}_\eve^n,P_1,P_2 | P_3,P_4) < \epsilon,
\end{align*}
so 
\begin{equation}\label{eq:proof_achv_secrecy_rate_3}
I(K ; {X}_\eve^n,P_1,P_2 | P_3,P_4) < \epsilon.   
\end{equation}
Now we can expand
\begin{align*}
I(K ; {X}_\eve^n,P_1,P_2,P_3,P_4) &= I(K ; P_3,P_4) + I(K ; {X}_\eve^n,P_1,P_2 | P_3,P_4),
\end{align*}
where the first term is zero by \eqref{eq:proof_achv_secrecy_rate_2} 
and second term is very small because of \eqref{eq:proof_achv_secrecy_rate_3}. This concludes the theorem. 
\end{proof}

Theorem~\ref{thm:AchvAecrecyRate-ErasureChnl} in addition to the upper bound derived from Theorem~\ref{thm:SecKeyGenCapacity_UpperBound_Deterministic}, complete the proof of Theorem~\ref{thm:GrpScrtKey-ErsrChnl}.


\subsection{Some Lemmas}\label{sec:GrpScrtKey-Apndx-SomeLemmas}
\begin{lemma}\label{lemma_linear_1}
Consider a set of $n$ packets ${x}_1,\ldots,{x}_n$, ${x}_i\in\Fbb_q^\pktlen$, 
where ${x}_i\sim\Uni{\mathbb{F}_q^\pktlen}$ and  all the packets
${x}_i$ are independent from each other. Assume that Eve has overheard
$n_\eve$ of these packets. Call the packets Eve has ${w}_1,\ldots,{w}_{n_\eve}$. Then 
it is possible to create $n' = n-n_\eve$ linear combinations of the ${x}_1,\ldots,{x}_n$ 
packets over the finite field $\mathbb{F}_q$, 
say ${y}_1,\ldots,{y}_{n'}$, in polynomial time, so that these $n'$ new $y$-packets are secure from Eve, \ie,
\begin{equation}
I({y}_1,\ldots,{y}_{n'} ; {w}_1,\ldots,{w}_{n_\eve}) = 0.
\end{equation}
The same result holds with high probability (of order $1-O(q^{-1})$)
if the linear combinations are selected uniformly at random  over  $\Fbb_q$.
\end{lemma}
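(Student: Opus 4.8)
The plan is to reduce the whole construction to the invertibility of a single square matrix over $\Fbb_q$. First I would package Eve's observations and the new packets as linear maps of the same input. Writing $x=(x_1,\dots,x_n)$, Eve's packets are $w=\Bs{B}x$, where $\Bs{B}\in\Fbb_q^{n_\eve\times n}$ consists of the $n_\eve$ standard-basis rows indexing the packets she overheard, and the new packets are $y=\Bs{A}x$ for a coefficient matrix $\Bs{A}\in\Fbb_q^{n'\times n}$ with $n'=n-n_\eve$ still to be chosen. Stacking these gives a square matrix $\Bs{M}=\begin{pmatrix} \Bs{A}\\ \Bs{B}\end{pmatrix}\in\Fbb_q^{n\times n}$, and the entire lemma reduces to choosing $\Bs{A}$ so that $\Bs{M}$ is nonsingular.

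The key step is to show that nonsingularity of $\Bs{M}$ already forces perfect secrecy. Since the $x_i$ are i.i.d.\ uniform, the tuple $(x_1,\dots,x_n)$ is uniform on $(\Fbb_q^\pktlen)^n$; an $\Fbb_q$-invertible $\Bs{M}$ induces a bijection of $(\Fbb_q^\pktlen)^n$ (it is $\Bs{M}\otimes\Bs{I}_\pktlen$ after identifying the tuple with $\Fbb_q^{n\pktlen}$) and therefore maps the uniform distribution to itself. Hence $(y,w)=\Bs{M}x$ is uniform on $(\Fbb_q^\pktlen)^{n'}\times(\Fbb_q^\pktlen)^{n_\eve}$, and a uniform distribution on a product space is the product of its uniform marginals, so $y$ and $w$ are independent and $I(y_1,\dots,y_{n'};w_1,\dots,w_{n_\eve})=0$. (Equivalently, one may run the argument one packet-coordinate at a time, using that the $\pktlen$ coordinates are independent; this is exactly why the single scalar matrix $\Bs{M}$ governs the vector-valued case.) This viewpoint also makes clear that publicly revealing the coefficients of $\Bs{A}$ leaks nothing, since given $w$ the variable $y$ remains uniform.

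It then remains to exhibit a good $\Bs{A}$, and I would treat the two claims separately. Because the rows of $\Bs{B}$ are standard-basis vectors, $\Bs{M}$ is nonsingular precisely when the $n'\times n'$ submatrix of $\Bs{A}$ formed by the columns \emph{not} seen by Eve is nonsingular. For the deterministic claim I would take $\Bs{A}$ to be the generator matrix of an $[n,n']$ MDS code (a Reed--Solomon / Vandermonde matrix, constructible in polynomial time whenever $q\ge n$), for which every $n'\times n'$ submatrix is nonsingular; this guarantees the required condition simultaneously for \emph{every} possible set of $n_\eve$ columns, which is essential since Alice does not know which packets Eve received. For the random claim I would instead draw the entries of $\Bs{A}$ i.i.d.\ uniformly over $\Fbb_q$; for a fixed Eve set the relevant complement submatrix is then a uniform $n'\times n'$ matrix, which is nonsingular with probability $\prod_{i=1}^{n'}(1-q^{-i})\ge 1-\sum_{i\ge 1}q^{-i}=1-\tfrac{1}{q-1}=1-O(q^{-1})$, giving the stated high-probability guarantee.

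I expect the main obstacle to be conceptual rather than computational: isolating the single algebraic condition (nonsingularity of $\Bs{M}$, equivalently of the complement submatrix of $\Bs{A}$) that is at once sufficient for secrecy and achievable without knowing Eve's erasure pattern. Once that condition is in hand, the secrecy implication is a one-line uniform-measure argument and the probability bound is the standard count of nonsingular matrices over $\Fbb_q$; the only mild care needed is the MDS existence constraint $q\ge n$, which can be met by passing to an extension field or, when $q$ is small, by falling back on the random construction.
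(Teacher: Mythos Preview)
Your proposal is correct and follows essentially the same approach as the paper: both reduce secrecy to full rank of the stacked matrix $\begin{pmatrix}\Bs{A}\\ \Bs{A}_\eve\end{pmatrix}$, invoke an MDS (Reed--Solomon) generator for the deterministic construction, and use the standard count of nonsingular matrices over $\Fbb_q$ for the random claim. The only cosmetic differences are that you argue independence via the bijection/uniform-measure viewpoint while the paper computes $H(\Bs{Y}\mid\Bs{W})$ directly, and you phrase the MDS property as ``every $n'\times n'$ submatrix is nonsingular'' whereas the paper uses the equivalent minimum-distance formulation (codewords have weight $\ge n_\eve+1$, vectors in Eve's row span have weight $\le n_\eve$).
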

\begin{proof}
Let $\Bs{X}$ be an $n\times L$ matrix that has as rows the packets ${x}_1,\ldots,{x}_n$. 
Similarly, construct matrices $\Bs{Y}$ and $\Bs{W}$ that have as rows the packets 
${y}_1,\ldots,{y}_{n'}$ and ${w}_1,\ldots,{w}_{n_\eve}$.

Note that because the packets ${w}_1,\ldots,{w}_{n_\eve}$ are by definition 
a subset of the packets ${x}_1,\ldots,{x}_n$, we can write  $\Bs{W}=\Bs{A}_\eve \Bs{X}$,
with $\Bs{A}_\eve\in\Fbb_q^{n_\eve\times n}$ that has zeros and ones as elements.
We can also write the ${y}$-packets as linear combinations of the ${x}$-packets 
over the finite field $\Fbb_q$. We will then have that $\Bs{Y}=\Bs{A} \Bs{X}$, where 
$\Bs{A}\in\Fbb_q^{(n-n_\eve)\times n}$ is the matrix we are interested in designing. 
Thus we can write
\[
\begin{bmatrix} \Bs{Y}\\ \Bs{W}  \end{bmatrix} =
\begin{bmatrix} \Bs{A}\\ \Bs{A}_\eve \end{bmatrix} \Bs{X}.
\]
We now proceed by expanding $H(\Bs{Y}|\Bs{W})$. We have
\begin{align*}
H(\Bs{Y}| \Bs{W}) &= H(\Bs{Y},\Bs{W})-H(\Bs{W}) \nonumber\\
 &= \big[ \rank\left( \Bs{B} \right) - \rank(\Bs{A}_\eve) \big] \pktlen\log{q} \nonumber\\
 &= \big[ \rank\left( \Bs{B} \right)  - n_\eve\big] \pktlen\log{q},
\end{align*}
where $\Bs{B}= \left[ \begin{smallmatrix} \Bs{A}\\ \Bs{A}_\eve \end{smallmatrix} \right]$ and $\pktlen$ is the length of each packet ${x}_i$.
Now the only way that we have $H(\Bs{Y}|\Bs{W})=H(\Bs{Y})$ is that $\Bs{B}$
becomes a full rank matrix.

Using coding theory, we will construct such a matrix $\Bs{B}$, {\em without}
knowing $\Bs{A}_\eve$.  All we know is that in each row of $\Bs{A}_\eve$ there is only
one ``$1$'' and the remaining elements are zero; so all of the vectors in
the row span of $\Bs{A}_\eve$ have Hamming weight
less than or equal to $n_\eve$. Now, if we choose $\Bs{A}$ to be a generator
matrix of an maximum distance separable (MDS) linear code
with parameters $[n,n-n_\eve,n_\eve+1]_q$ then each codeword has Hamming weight
larger than or equal to $n_\eve+1$  (\eg, see \cite{macwilliams_theory_1978}). So the row span of $\Bs{A}$
and $\Bs{A}_\eve$ are disjoint (except for the zero vector) and the
matrix $\Bs{B}$ becomes full-rank for all of matrices $\Bs{A}_\eve$ 
that have the aforementioned structure.
For example, we may select to use a generator 
matrix of a Reed-Solomon code (\eg, see \cite{macwilliams_theory_1978}), which is an MDS code, over a field of size $q=n+1$.

To prove the second assertion of the lemma, we note that creating vectors ${y}_i$ uniformly at random is equivalent to selecting the elements of matrix $\Bs{A}$ independently and uniformly at random from the field $\mathbb{F}_q$.
In this case we can write
\begin{align}
\Prob{\text{$\Bs{B}$ is full-rank}} &= \frac{(q^n-q^{n_\eve})\cdots(q^n-q^{n-1})}{q^{n(n-n_\eve)}} \nonumber\\
&= \left(1-q^{-(n-n_\eve)}\right)\cdots(1-q^{-1}) \nonumber\\
&= 1-O(q^{-1}),
\end{align}
which goes to $1$ as $q$ increases.
\end{proof}

\begin{lemma}\label{lemma_linear_3}
Consider packets ${y}_1,\ldots,{y}_h$, $y_i\in\mathbb{F}_q^L$, and assume that each one of $m-1$ receivers (apart from $\ter_0$) has observed a different subset of these packets, \ie, the $i$th receiver observes a subset of size $h_i$ packets. Then, we can find $h-l$ (where $l=\min_i h_i$) linear combinations of the ${y}$-packets, say ${z}_1,\ldots,{z}_{h-l}$ such that, each receiver can use its observations (of $y$-packets) and the ${z}$-packets to recover all of the ${y}$-packets.
\end{lemma}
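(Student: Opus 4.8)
The plan is to recast the reconciliation task as a linear-algebraic recoverability problem and then solve it uniformly over all receivers by means of an MDS code, exactly as was done for Lemma~\ref{lemma_linear_1}. Collect the $y$-packets as the rows of a matrix $\Bs{Y}\in\Fbb_q^{h\times L}$, and for each receiver $i\in[1:m-1]$ let $\set{S}_i\subseteq[1:h]$, with $|\set{S}_i|=h_i$, be the index set of the packets it has observed. I would have Alice broadcast $z$-packets of the form $\Bs{Z}=\Bs{G}\Bs{Y}$, where $\Bs{G}\in\Fbb_q^{(h-l)\times h}$ is a coefficient matrix to be designed and whose entries are publicly announced. Writing $\Bs{G}_{\set{S}_i}$ and $\Bs{G}_{\overline{\set{S}_i}}$ for the submatrices of $\Bs{G}$ formed by the columns indexed by $\set{S}_i$ and by its complement, receiver $i$ knows $\Bs{Y}_{\set{S}_i}$ and $\Bs{Z}$, so after subtracting the known contribution it is left with the linear system $\Bs{G}_{\overline{\set{S}_i}}\,\Bs{Y}_{\overline{\set{S}_i}} = \Bs{Z}-\Bs{G}_{\set{S}_i}\Bs{Y}_{\set{S}_i}$, whose right-hand side is known to it. Hence receiver $i$ recovers every missing packet if and only if the $(h-l)\times(h-h_i)$ matrix $\Bs{G}_{\overline{\set{S}_i}}$ has full column rank $h-h_i$.

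The key step is to exhibit a \emph{single} matrix $\Bs{G}$ that makes $\Bs{G}_{\overline{\set{S}_i}}$ full column rank for all $i$ at once. Since $h_i\ge l=\min_j h_j$, each complement $\overline{\set{S}_i}$ has at most $h-l$ elements, so it suffices that every set of at most $h-l$ columns of $\Bs{G}$ be linearly independent. I would therefore take $\Bs{G}$ to be the generator matrix of an $[h,\,h-l,\,l+1]_q$ MDS code (for instance a Reed--Solomon code over a field of size $q\ge h$): by the MDS property every $h-l$ of its columns are linearly independent, and hence so is every subset of at most $h-l$ columns. Applying this to the column set $\overline{\set{S}_i}$, of size $h-h_i\le h-l$, shows $\Bs{G}_{\overline{\set{S}_i}}$ has full column rank for every receiver simultaneously, so all receivers reconstruct the full set of $y$-packets from only the $h-l$ broadcast $z$-packets. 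The receiver attaining $h_i=l$ is missing exactly $h-l$ packets, which also explains why $h-l$ combinations are needed and why this count is tight.

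The main obstacle is precisely this simultaneity: one must guarantee recoverability for all receivers with a \emph{fixed} $\Bs{G}$ rather than tailoring a different combination matrix to each $\set{S}_i$, and the uniform MDS guarantee is what delivers this. A secondary point to verify is the field-size condition needed for the MDS (Reed--Solomon) code to exist, which is satisfied by choosing $q$ large enough, consistent with the assumption already made elsewhere in the scheme.
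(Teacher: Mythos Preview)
Your argument is correct. The paper does not give a proof at all: it simply notes that this is a standard network-coding (single-source multicast) problem and cites the literature for polynomial-time code constructions. Your MDS/Reed--Solomon construction is exactly one of those standard constructions, so you have made explicit what the paper leaves as a reference; the field-size requirement $q\ge h$ (or $q\ge h+1$ for a Reed--Solomon code of length $h$) is a routine side condition and is consistent with the large-$q$ assumptions used elsewhere in the scheme.
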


\begin{proof}
This is a standard problem formulation in the NC literature, and any of the standard polynomial-time approaches for network code design can be used (\eg, see \cite{fragouli_network_2007}).
\end{proof}

\begin{lemma}\label{lemma_linear_2}
Consider a set of $h$ packets ${y}_1,\ldots,{y}_h$ where ${y}_i\sim\Uni{\mathbb{F}_q^\pktlen}$ and assume that an eavesdropper Eve has overheard $h-l$ linear combinations of these packets. Call the packets Eve has ${z}_1,\ldots,{z}_{h-l}$. Then 
it is possible to create $l$ linear combinations of the  ${y}_1,\ldots,{y}_h$ packets, say ${k}_1,\ldots,{k}_l$, in polynomial time, so that these are secure from Eve, \ie,
\[
I({k}_1,\ldots,{k}_l ; {z}_1,\ldots,{z}_{h-l}) = 0.
\]
The same result holds with high probability (probability of order $1-O(q^{-1})$) if the $l$ packets ${k}_i$ are created uniformly at random over  $\mathbb{F}_q$.
\end{lemma}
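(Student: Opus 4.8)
The plan is to mirror the proof of Lemma~\ref{lemma_linear_1}, reducing the secrecy requirement to a rank condition on a combined coefficient matrix. First I would introduce matrix notation: let $\Bs{Y}\in\Fbb_q^{h\times L}$ have rows ${y}_1,\ldots,{y}_h$, let $\Bs{Z}=\Bs{A}_\eve\Bs{Y}$ with $\Bs{A}_\eve\in\Fbb_q^{(h-l)\times h}$ record Eve's $h-l$ overheard combinations, and let $\Bs{K}=\Bs{A}\Bs{Y}$ with $\Bs{A}\in\Fbb_q^{l\times h}$ be the $k$-packets to be designed. Writing $\Bs{B}=\left[\begin{smallmatrix}\Bs{A}\\ \Bs{A}_\eve\end{smallmatrix}\right]$, the objective is to choose $\Bs{A}$ so that $I(\Bs{K};\Bs{Z})=0$.

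Since the ${y}_i$ are independent and uniform on $\Fbb_q^L$, each column of $\Bs{Y}$ is uniform on $\Fbb_q^h$ and the columns are independent, so the entropy of any set of $\Fbb_q$-linear combinations of the ${y}_i$ equals (the rank of its coefficient matrix)$\,\times L\log q$. Hence
\[
H(\Bs{K}\mid\Bs{Z})=H(\Bs{K},\Bs{Z})-H(\Bs{Z})=\big[\rank(\Bs{B})-\rank(\Bs{A}_\eve)\big]L\log q,
\]
while $H(\Bs{K})=\rank(\Bs{A})L\log q$. The secrecy condition $I(\Bs{K};\Bs{Z})=H(\Bs{K})-H(\Bs{K}\mid\Bs{Z})=0$ is therefore equivalent to $\rank(\Bs{B})=\rank(\Bs{A})+\rank(\Bs{A}_\eve)$; taking $\Bs{A}$ of full row rank $l$ and using $\rank(\Bs{A}_\eve)=h-l$, this reduces to requiring that $\Bs{B}$ be an invertible $h\times h$ matrix, i.e. the row spaces of $\Bs{A}$ and $\Bs{A}_\eve$ intersect only in $\vec{0}$ and jointly span $\Fbb_q^h$.

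The key difference from Lemma~\ref{lemma_linear_1} is that here $\Bs{A}_\eve$ is \emph{known} to Alice, since she herself generated and publicly announced the $z$-packets; consequently no MDS code is needed. For the deterministic part I would complete the $h-l$ independent rows of $\Bs{A}_\eve$ to a basis of $\Fbb_q^h$ by Gaussian elimination and take the $l$ adjoined basis vectors as the rows of $\Bs{A}$; this runs in polynomial time and makes $\Bs{B}$ full rank by construction. For the random part, choosing the entries of $\Bs{A}$ i.i.d.\ uniform on $\Fbb_q$ and appending its rows one at a time to the fixed $(h-l)$-dimensional row space of $\Bs{A}_\eve$, the $j$th new row falls outside the current span (of dimension $h-l+j-1$) with probability $1-q^{-(l-j+1)}$, giving
\[
\Prob{\Bs{B}\text{ is full rank}}=\prod_{i=1}^{l}\big(1-q^{-i}\big)=1-O(q^{-1}).
\]
The argument is largely routine once this rank reduction is established; the only points requiring care are verifying the entropy-equals-rank identity (which relies on the independence and uniformity of the ${y}_i$) and noting that if $\rank(\Bs{A}_\eve)<h-l$ the secrecy condition only becomes easier to satisfy.
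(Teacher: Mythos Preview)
Your proposal is correct and follows essentially the same approach as the paper: introduce the stacked coefficient matrix, reduce the secrecy condition to a full-rank requirement, and complete the known rows of $\Bs{A}_\eve$ to a basis (the paper phrases this as Gram--Schmidt, you as Gaussian elimination). Your write-up is actually more detailed than the paper's, which omits both the explicit entropy--rank computation and the random-case probability calculation you supply.
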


\begin{proof}
Similar to the proof of Lemma~\ref{lemma_linear_1}, let $\Bs{Y}$, $\Bs{Z}$ and $\Bs{K}$ be  
matrices  that have as rows the packets ${y}_1,\ldots,{y}_h$, ${z}_1,\ldots,{z}_{h-l}$ and ${k}_1,\ldots,{k}_l$. We can then write
\begin{equation*}
\begin{bmatrix} \Bs{K}\\ \Bs{Z}  \end{bmatrix} =
\begin{bmatrix} \Bs{A}_K\\ \Bs{A}_Z \end{bmatrix} \Bs{Y},
\end{equation*}
where $\Bs{A}_Z$ is a given known matrix, since we know the transmitted linear combinations, and we seek a matrix $\Bs{A}_K$ such that, the matrix $\left[\begin{smallmatrix} \Bs{A}_K\\ \Bs{A}_Z \end{smallmatrix}\right]$ is full rank. Equivalently, we seek vectors ${k}_1,\ldots,{k}_l$ that together with  ${z}_1,\ldots,{z}_{h-l}$ form a basis; we can do
this using any of standard methods, such as Gram-Schmidt orthogonalization.
\end{proof}

\begin{lemma}\label{lem:ConcentrationLemma}
The value of the parameter $l$ in Theorem~\ref{thm:AchvAecrecyRate-ErasureChnl}, which is defined to be $l=\min_i h_i$ converges exponentially fast in $n$ to its expected value.
\end{lemma}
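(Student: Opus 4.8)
The plan is to reduce $l = \min_{i\in[1:m-1]} h_i$ to a minimum of independent binomials and apply Chernoff bounds to each tail separately. First I would note that, although the subset-counts $n_{\set S}$ are correlated, the quantity $h_i$ telescopes: since $\sum_{\set S \ni i} n_{\set S}$ is just the total number of $x$-packets received by terminal $i$, we have $h_i = \delta_\eve N_i$ with $N_i \triangleq |\set I_i|$. Writing $N_i = \sum_{j=1}^{n}\1_{\{j\in\set I_i\}}$ as a sum of $n$ independent $\mathrm{Bernoulli}(1-\delta)$ indicators and using that the erasures are independent across packets and across channels, $N_1,\dots,N_{m-1}$ are mutually independent with $N_i\sim\mathrm{Binomial}(n,1-\delta)$; in particular $\Expc{h_i}=(1-\delta)\delta_\eve n=\mu n$, matching the stated mean of $\overline{l}$.

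For the lower tail I would union-bound, since $\overline{l}$ is small exactly when some $h_i$ is small:
\begin{equation*}
\Prob{\overline{l}\le\mu-\gamma}\le\sum_{i=1}^{m-1}\Prob{N_i\le(1-\gamma/\mu)\Expc{N_i}}.
\end{equation*}
Applying the multiplicative Chernoff bound $\Prob{N_i\le(1-\eta)\Expc{N_i}}\le\exp(-\eta^2\Expc{N_i}/2)$ with $\eta=\gamma/\mu\le1$ (recall $0<\gamma\le\mu$) bounds each summand exponentially, giving a bound of the form $m\exp(-c_1\gamma^2 n)$ with $c_1>0$ depending only on $\delta,\delta_\eve$.

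For the upper tail I would instead use independence, since $\overline{l}$ is large only when every $h_i$ is large:
\begin{equation*}
\Prob{\overline{l}\ge\mu+\gamma}=\prod_{i=1}^{m-1}\Prob{N_i\ge(1+\gamma/\mu)\Expc{N_i}}.
\end{equation*}
Bounding each factor by $\Prob{N_i\ge(1+\eta)\Expc{N_i}}\le\exp(-\eta^2\Expc{N_i}/3)$ and multiplying the $m-1$ factors yields $\exp(-c_2(m-1)\gamma^2 n)$. Together the two estimates show $\overline{l}\in(\mu-\gamma,\mu+\gamma)$ except with probability exponentially small in $n$, which is precisely the claimed concentration and supplies the quantities $a$ and $b$ used in the proof of Theorem~\ref{thm:AchvAecrecyRate-ErasureChnl}.

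The step needing the most care is the reduction in the first paragraph: one must verify that $h_i$ really collapses to a scaled binomial despite being built from the correlated $n_{\set S}$, and that cross-channel independence makes $N_1,\dots,N_{m-1}$ independent — this is exactly what licenses the \emph{product} form of the upper-tail bound (hence the factor $m$ appearing in the exponent there, as opposed to as a prefactor in the lower-tail union bound). The Chernoff estimates are routine; the only minor technicality is that $\delta_\eve n_{\set S}$ and $\delta_\eve N_i$ should be rounded to integers, which perturbs the counts by $O(1)$ per subset and therefore does not affect the exponential rate.
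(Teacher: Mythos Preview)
Your argument follows the same skeleton as the paper's proof: write $\overline{h}_i$ as a normalized sum of $n$ iid Bernoulli indicators, apply the multiplicative Chernoff bound to each $\overline{h}_i$, use a union bound over $i$ for the lower tail of $\overline{l}=\min_i\overline{h}_i$, and use independence across $i$ to take a product for the upper tail. The paper obtains precisely the bounds $m\exp\!\big(-\tfrac{\gamma^2}{2\mu}n\big)$ and $\exp\!\big(-\tfrac{m\gamma^2}{3\mu}n\big)$ that your $a$ and $b$ are meant to be.

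The one substantive difference is your choice of indicators. The paper sets $\eta_j^{(i)}=\1\{\text{packet }j\text{ received by }\ter_i\text{ but not by Eve}\}$ and writes $\overline{h}_i=\tfrac{1}{n}\sum_j\eta_j^{(i)}$ directly, whereas you first collapse $h_i=\delta_\eve\sum_{\set S\ni i}n_{\set S}=\delta_\eve\,|\set I_i|$ and then work with the receive-only indicators $\1\{j\in\set I_i\}$. Your route buys a cleaner justification of the product step: the counts $N_i=|\set I_i|$ depend only on the $m-1$ independent legitimate erasure channels and are therefore genuinely mutually independent, so $\Prob{\forall i:\overline{h}_i\ge\mu+\gamma}=\prod_i\Prob{\overline{h}_i\ge\mu+\gamma}$ is immediate. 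The paper's $\eta_j^{(i)}$, by contrast, all share Eve's erasure pattern across $i$, so the same factorization there tacitly needs an extra word. Your remark about the $O(1)$-per-subset rounding of $\delta_\eve n_{\set S}$ is correct and harmless for fixed $m$.
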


\begin{proof}
Let us consider the {\em random} variables $h$, $h_i$, and $l$ defined in \S\ref{sec:GrpScrtKey-ErsrChnl-AchvScheme}. 
For convenience, we work with the normalized random variables $\overline{h}\triangleq h/n$, $\overline{h}_{i}\triangleq h_{i}/n$, and $\overline{l}\triangleq l/n$.  Let us also define the random variables $\eta_j^{(i)}$, $i\in [1:m-1]$ and $j\in [1:n]$ as follows
\begin{equation*}
\eta_j^{(i)} = \left\{\begin{array}{ll} 1 & \text{if the $j$th $x$-packet is received}\\ & \text{by $\ter_i$ but not by Eve},\\ 
0 & \text{otherwise}. \end{array} \right.
\end{equation*}
Then we can write $\overline{h}_i=\frac{1}{n}\sum_{j=1}^n \eta_j^{(i)}$ and we have $\mu=\mu_i\triangleq\Expc{\overline{h}_i}=(1-\delta)\delta_E$. As defined before, we have also $\overline{l}=\min_i \overline{h}_i$.

To bound $\overline{l}$, observe that for some small $\gamma$, $0<\gamma\le \mu$, we can write
\begin{align*}
\Prob{\overline{l}\ge \mu+\gamma} &= \Prob{\overline{h}_i\ge \mu+\gamma : \forall i} \nonumber\\
&= \Prob{\overline{h}_1\ge \mu+\gamma}^{m} \nonumber\\
&\le \exp\left(-\frac{m\gamma^2}{3\mu} n\right),
\end{align*}
where in the last inequality we use Chernoff bound \cite[Chapter~4]{mitzenmacher_probability_2005}.
On the other hand we can also write for $0<\gamma\le \mu$
\begin{align*}
\Prob{\overline{l}\le \mu-\gamma} &\le m\Prob{\overline{h}_1\le \mu-\gamma} \nonumber\\
& \le m \exp\left(-\frac{\gamma^2}{2\mu} n\right),
\end{align*}
so we are done.
\end{proof}

\section{Some Proofs}\label{apn:SomeProofs}

\begin{proof}[Proof of Theorem~\ref{thm:SecrecyUpBound-CsNa08-forIndpChnl}]
First, notice that we can write
\begin{align*}
H(X_{[0:m-1]}|X_\eve) &\stackrel{\text{(a)}}{=} \sum_{j=0}^{m-1} H(X_j|X_\eve,X_{[0:j-1]}) \nonumber\\
&\stackrel{\text{(b)}}{=} H(X_0|X_\eve) + \sum_{j=1}^{m-1} H(X_j|X_0),
\end{align*}
where (a) follows from the chain rule and (b) follows from the
independence of the channels.
Similarly, for every $B\subsetneq [0:m-1]$ we can expand $H(X_B|X_{B^c},X_\eve)$ 
as follows
\begin{align*}
H(X_B|X_{B^c},X_\eve) = H(X_0|X_{B^c},X_\eve) + \sum_{j\in B} H(X_j|X_0).
\end{align*}
Now, from Theorem~\ref{thm:SecrecyUpBound-CsNa08}, we know that
for every $\lambda\in\Lambda([0:m-1])$ there exists a distribution
$P_{X_0}$ such that $C_s$ is upper bounded by
{\allowdisplaybreaks[4]
\begin{align*}
C_s &\le H(X_{[0:m-1]}|X_\eve) -\sum_{B\subsetneq [0:m-1]} \lambda_B H(X_B|X_{B^c},X_\eve) \nonumber\\
  &= H(X_0|X_\eve) + \sum_{j=1}^{m-1} H(X_j|X_0)  - \sum_{B\subsetneq [0:m-1]} \lambda_B \left[ H(X_0|X_{B^c},X_\eve) + \sum_{j\in B} H(X_j|X_0) \right] \nonumber\\
  &\stackrel{\text{(a)}}{=} H(X_0|X_\eve) - \sum_{B\subsetneq [0:m-1],\ 0\in B} \lambda_B H(X_0|X_{B^c},X_\eve)  + \sum_{j=1}^{m-1} H(X_j|X_0) - \sum_{j=1}^{m-1} H(X_j|X_0) \sum_{B\subsetneq [0:m-1],\ j\in B} \lambda_B \nonumber\\
  &\stackrel{\text{(b)}}{=} H(X_0|X_\eve) - \sum_{B\subsetneq [0:m-1],\ 0\in B} \lambda_B H(X_0|X_{B^c},X_\eve) \nonumber\\
  &\stackrel{\text{(c)}}{=} \sum_{B\subsetneq [0:m-1],\ 0\in B} \lambda_B I(X_0;X_{B^c}|X_\eve),
\end{align*}}\hspace{-4pt}
where in (a) we have changed the order of summation over $j$ and $B$,
and (b) and (c) follows from \eqref{eq:LambdaPartitionDef}.
So up to here we have the following upper bound for the secret key generation rate
\begin{equation*}
C_s \le \max_{P_{X_0}} \min_{\lambda\in\Lambda([0:m-1])} \sum_{B\subsetneq [0:m-1],\ 0\in B} \lambda_B I(X_0;X_{B^c}|X_\eve).
\end{equation*}
Now, let us define $i=\arg\min_{j\in[1:m]} I(X_0;X_j|X_\eve)$. Then, notice 
that $\lambda_B=\lambda_{B^c}=1$ where $B^c=\{i\}$ is a valid choice
according to the condition of Theorem~\ref{thm:SecrecyUpBound-CsNa08},
\ie, they satisfy \eqref{eq:LambdaPartitionDef}.
Hence, for this choice of $\lambda$, the upper bound for $C_s$ is simplified to
\begin{align*}
C_s &\le \max_{P_{X_0}} \min_{j\in[1:m-1]} I(X_0;X_j|X_\eve) \nonumber\\
&\le \min_{j\in[1:m-1]} \max_{P_{X_0}} I(X_0;X_j|X_\eve).
\end{align*} 
\end{proof}

\begin{proof}[Proof of Lemma~\ref{lem:PowerAllocation_Solution_All_I_Positive}]
From Case~\ref{cond:PowerAloc_KKT_LinearCase_Conds}, we know that $r^{(1)}_k > 0$ only if $\alpha_k>\beta_k$ and ${h}_{k+1}\beta_k > {h}_{k-1}\alpha_k$. So as mentioned before $\forall k\in[1:s-1]$ we have $F^{(1)}_k(x)>0$ for $x>r^{(1)}_k$ and $F^{(1)}_k(x)<0$ for $x < r^{(1)}_k$.

Now, it can be easily checked that the solution $I^*_k= r^{(1)}_k$ satisfies 
the set of conditions stated in \eqref{eq:PowerAlocation-KKTNecessaryCond} with
$\lambda^*_k=0$. On the other hand, because $F^{(1)}_k(x)>0$ for $x> r^{(1)}_k$ and $F^{(1)}_k(x)<0$ for $x< r^{(1)}_k$, we show that every deviation of $I^*_k$ from the $r^{(1)}_k$ results in a violation of KKT conditions \eqref{eq:PowerAlocation-KKTNecessaryCond}. To this end, we proceed as follows.

Let us fix $k$. If $I^*_k > r^{(1)}_k$ then $F^{(1)}_k(I^*_k)>0$ and because we should have $F^{(1)}_k(I^*_k)+(\lambda^*_k-\lambda^*_{k+1})=0$, we can conclude that $\lambda^*_{k+1}>0$. So by the complementary slackness condition given in \eqref{eq:PowerAlocation-KKTNecessaryCond}, we should have
$I^*_k=I^*_{k+1}$. Now, two cases may happen. First, if $I^*_{k+1}< r^{(1)}_k$ 
that results in a contradiction because we have already assumed that $I^*_k> r^{(1)}_k$ and we have also $I^*_k=I^*_{k+1}$. Secondly, if $I^*_{k+1} > r^{(1)}_k > r^{(1)}_{k+1}$, then similar to the above argument we can show that $I^*_{k+1}=I^*_{k+2}$. Then we either encounter a contradiction in this step or have to continue. Finally, if we did not have any contradiction in these steps we would have $I^*_k=I^*_{k+1}=\cdots=I^*_s=0$. Now, this is a contradiction because we had assumed $I^*_k> r^{(1)}_k > r^{(1)}_s=0$. 

Similarly, it is possible to argue that for the case $I^*_k< r^{(1)}_k$ we will also encounter a contradiction.
So the \emph{unique} solution to the set of conditions 
\eqref{eq:PowerAlocation-KKTNecessaryCond} is given by $I^*_k=r^{(1)}_k$ and $\lambda^*_k=0$
and we are done.
\end{proof}


\bibliographystyle{IEEEtran}
\bibliography{MyLibrary}

\end{document}